\documentclass{article}

\usepackage{arxiv}

\usepackage[utf8]{inputenc} 
\usepackage[T1]{fontenc}    
\usepackage{hyperref}       
\usepackage{url}            
\usepackage{booktabs}       
\usepackage{amsfonts}       
\usepackage{nicefrac}       
\usepackage{microtype}      
\usepackage{lipsum}		
\usepackage{graphicx}
\usepackage{natbib}
\usepackage{doi}

\usepackage{fluff} 
\usepackage[ruled]{algorithm2e}
\usepackage{enumitem}
\usepackage{cleveref}

\usepackage{fluff} 
\usepackage{enumitem}
\usepackage{cleveref}

\newtheorem{problem}{\bf Open Problem}

\newcommand{\ahzr}{\textit{RHZ}}

\renewcommand{\epsilon}{\varepsilon}

\newcommand{\indeg}{\text{in-deg}}
\newcommand{\outdeg}{\text{out-deg}}
\newcommand{\tfor}{\text{ for }}
\newcommand{\gcircuit}{\textsc{Gcircuit}}
\newcommand{\threshold}{\textsc{Threshold}}

\DeclareMathOperator*{\argmax}{arg\,max}

\title{Hardness of approximate Hylland-Zeckhauser equilibria}

\author{Mark Braverman\thanks{Princeton University} \And Jingyi Liu\thanks{Princeton University} \And Eric Xue\thanks{Princeton University} \And Chenghan Zhou\thanks{Stanford University}}

\date{}

\renewcommand{\headeright}{}
\renewcommand{\undertitle}{}

\hypersetup{
pdftitle={Hardness of approximate Hylland-Zeckhauser equilibria},
pdfsubject={cs.GT, cs.CC},
pdfauthor={Mark Braverman, Jingyi Liu, Eric Xue, Chenghan Zhou},
pdfkeywords={Hylland-Zeckhauser equilibrium; approximate equilibria; CEEI; PPAD-hardness; generalized circuits; PCP-for-PPAD; one-sided allocation},
}

\begin{document}
\maketitle

\begin{abstract}
In this paper, we investigate the computational hardness of finding fractional allocations to unit-demand players using competitive equilibria from equal incomes (CEEI), where we allow a small constant error in players' response to market prices (also known as an approximate Hylland-Zeckhauser equilibrium). We show that assuming the  $\mathbf{(\varepsilon,\delta)}$-$\gcircuit$ problem is PPAD-hard (the PCP-for-PPAD conjecture), finding an approximate HZ equilibrium is also PPAD-hard. This result provides additional motivation for trying to prove the PCP-for-PPAD conjecture as a tool for obtaining robust computational hardness results about markets. Further, we introduce a natural restriction on approximate HZ equilibria, where players' bundles may still only be approximately optimal given the prices, but may not contain positive-price items for which the player has zero utility. We show unconditionally that there exists a constant $\epsilon$ such that finding a {\em restricted} $\epsilon$-HZ equilibrium is PPAD-hard.
\end{abstract}

\keywords{Hylland-Zeckhauser equilibrium; approximate equilibria; CEEI; PPAD-hardness; generalized circuits; PCP-for-PPAD; one-sided allocation}


\section{Introduction}

\paragraph{Allocation problems without money.} In this paper we focus on allocating $n$ divisible goods to $n$ unit-demand players. A canonical application is assigning students to schools: each student receives a probability distribution over schools, which is then sampled using a lottery. Each student's utility is their expected utility over the lottery. 

Formally, there are $n$ players and $n$ items. Each player $i$ has utility $u_{ij}\ge 0$ for item $j$. A valid allocation $\{X_{ij}\}_{i\in[n],j\in[n]}$ satisfies capacity constraints and unit-demand constraints:
$$
\forall i,j~X_{ij}\ge0;~~~~~~~\forall j~\sum_{i} X_{ij}=1; ~~~~~~\forall i~\sum_{j} X_{ij}=1
$$
The utility of player $i$ under such an allocation is 
$$
U_i(X)=\sum_j u_{ij}\cdot X_{ij}
$$

In a setting {\em with} money, this allocation problem leads to a unit-demand auction, which is very well understood \citep{Shapley71, Leonard83}. The auction's VCG prices can be easily computed by a linear program. Moreover, there is a natural tatonnement dynamics that converges to those prices \citep{demange1986multi}. 

\paragraph{Pseudo-market-based allocation.}
In the setting {\em without} money, a natural approach is to use a token currency to produce an allocation. Such an allocation is called a competitive equilibrium from equal incomes (CEEI) or a Hylland-Zeckhauser (HZ) equilibrium \citep{HZ1979}. 

Formally, an allocation $\{X_{ij}\}$ is supported by an HZ equilibrium if there are item prices $P_j\ge 0$ such that when faced with these prices and a budget constraint each player picks their bundle $X_i$ to maximize expected utility:

\begin{equation}
    \label{eq:HZ1}
\forall i~X_i \in \argmax_y\left\{ \sum_j u_{ij} \cdot y_j ~:~\sum_j y_j =1 \text{ and }\sum_{j} P_j \cdot y_j\le 1\right\}
\end{equation}

Pseudo-market based approaches are not the only way to solve allocation problems without money. There is a rich literature on combinatorial allocation mechanisms such as random serial dictatorship, top trading cycles, and probabilistic serial \citep{Abdulkadiroglu98, Shapley74, Bogomolnaia01}. The pseudo-market based approach has several important benefits. Firstly, it incorporates cardinal utility information, which is invariably lost when considering ranking-based mechanisms. Secondly, it is not hard to see that any CEEI outcome is guaranteed to be Pareto optimal (PO)\footnote{Provided that each player chooses the cheapest bundle that gives them the same maximum utility} and (ex-ante) envy-free (EF) --- these are simple consequences of market efficiency.  

While an allocation being PO+EF does not generally imply that it is supported by a CEEI, additional ``coalition envy-free" conditions imply CEEI \citep{Varian74}. This means that pseudo-market-based equilibria are a natural notion of fairness and efficiency for allocation problems. Moreover, there is a generic reduction that shows that, computationally, the problem of finding a CEEI equilibrium is not more difficult than the problem of ``only" finding an allocation that is Pareto-optimal and envy free \citep{TV24}. 

Finally, as compared to combinatorial approaches, it is much easier to extend market-based approaches to allocation problems to obtain market-based approaches to much more general coordination problems. 

\paragraph{Computing HZ equilibria.} Perhaps the main reason for pseudo-market-based mechanisms not being more widely adopted is that computing equilibrium prices is computationally difficult. While allocation with money can be solved efficiently using simple demand queries, no such algorithm is known for the pseudo-market setting. 

Existence of HZ equilibria is proved using a fixed-point argument, and the exact problem is naturally formulated as a real-valued fixed-point problem: \cite{vazirani2021computational} showed that exact HZ lies in FIXP and exhibited instances with irrational equilibria. 
For the finite-precision problem, they showed that approximate HZ lies in PPAD, and \cite{CCPY21} proved the matching hardness result for high-precision approximation, showing that computing an $\varepsilon$-approximate HZ equilibrium is PPAD-hard even when $\varepsilon$ is inverse-polynomially small. Hence high-precision approximate HZ is PPAD-complete.

These results rule out efficient exact or near-exact computation under standard complexity assumptions, but they leave open the possibility of coarser approximations. This distinction is important:  when $\varepsilon$ is sufficiently small, errors are negligible, even when accumulated over the entire market. Thus the reduction from a PPAD-hard fixed-point problem to finding an HZ equilibrium is not affected by such errors. In a constant-error approximate HZ equilibrium, agents may make non-negligible mistakes in their responses to prices, and these mistakes can interact with market clearing in ways that are not present in the high-precision regime. This motivates the main question of the paper:

\smallskip
\noindent
{\bf Main Problem.} Does there exist a constant error $\varepsilon>0$ such that computing an $\varepsilon$-HZ equilibrium is PPAD-hard?

\smallskip

Note that -- for now -- we do not give a precise definition of an $\varepsilon$-relaxation notion, as definition details may affect the answer.\footnote{Very recent independent work of ~\cite{yanliu2026approxhz} reports a polynomial-time 1/e-approximation algorithm for HZ under the definition of \cite{vazirani2021computational}. Our first result is complementary: under the same definition, we study the hardness of obtaining sufficiently accurate constant-error approximations and show that PCP-for-PPAD implies PPAD-hardness in this regime. Our second result gives an unconditional hardness theorem for approximate HZ under another natural, restricted definition.}

\subsection{The approximate fixed-point conjecture(s)}

The goal of this section is to explain the different flavors of $\varepsilon$-fixed point conjectures/theorems. There are several versions of approximate fixed point problems. Some of these are known to be PPAD-hard, although not the ones needed to prove hardness results for approximate market (or pseudo-market) equilibria. 

A particularly useful PPAD-complete problem is that of {\em generalized circuits} (or $\gcircuit$). One constructs a circuit with gates and wires connecting to input and output nodes. The circuit is allowed to have loops (thus, gates can be thought of as constraints on the nodes). The goal of the $\gcircuit$ problem is to find a fixed point in $[0,1]^V$, where $V$ is the number of nodes. Note that as long as the gates' functions are continuous, Brouwer's Fixed Point Theorem guarantees that a fixed point exists. To avoid complications with real numbers, one should consider approximate gates. It turns out that this is sufficient for application to computational complexity in game theory. Informally:

\smallskip
\noindent 
{\bf The $\mathbf{\varepsilon}$-$\gcircuit$ problem}: assign a value from $[0,1]$ to every node such that the value of each gate output is within an additive $\varepsilon$ from the value prescribed by the gate computing it. 
\smallskip

$\varepsilon$-$\gcircuit$ was introduced by \cite{CDT09-gcircuit} as an intermediate problem to show $(1/poly(n))$-approximate 2-Nash is PPAD-hard. They showed that $(1/poly(n))$-$\gcircuit$ is PPAD-complete by a reduction from a discrete Brouwer problem\footnote{The $\varepsilon$-$\gcircuit$ problem can be reduced to finding an approximate fixed point of a given continuous map from $[0,1]^n\rightarrow[0,1]^n$, and thus is in PPAD.}. Subsequent work by \cite{Rub18-inapprox} improved this result and showed that $\epsilon$-$\gcircuit$ is PPAD-complete for some constant $\epsilon>0$ even for generalized circuits of fan-out 2. 
\cite{Rub18-inapprox}'s result on $\epsilon$-$\gcircuit$ is used to show constant inapproximability of other problems \citep{SSB17,FRFGZ18,PP21,GHI+22,FRGH+23}. In an effort to nail down the constant $\epsilon$ in the inapproximability results, \cite{DFHM22-Pure-Circuit} introduced the Pure-Circuit problem, which can be thought of as taking $\epsilon\to 1$ for $\epsilon$-$\gcircuit$, and used it to further tighten results on the hardness of computing Nash equilibria. By a reduction from the Pure-Circuit problem,  \cite{DFHM22-Pure-Circuit} showed that $\epsilon$-$\gcircuit$ is PPAD-complete for all $\epsilon < 0.1$.

 The $(\varepsilon,\delta)$-$\gcircuit$ problem was initially proposed by \cite{BPR15-PCP}.
There are some nuances in how exactly the problem should be defined, but the version we will use is a direct relaxation of the $\varepsilon$-$\gcircuit$ problem, where gates have fan-out $\le 2$, and a $\delta$-fraction of gates is allowed to behave arbitrarily:

\smallskip
\noindent 
{\bf The $\mathbf{(\varepsilon,\delta)}$-$\gcircuit$ problem}: assign a value from $[0,1]$ to every node such that {\em for all but a $\delta$-fraction of gates}, the value of the gates' output is within an additive $\varepsilon$ from the value prescribed based on its inputs. 
\smallskip

Since any assignment satisfying all gates of an $\varepsilon$-$\gcircuit$ instance also satisfies all but a $\delta$-fraction of them, $(\varepsilon,\delta)$-$\gcircuit$ reduces to $\varepsilon$-$\gcircuit$ and hence lies in \(\mathrm{PPAD}\). \cite{BPR15-PCP} conjectured that the $(\varepsilon,\delta)$-$\gcircuit$ problem remains PPAD-hard. This is referred to as the PCP-for-PPAD conjecture, since $(\epsilon,\delta)$-$\gcircuit$ can be interpreted as a PCP formulation of the $\epsilon$-$\gcircuit$ problem. They showed that this conjecture implies finding prices and income assignment for the course allocation problem with low market clearing error and near-optimal Gini index is PPAD-complete. Interestingly, the original $\epsilon$-$\gcircuit$ problem was used to show unconditionally that there exists $\beta>0$ such that the closely related solution concept $(\alpha^*,\beta)$-A-CEEI (as defined in \cite{Budish11}) is PPAD-complete where $\alpha^*$ is a known function of input.\footnote{The equilibrium concept $(\alpha,\beta)$-approximate CEEI (or A-CEEI) was first proposed by \cite{Budish11} to solve the allocation problem of indivisible goods without money where each agent is given a fake budget of $B_i$ (which can be unequal) and items are priced so that when each agent chooses their favorite integral bundle of items, market clears with a small clearing error $\alpha$, and the inequality in agents' budgets is bounded by $\beta$. However, this is not to be confused with the approximate CEEI (or approximate HZ equilibrium) defined in our setting, where items are divisible and each agent has unit-demand (alternatively, the allocation can be thought of as bundles of probability shares of different indivisible items). We note that although our main result also connects the conjecture to the computational hardness of an approximate-CEEI (i.e. an approximate HZ), our fractional setting makes the reduction much more difficult as each item can now be allocated to many agents, each getting a tiny fraction and it is harder to reason about prices and allocations in any equilibrium.} 

Our main goal is to tighten the link between (hardness of) approximate market equilibria and well-established conjectures around fixed point computations. 

As a secondary goal, we contribute to establishing the $(\varepsilon,\delta)$-$\gcircuit$ problem as the ``right" problem to reduce from when establishing the hardness of market equilibria. This should encourage more work on settling the complexity of this problem. We expect either resolution to be beneficial for our understanding of the hardness of approximate market equilibria: if the $(\varepsilon,\delta)$-$\gcircuit$ problem turns out to be computationally difficult (thus establishing PCP-for-PPAD), then it could provide a general source of robust reductions---the hardness of exact market equilibria will likely translate into hardness of approximate market equilibria---as it does in the present paper.  On the other hand, a surprising, efficient algorithm for the 
$(\epsilon,\delta)$-$\gcircuit$ problem
would suggest that new algorithms for approximate HZ and related market-equilibrium problems may be possible. 

Recent independent work on Fisher markets gives further evidence that $(\epsilon,\delta)$-$\gcircuit$  is the right fixed-point problem for robust approximate market equilibria. \cite{deligkas2024constantfisher} prove constant inapproximability for Fisher markets with separable piecewise-linear concave utilities when the approximation is in market clearing and buyers still receive exactly optimal bundles. A subsequent work of ~\cite{deligkas2026approxoptimalfisher} studies the more closely related relaxation in which buyers receive approximately optimal bundles, and shows that, assuming the PCP-for-PPAD conjecture, there exist constants $\varepsilon$ and $\delta$ such that computing an approximate market equilibrium where buyers receive $(1-\delta)$-optimal bundles and every good clears with $\varepsilon$ error is PPAD-hard, even for Fisher markets with equal budgets and linear capped utilities.\footnote{We emphasize that our results on the hardness of computing approximate HZ equilibria are not corollaries of the Fisher-market hardness results. Although HZ can be viewed as a CEEI-style market and has linear utilities, it is not an ordinary SPLC Fisher market. In a standard Fisher market, each buyer chooses a utility-maximizing bundle subject only to a budget constraint and nonnegativity. In HZ, each agent chooses a lottery subject simultaneously to a budget constraint and the unit-demand constraint ($\sum_j x_{ij}=1$). Linear capped utilities in Fisher markets can model satiation for individual goods, but they do not impose this cross-good unit-demand constraint. Our reduction therefore works directly with the HZ allocation polytope and the normalized-price structure of HZ, rather than reducing through Fisher markets.} They also prove a converse for a broad class of reducible Fisher markets: PPAD-hardness for constant approximate optimality in that class would imply PCP-for-PPAD.

\subsection{Our results and discussion}

All our results deal with the computational hardness of computing approximate HZ equilibria for allocating goods without money. As in other ``approximate" settings (such as approximate Nash Equilibria), the approximation refers to the extent to which each player best responds to market prices. An allocation $X$ is supported by an approximate $\varepsilon$-HZ equilibrium with prices $P$ if budget and optimality conditions \eqref{eq:HZ1} hold up to $\varepsilon$:
\begin{align}
    &\forall i~\sum_j P_j X_{ij} \le 1+\varepsilon; \nonumber \\
    &\forall i~\sum_{j} u_{ij} \cdot X_{ij}  \geq \max_{y\geq 0}\left\{ \sum_j u_{ij} \cdot y_j ~:~\sum_j y_j =1 \text{ and }\sum_{j} P_j \cdot y_j\le 1\right\}-\varepsilon.
\label{eq:HZ2}
\end{align}

Our first result is on the hardness of computing 
an $\epsilon$-HZ equilibrium. We show that there is a polynomial time reduction from the $(\epsilon,\delta)$-$\gcircuit$ problem to the problem of computing an $\epsilon$-HZ:

\smallskip
\noindent
{\bf Theorem \ref{thm:conj-implies-eps-hz-ppad} (restated).} For any {positive} constants $(\varepsilon,\delta)$, there is a polynomial-time reduction from the $(\epsilon,\delta)$-$\gcircuit$ problem to the problem of computing an $\epsilon'$-HZ for some other constant $\epsilon'$.

Assuming the PCP-for-PPAD conjecture, this implies that computing an approximately CEEI solution for unit-demand allocation is PPAD-hard. Even without this conjecture, it ties the problem of finding approximate equilibria to a well-stated open problem in approximate fixed-point computation.

\smallskip

Our second result aims to prove that $\epsilon$-HZ equilibria are difficult to compute without relying on new conjectures. To this end, we introduce a new notion of an {\bf approximate HZ equilibrium with restriction} (Definition~\ref{def:epsRHZ}). Informally, an allocation is an $\varepsilon$-RHZ if (1) each agent chooses a feasible bundle that maximizes their utility up to an additive $\varepsilon$ {using a relaxed budget of up to $1+\varepsilon$}; and (2) no agent ever chooses any positive-price items for which they have $0$ utility. In other words, in addition to condition \eqref{eq:HZ2} we require that
\begin{equation}
    \label{eq:HZ3}
    \forall i,j~~u_{ij}=0 \land P_j>0 \Rightarrow X_{ij}=0.
\end{equation}
Consider a market where each player only considers a subset of items (for example, a student may only be interested in a school in their neighborhood, or in a course in their major). 
Intuitively, restriction \eqref{eq:HZ3} says that while a player may not respond optimally to a given vector of prices, they will never even consider items outside of their demand set, and thus we need not consider solutions supporting such responses. 

Note that without such a restriction, any solution that removes an $O(\epsilon)$-fraction of the goods from the market can be implemented as an $\epsilon$-HZ solution (by having each player use $\epsilon$-fraction of its budget to buy and burn these items). The restriction aims to curtail such arbitrary ``long-range" effects within the market. 

It turns out that with condition \eqref{eq:HZ3} the problem of finding $\varepsilon$-RHZ becomes PPAD-hard:

\smallskip
\noindent
{\bf Theorem \ref{thm:ppad-constant}.}
    Finding an $\epsilon$-$\ahzr$ with restriction is PPAD-hard for some constant $\epsilon>0$.

\smallskip

Technically, the proof of Theorem~\ref{thm:ppad-constant} is very similar to the hardness proof for polynomially-small $\varepsilon$
in \cite{CCPY21}. (We defer the proof of this theorem to Appendix \ref{sec:hard with restriction}.) Our main contribution in this result is a conceptual one: introducing a natural restriction for a market-based equilibrium that makes constant-approximation PPAD-hard. We believe that such a restriction will be useful in other market-based scenarios where the goal is to allow players to approximately best respond to the market without creating new arbitrary demands. 

\subsection{Open questions}
Our results concern price-supported relaxations of HZ equilibria. A natural next question is whether similar hardness phenomena hold for allocation-only fairness and efficiency guarantees, where the allocation need not be supported by prices. \cite{TV24} show that computing an exact Pareto-optimal and envy-free (PO+EF) allocation is PPAD-hard, via a reduction from inverse-polynomial approximate HZ. They also obtain positive approximation results for coarser notions of approximate envy-freeness.\footnote{\cite{TV24} give a polynomial-time algorithm to compute a $(2+\varepsilon)$-approximately envy-free and (exactly) Pareto-optimal allocation using a multiplicative notion of approximate envy-freeness.} Thus the remaining question is not whether every relaxation of PO+EF is hard, but whether sufficiently fine relaxations remain computationally intractable.

One natural additive relaxation is the following.

\begin{problem}
Is there a constant $\epsilon>0$ for which it is PPAD-hard to find an $\epsilon$-relaxed Pareto Optimal + Envy Free allocation? That is, every player experiences at most $\epsilon$ envy, and there is no Pareto improvement in which everyone is at least as well off, and aggregate welfare increases by at least $\epsilon\cdot n$?
\end{problem}

Note that the problem above is a priori stronger than the hardness of $\epsilon$-HZ for a constant $\epsilon>0$. Therefore, a more reasonable conjecture would be to  look for a reduction from the PCP-for-PPAD problem:

\begin{problem}
Are there constants $\epsilon,\delta,\epsilon'>0$ for which there is a polynomial time reduction from the $\mathbf{(\varepsilon,\delta)}$-$\gcircuit$ problem to the problem of finding an $\epsilon'$-relaxed Pareto Optimal + Envy Free allocation?
\end{problem}

\section{Preliminaries}

A one-sided allocation problem consists of $n$ unit-demand agents and $n$ divisible goods, each of which has unit supply.
We use $u_{ij}\in [0,1]$ to denote agent $i$'s utility for one unit of good $j$.
The HZ scheme is a market-based approach for allocating goods to agents.
More specifically, in an HZ market, each agent is given one unit of ``fake'' currency with no value outside of the market that they can use to buy the goods.
An $\varepsilon$-HZ equilibrium consists of an allocation $x \in \RR^{n \times n}_{\geq 0}$ and item prices $p \in\RR^n_{\geq 0}$ that satisfy the following conditions.
For all agents $i$, we use $x_i$ to denote her allocation.

\begin{definition}[$\epsilon$-HZ, \citep{HZ1979, vazirani2021computational, CCPY21}]
Given $\epsilon>0$, a pair $(x,p)$, where $x\in \RR^{n\times n}_{\geq 0}$ and $p\in \RR^{n}_{\geq 0}$ is an $\epsilon$-approximate HZ equilibrium of an HZ market $M$ if:
\begin{enumerate}
    \item Unit supply: $\sum_{i\in [n]}x_{ij}=1, \forall j$
    \item Unit demand: $\sum_{j\in [n]}x_{ij}=1, \forall i$
    \item Normalized prices: $\min_{j\in [n]}p_j=0$
    \item Budget is approximately 1: $\sum_{j\in [n]}p_j x_{ij}\leq 1+\epsilon, \forall i$
    \item Bundle is approximately optimal: $\sum_{j\in [n]}u_{ij}x_{ij}\geq val_p(i)-\epsilon, \forall i$
\end{enumerate}
where $val_p(i) = \max_{y\geq 0}\left\{ \sum_{j\in [n]} u_{ij} y_j ~:~\sum_{j\in [n]} y_j =1 \text{ and }\sum_{j\in [n]} p_j  y_j\le 1\right\}$ is the value of the best affordable bundle for agent $i$ under unit demand.
\end{definition}
\begin{remark*}
   \cite{CCPY21} observed that the normalization of prices is necessary since HZ equilibria are invariant to the following transformation of the prices: let $(x,p)$ be an HZ equilibrium. Then for any $0<r\leq \min\{1/(1-p_j)\mid p_j<1\}$, we can rescale the prices $p$ to $p'$ where $p'_j-1=r(p_j-1)$ and $(x,p')$ is also an HZ equilibrium. If we do not normalize the prices, then condition 4 can be trivially satisfied by transforming the prices so that all prices are close to 1. With this observation, they require the normalization in condition (3) that the minimum item price is 0. We also follow this convention. 
\end{remark*}

Next we introduce the generalized circuits problem and its relaxation to $(\epsilon,\delta)$-\gcircuit. 

\begin{definition}
[Generalized circuits, \citep{CDT09-gcircuit}] 
A \textit{generalized circuit} $S$ is a pair $(V, \mathcal{T})$, where $V$ is a set of nodes and $\mathcal{T}$ is a collection of gates. Every gate $T \in \mathcal{T}$ is a 5-tuple $T = G(\zeta\mid v_1,v_2\mid v)$, in which $G\in\{G_\zeta, G_{\times\zeta}, G_=, G_+, G_-, G_<, G_{\lor}, G_{\land}, G_{\lnot}\}$ is the type of the gate; $\zeta \in \mathbb{R} \cup \{nil\}$ is an (optional) real parameter; $v_1, v_2 \in V \cup \{nil\}$ are the first and second input nodes of the gate (one or both of them may be missing); and $v \in V$ is the output node; no two distinct gates have the same output node.
\end{definition}

\begin{definition}[$(\epsilon,\delta)$-\gcircuit, \citep{BPR15-PCP}] Given a generalized circuit $S = (V,\mathcal{T})$, we say that an assignment $x: V \to [0,1]$ $(\epsilon, \delta)$-\textit{approximately satisfies} $S$ if for all but a $\delta$-fraction of the gates, $x$ satisfies the corresponding constraints in Table \ref{tbl:gcircuit}.

\begin{table}[ht]
\centering
\begin{tabular}{|c|c|}
\hline
\textbf{Gate} & \textbf{Constraint} \\ \hline
$G_\zeta(\alpha \parallel a)$ & $\mathbf{x}[a] = \alpha \pm \epsilon$ \\ \hline
$G_{\times\zeta}(\alpha \mid a \mid b)$ & $\mathbf{x}[b] = \alpha \cdot \mathbf{x}[a] \pm \epsilon$ \\ \hline
$G_=(\mid a \mid b)$ & $\mathbf{x}[b] = \mathbf{x}[a] \pm \epsilon$ \\ \hline
$G_+( \mid a,b \mid c)$ & $\mathbf{x}[c] = \min(\mathbf{x}[a] + \mathbf{x}[b], 1) \pm \epsilon$ \\ \hline
$G_-( \mid a,b \mid c)$ & $\mathbf{x}[c] = \max(\mathbf{x}[a] - \mathbf{x}[b], 0) \pm \epsilon$ \\ \hline
$G_<(\mid a,b \mid c)$ & 
\begin{tabular}[c]{@{}c@{}}
$\mathbf{x}[c] = \begin{cases} 
1 \pm \epsilon & \text{if } \mathbf{x}[a] < \mathbf{x}[b] - \epsilon \\
0 \pm \epsilon & \text{if } \mathbf{x}[a] > \mathbf{x}[b] + \epsilon 
\end{cases}$
\end{tabular} \\ \hline
$G_{\lor}(\mid a,b \mid c)$ & 
\begin{tabular}[c]{@{}c@{}}
$\mathbf{x}[c] = \begin{cases} 
1 \pm \epsilon & \text{if } \mathbf{x}[a] = 1 \pm \epsilon \text{ or } \mathbf{x}[b] = 1 \pm \epsilon \\
0 \pm \epsilon & \text{if } \mathbf{x}[a] = 0 \pm \epsilon \text{ and } \mathbf{x}[b] = 0 \pm \epsilon 
\end{cases}$
\end{tabular} \\ \hline
$G_{\land}(\mid a,b \mid c)$ & 
\begin{tabular}[c]{@{}c@{}}
$\mathbf{x}[c] = \begin{cases} 
1 \pm \epsilon & \text{if } \mathbf{x}[a] = 1 \pm \epsilon \text{ and } \mathbf{x}[b] = 1 \pm \epsilon \\
0 \pm \epsilon & \text{if } \mathbf{x}[a] = 0 \pm \epsilon \text{ or } \mathbf{x}[b] = 0 \pm \epsilon 
\end{cases}$
\end{tabular} \\ \hline
$G_{\lnot}(\mid a \mid b)$ & 
\begin{tabular}[c]{@{}c@{}}
$\mathbf{x}[b] = \begin{cases} 
1 \pm \epsilon & \text{if } \mathbf{x}[a] = 0 \pm \epsilon \\
0 \pm \epsilon & \text{if } \mathbf{x}[a] = 1 \pm \epsilon 
\end{cases}$
\end{tabular} \\ \hline
\end{tabular}
\caption{Generalized Circuit Constraints \citep{CDT09-gcircuit, Rub18-inapprox}}
\label{tbl:gcircuit}
\end{table}
    
\end{definition}

\section{Conditional Hardness of Approximating HZ Equilibrium without Restriction}
\label{sec: conditional hardness without restriction}

\cite{BPR15-PCP} conjectured the generalized circuit problem is PPAD-hard to approximate even allowing some constant fraction of the gates to be corrupted. They showed that this conjecture implies computational hardness for computing Nash Equilibria and CEEIs. We show in this paper that this conjecture also implies computing an $\epsilon$-approximate HZ equilibrium is PPAD-hard for some constant $\epsilon>0$. 

\begin{conjecture}[\cite{BPR15-PCP}, Conjecture 2]
\label{conj:gcircuit}
There exist constants $\epsilon,\delta>0$ such that $(\epsilon,\delta)$-$\gcircuit$ is PPAD-hard.
\end{conjecture}

Now we are ready to state our main theorem of this section.
\begin{theorem}
\label{thm:conj-implies-eps-hz-ppad}
Assuming Conjecture $\ref{conj:gcircuit}$, finding an $\epsilon$-HZ equilibrium is PPAD-hard for some constant $\epsilon>0$. 
\end{theorem}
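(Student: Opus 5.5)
The plan is a polynomial-time reduction from an instance $S=(V,\mathcal{T})$ of $(\epsilon,\delta)$-$\gcircuit$ (with fan-out $\le 2$) to an HZ market $M$. From any $\epsilon'$-HZ equilibrium of $M$ we then read off an assignment that $(\epsilon,\delta)$-approximately satisfies $S$. Following the strategy of \cite{CCPY21}, every node $v\in V$ is encoded by the price of a designated good (or the difference of prices of a pair of goods) after an affine rescaling. Every gate $T\in\mathcal{T}$ is implemented by a constant-size gadget of agents and goods whose utilities are chosen so that, in an exact HZ equilibrium, the price encoding the output node is forced to equal the gate function of the input prices. We also include a constant-size \emph{normalization} gadget that pins a reference good at price $0$ (using condition 3) and a reference good at a fixed positive price, so that absolute price values are meaningful.

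\textbf{Step 1: gate gadgets.} For each gate type in Table~\ref{tbl:gcircuit} I will design a gadget and prove an exact-equilibrium lemma: if all agents in the gadget best-respond exactly and its goods clear, then the output price equals the gate value up to $O(\epsilon')$ slack. Arithmetic gates ($G_+,G_-,G_{\times\zeta},G_=$) use agents indifferent between two goods, whose budget-exhaustion condition makes prices satisfy linear relations. Comparison and Boolean gates use agents whose demand switches sharply when one price exceeds another, with an amplifier chain to sharpen thresholds. Because fan-out is at most $2$, each good participates in $O(1)$ gadgets, so the market has $O(|\mathcal{T}|)$ agents and goods.

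\textbf{Step 2: robustness to constant error, with averaging.} An $\epsilon'$-HZ equilibrium lets every agent lose $\epsilon'$ utility and overspend by $\epsilon'$, and these errors could break a gadget. The key argument is a counting/averaging one. For a gadget $g$, let $E_g$ be its total error, measured as the sum over its agents of utility shortfall and of the mass they place on goods outside the gadget (``long-range'' purchases, such as buying and burning items elsewhere). I aim to show $\sum_g E_g = O(\epsilon' n)$, using unit supply and unit demand with the gadgets' bounded size. By Markov's inequality, at most an $O(\epsilon'/\eta)$ fraction of gadgets have $E_g>\eta$. I then prove a \emph{local robustness lemma}: if $E_g\le \eta$ and the gadget's input goods are not themselves grossly corrupted, the output price satisfies the gate constraint within $\epsilon$. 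Choosing $\eta$ as a small function of $\epsilon$ and then $\epsilon'$ small relative to $\eta\delta$ makes the corrupted gates at most a $\delta$ fraction, which is exactly what $(\epsilon,\delta)$-$\gcircuit$ tolerates.

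\textbf{Main obstacle.} The hardest step is the interaction between misbehaving agents and prices of goods in \emph{other} gadgets. Agents can spend their $\epsilon'$ slack on arbitrary goods anywhere in the market, and a single good's price is determined by demand from many agents. So a small total error, concentrated on one good, could corrupt it and propagate. I would try to control this in two ways. First, replicate each gadget's agents so that each good is demanded by many agents inside its gadget, which makes exogenous demand a small fraction of its supply. Second, charge each good's excess or deficit to the agents causing it, which ensures each unit of error corrupts only $O(1)$ gadgets. I also need to ensure that the normalization gadget itself is not corrupted. This can be done by making it large, for example with $\Theta(n)$ copies, so its error is averaged away.

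With these pieces in place, the read-off map from equilibrium prices to node values, clipped to $[0,1]$, is polynomial time. This completes the reduction, and Theorem~\ref{thm:conj-implies-eps-hz-ppad} follows from Conjecture~\ref{conj:gcircuit}.
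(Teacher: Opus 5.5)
Your template (gadgets, Markov averaging over gadgets, bounded-degree containment) has the right overall shape, but the step that carries all the weight is not justified, and as stated it is false. The $\epsilon'$-HZ conditions bound only each agent's utility shortfall and overspending. They do not bound ``the mass placed on goods outside the gadget,'' and unit supply and unit demand do not bound it either. Mass on zero-price, zero-utility goods need not be an error at all: in the market the paper uses, the vertex agents $A_v$ necessarily hold goods outside $G_v$ whenever $x^{+}(G_v,\overline{A_v})\ge S_v+1$.

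For positive-price goods, let $g_i$ be the budget agent $i$ does not spend on positive-utility goods at their group's minimum price. The dual LP (with $\mu_i\ge 0$) gives $\sum_j u_{ij}x_{ij}\le \alpha_i(1-g_i)+\mu_i=val_p(i)-\alpha_i g_i$, so approximate optimality only yields $g_i\le \epsilon'/\alpha_i$. An agent whose favorite good is very expensive has $\alpha_i\approx 0$, and can burn its whole budget on other gadgets' goods at negligible utility cost. If many agents are in that position, the burned mass can be $\Theta(n)$ rather than $O(\epsilon' n)$. Neither replicating agents nor ``charging'' errors to agents changes this.

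The missing idea is the paper's two-tier argument:
\begin{enumerate}
\item Fix a price ceiling $f(m)=m^{20}$. Total spending is at most $n(1+\epsilon')$, and every agent's utility-$1$ good lies in a large group $G_e$ or $G_{v,3}$. Hence only an $O(m^5/f(m))$ fraction of those groups exceed the ceiling.
\item Every other agent has $\alpha_i=\Omega(1/f(m))$, using $val_p(i)\le 0.9$, and therefore wastes only $g_i=O(\epsilon' f(m))$.
\item Taking $\epsilon'=1/f(m)^2$ makes the total zero-utility allocation of goods priced at least $1/m^4$ be $O(m^5/f(m))\,n$.
\item Markov's inequality and bounded degree then leave a $1-O(1/m^4)$ fraction of vertices on which the local lemmas hold.
\end{enumerate}
The resulting error is governed by the ceiling term, not linear in $\epsilon'$. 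That is why $\epsilon'$ has to be chosen as a function of the ceiling, not just as ``small relative to $\eta\delta$.''

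Separately, your Step 1 is a placeholder for what would be the hardest part of a direct reduction. You give no HZ gadget for any gate type. Building arithmetic and comparison gadgets under the unit-demand constraint and the affine price invariance of HZ is not routine. The paper never builds such gadgets. It first reduces $(\epsilon,\delta)$-\gcircuit{} to $(\kappa,\delta')$-\threshold{} using the constant-size gadgets of \cite{PP21}. Their locality means a $\delta'$ fraction of violated vertices invalidates at most $c_2\delta' n$ gates. It then reuses the market $M_H$ of \cite{CCPY21} unchanged, with $m=\lceil C/\kappa\rceil$ and $\epsilon'=m^{-40}$. As a result, only a robustness re-analysis of one fixed, already-understood market is required.
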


We show this by a reduction from the $(\epsilon,\delta)$-$\gcircuit$ problem to an intermediate problem called $(\kappa, \delta')$-$\threshold$ problem and then a reduction from the threshold problem to the $\epsilon'$-HZ problem. The $(\kappa,\delta)$-$\threshold$ problem is a relaxation of the Threshold game in  \cite{PP21}. It only requires approximate satisfaction of a large fraction of the gates. 

\begin{definition}[$(\kappa,\delta)$-\threshold]
Given a threshold game defined on the directed graph $H=(V,E)$ and threshold $t$, a strategy profile $x=(x_u)_{u\in V}\in [0,1]^{|V|}$ is an $(\kappa, \delta)$-equilibrium of the threshold game if for at least $(1-\delta)$-fraction of the vertices, we have
    \begin{align*}
        x_v = 
        \begin{cases}
            [0,\kappa] &\sum_{u\in N_v}x_u>t+\kappa\\
            [1-\kappa,1] &\sum_{u\in N_v}x_u<t-\kappa\\
            [0,1] &\sum_{u\in N_v}x_u\in [t-\kappa, t+\kappa].\\
        \end{cases}
    \end{align*}
where $N_v$ is the set of vertices $u$ with incoming edges $(u,v)\in E$ to vertex $v$.
\end{definition}

\begin{lemma}
\label{lem:eps-delta-threshold-ppad}
    Assuming Conjecture \ref{conj:gcircuit}, $(\kappa,\delta)$-$\threshold$ is PPAD-hard for some $\kappa, \delta>0$ and directed graph $H$ with bounded in-degree and out-degree.
\end{lemma}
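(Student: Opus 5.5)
The plan is to adapt the reduction of \cite{PP21} from $\epsilon$-$\gcircuit$ to the threshold game so that it is robust to a constant fraction of corrupted gates. I start from an $(\epsilon,\delta)$-$\gcircuit$ instance that Conjecture~\ref{conj:gcircuit} makes PPAD-hard. The first step is to normalize it to fan-out at most $2$, using the version of the conjecture stated in \cite{BPR15-PCP}, which already has fan-out $\le 2$. I would also make it robust to the ``brittle'' gates $G_\zeta$, $G_{\times\zeta}$, $G_+$, $G_-$. Following \cite{Rub18-inapprox, PP21}, I encode real values by the average of a block of $m=O(1/\epsilon^2)$ boolean-ish threshold players (a unary/fraction encoding) and implement each arithmetic gate by a constant-size gadget of threshold vertices. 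Because the circuit has bounded fan-out and each gate is replaced by $O(1)$ blocks of constant size, the resulting graph $H$ has bounded in- and out-degree, and every circuit gate corresponds to a set of $O(1)$ vertices, so $|V(H)| = \Theta(|\mathcal{T}|)$.

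Second, I fix the threshold gadgets. A threshold vertex with in-neighbors $N_v$ outputs roughly $1$ if $\sum_{u\in N_v} x_u < t$ and roughly $0$ if the sum exceeds $t$. This makes it a NOT/NOR-type comparator, and it is enough to build $G_\lnot$, $G_\lor$, $G_\land$, and $G_<$. Copy gates come from two negations. Constants come from vertices whose in-neighborhood has fixed behavior. Weighted sums with bounded integer weights are built by duplicating incoming edges through copy gadgets. Following \cite{PP21}, I would verify gadget by gadget that if all vertices in the gadget satisfy the $(\kappa,0)$ condition, then the decoded values satisfy the corresponding gate constraint up to $\epsilon$. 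This requires choosing $\kappa$ small relative to $\epsilon$ and to the block size. Decoding each circuit node's value as the fraction of ``high'' vertices in its block, clipped to $[0,1]$, gives the map from threshold equilibria back to circuit assignments.

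Third, I do the counting argument that turns a $(\kappa,\delta')$-equilibrium into an $(\epsilon,\delta)$-assignment. Call a gate \emph{good} if every vertex in its gadget, and every vertex in the blocks of its input and output nodes, satisfies the threshold condition. A violating vertex lies in the input/output region of at most $c$ gates, where $c$ is a constant determined by the bounded fan-out and the gadget sizes. Hence at most $c\delta'|V(H)| \le c'\delta'|\mathcal{T}|$ gates are bad. Taking $\delta' = \delta/c'$ ensures that all but a $\delta$-fraction of gates are $\epsilon$-satisfied, which gives the polynomial-time reduction.

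The main obstacle is the arithmetic gates under unary encoding. A single wrong vertex inside a block shifts the decoded value by only $1/m$. But unbounded errors could accumulate across a block if the block is shared by many gates. I would address this by giving each gate its own fresh copy of its input blocks, obtained through copy gadgets. Each corrupted vertex then affects only $O(1)$ gates, and the constants $c, c'$ remain independent of $n$.
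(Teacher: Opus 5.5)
Your skeleton matches the paper's. You normalize to bounded fan-out and replace each gate by a constant-size gadget from \cite{PP21}, so $H$ has bounded degree and $|V(H)|=\Theta(|\mathcal{T}|)$. You then note that each unsatisfied vertex spoils only $O(1)$ gates, so $\delta'=\delta/c'$ suffices. Citing the fan-out-$\le 2$ version of the conjecture described in the introduction, instead of the paper's appendix reduction to constant fan-out, is fine. Your counting is a more conservative version of the paper's, which observes that each unsatisfied vertex lies in exactly one gadget (as an auxiliary vertex or as that gate's output node). Also requiring the input nodes to be satisfied is unnecessary, since a gate's constraint only relates whatever values its inputs carry to its output, but it costs only a constant.

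The step that does not hold up is your second one. In the construction the paper uses, each circuit node is itself a single vertex of $H$ whose strategy in $[0,1]$ carries the real value. The gadgets work with these continuous strategies directly, since a player whose in-sum is within $\kappa$ of $t$ may play any value. Your unary ``boolean-ish'' block encoding is not \cite{PP21}'s construction, and as described it is not justified:
\begin{itemize}
\item Two threshold negations round rather than copy, so a bitwise $G_=$ is correct only on near-boolean blocks.
\item Nothing forces a block to be near-boolean, because players at the threshold are unconstrained. For the same reason, decoding by the ``fraction of high vertices'' is ill-defined.
\item Realizing $G_{\times\zeta}$, $G_+$, $G_-$ on such codes with NOR-type logic is exactly where brittleness bites.
\end{itemize}
So the key local property you need (all gadget vertices $\kappa$-satisfied implies the gate is $\epsilon$-satisfied) is asserted for gadgets you have neither specified nor verified.

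The repair is to drop the encoding and use \cite{PP21}'s gadgets verbatim; that local property is what their correctness argument supplies and what the paper relies on. Your ``main obstacle'' then disappears: with bounded fan-out no node feeds more than $O(1)$ gates, so no fresh copies are needed. Your third step then goes through unchanged.
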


\begin{proof}
Conjecture \ref{conj:gcircuit} says that $(\epsilon,\delta)$-$\gcircuit$ is PPAD-hard, so it suffices to show a reduction from $(\epsilon,\delta)$-$\gcircuit$ to $(\kappa,\delta')$-$\threshold$.
\cite{PP21} give a reduction from $\epsilon$-$\gcircuit$ to $\kappa$-$\threshold$. Their proof uses an alternative but equivalent formulation of the generalized circuit problem. The alternative gates are shown in Table \ref{tbl:gcircuit2}. Consider an instance of the $(\epsilon,\delta)$-$\gcircuit$ problem $(V,\mathcal{T})$. Without loss of generality, we can assume the generalized circuit has fan-out bounded by some constant (see Appendix \ref{append:const fan-out} for a detailed reduction). Let $n=|\mathcal{T}|$ be the number of gates. Then we can construct an instance of the threshold game $H(V,\mathcal{T}),t=1/2$ using the gadget construction in \cite{PP21}, which uses a subgraph consisting of a constant number of vertices to simulate each gate in the generalized circuit. The constructed directed graph $H$ has in-degree and out-degree bounded by some constant and the number of vertices in the graph is $c_2 n$ for some constant $c_2$. Let $(x_v)_{v\in V(H)}$ be an $(\kappa,\delta')$-equilibrium of the threshold game. Then only $\delta'$-fraction of the vertices are not satisfied. We delete any gate in the generalized circuit instance whose gadget in $H$ has at least one unsatisfied vertex. Since each unsatisfied vertex of $V(H)$ appears in at most one gate gadget (it is either a vertex added to simulate some gate or a vertex in $V$ that is the output node of the gate), we will delete at most $c_2n\delta'$ gates. Thus there are $(1-c_2\delta')n$ gates whose constraints are $\epsilon$-approximately satisfied by $(x_v)_{v\in V(H)}$. Set $c_2\delta'<\delta$, then the result follows.
\end{proof}

\begin{table}[h]
\centering
\begin{tabular}{|c|c|}
\hline
\textbf{Gate} & \textbf{Constraint} \\
\hline
$G_{\frac{1}{2}}(v)$ & $x[v] = \frac{1}{2} \pm \epsilon$ \\
\hline
$G_{\times \frac{1}{2}}(\vert v_1 \vert v)$ & $x[v] = \frac{1}{2} \cdot x[v_1] \pm \epsilon$ \\
\hline
$G_=(\vert v_1 \vert v)$ & $x[v] = x[v_1] \pm \epsilon$ \\
\hline
$G_+(\vert v_1, v_2 \vert v)$ & $x[v] = \min\{x[v_1] + x[v_2], \frac{1}{2}\} \pm \epsilon$ \\
\hline
$G_-(\vert v_1, v_2 \vert v)$ & $x[v] = \max\{x[v_1] - x[v_2], 0\} \pm \epsilon$ \\
\hline
$G_{<}(\vert v_1, v_2 \vert v)$ & 
\begin{tabular}{c}
$x[v] = \left\{
\begin{array}{ll}
\frac{1}{2} \pm \epsilon & \text{if } x[v_1] < x[v_2] - \epsilon \\
0 \pm \epsilon & \text{if } x[v_1] > x[v_2] + \epsilon
\end{array}\right.$
\end{tabular} \\
\hline
$G_{\land}(\vert v_1, v_2 \vert v)$ & 
\begin{tabular}{c}
$x[v] = \left\{
\begin{array}{ll}
\frac{1}{2} \pm \epsilon & \text{if } x[v_1] = \frac{1}{2} \pm \epsilon \land x[v_2] = \frac{1}{2} \pm \epsilon \\
0 \pm \epsilon & \text{if } x[v_1] = 0 \pm \epsilon \lor x[v_2] = 0 \pm \epsilon
\end{array}\right.$
\end{tabular} \\
\hline
$G_{\lor}(\vert v_1, v_2 \vert v)$ & 
\begin{tabular}{c}
$x[v] = \left\{
\begin{array}{ll}
1 \pm \epsilon & \text{if } x[v_1] = \frac{1}{2} \pm \epsilon \lor x[v_2] = \frac{1}{2} \pm \epsilon \\
0 \pm \epsilon & \text{if } x[v_1] = 0 \pm \epsilon \land x[v_2] = 0 \pm \epsilon
\end{array}\right.$
\end{tabular} \\
\hline
$G_{\lnot}(\vert v_1 \vert v)$ & 
\begin{tabular}{c}
$x[v] = \left\{
\begin{array}{ll}
\frac{1}{2} \pm \epsilon & \text{if } x[v_1] = 0 \pm \epsilon \\
0 \pm \epsilon & \text{if } x[v_1] = \frac{1}{2} \pm \epsilon
\end{array}\right.$
\end{tabular} \\
\hline
\end{tabular}
\caption{Alternative Generalized Circuit Constraints \cite{PP21}}
\label{tbl:gcircuit2}
\end{table}

\subsection{Reducing $(\kappa,\delta)$-$\threshold$ to $\epsilon$-HZ}
For any constant $\kappa>0$, \cite{CCPY21} reduced the $\kappa$-$\threshold$ problem with bounded in-degree/out-degree to $(1/n^c)$-HZ where $n$ is the number of goods/agents in the HZ market, which is polynomial in the number of vertices in the threshold game and $1/\kappa$. In contrast, we show that $(\kappa,\delta)$-$\threshold$ with bounded in-degree/out-degree can be reduced to $\epsilon$-HZ where $\epsilon$ is just a constant, thereby showing a stronger hardness of approximation result under Conjecture \ref{conj:gcircuit}. Without loss of generality, we assume the $(\kappa,\delta)$-$\threshold$ problem uses a threshold of $t=1/2$, which is derived from the reduction in the proof of Lemma \ref{lem:eps-delta-threshold-ppad}.

The HZ market we construct is the same as in \cite{CCPY21}. We restate it here for completion. Let $m=\lceil C/\kappa\rceil$, where $C$ is a large constant. Let $M_H$ be an HZ market consisting of the following groups of agents and goods: for each vertex $v\in V$, we construct a group of agents $A_v$ ($5m^{10}$), and three groups of goods $G_{v,1}$ ($m^{10}+S_v$), $G_{v,2}$ ($2m^{10}$), $G_{v,3}$ ($2m^{10}$), where the quantity of each group is indicated in the parentheses and
\begin{align*}
    S_v = (24m^3+12m)\cdot \outdeg(v)+(24m^3+15m)\cdot \indeg(v)-3m;
\end{align*}
For each edge $e\in E$, we construct five groups of agents $A_{e,*}(64m^5)$, $A_{e,1}(48m^3)$, $A_{e,2,\ell}$ (6 for each $\ell\in [m]$), $A_{e,3,\ell}$ (8 for each $\ell\in [m]$), $A_{e,4,\ell}$ (18 for each $\ell\in [2m]$) and a group of goods $G_e$ ($32m^5$).
We also add a group of dummy goods $G_D$ ($3m|V|+(32m^5+23m)|E|$) to make the number of agents and goods the same. Let $A$ be the set of all agents created and $G$ be the set of all goods created. Let $n$ be the total number agents/goods in the market, then $n=|A|=|G|=5m^{10}|V|+(64m^5+48m^3+50m)|E|$. Let $G_v = G_{v,1}\cup G_{v,2}\cup G_{v,3}$, and \[\textstyle
A_e = A_{e,1}
\cup \bigcup_{\ell\in[m]} A_{e,2,\ell}
\cup \bigcup_{\ell\in[m]} A_{e,3,\ell}
\cup \bigcup_{\ell\in[2m]} A_{e,4,\ell}.
\] Then $|G_v|=5m^{10}+S_v$ and $|A_e|=O(m^3)$. 
The utilities of the different groups of agents are as follows:
\begin{align*}
    &A_v: 1 \tfor G_{v,3}, \frac{m^2+1}{4m^2-2} \tfor G_{v,2}, \frac{1}{2m^2-1} \tfor G_{v,1}
    \\
    &A_{e,*}: 1 \tfor G_e\\
    &A_{e,1}: 1 \tfor G_{u,3}, \frac{1}{2}\tfor G_{v,1}\\
    &A_{e,2,\ell}: 1 \tfor G_e, \frac{\ell}{2m^3}\tfor G_{v,1}\\
    &A_{e,3,\ell}: 1 \tfor G_e, \frac{\ell}{2m^3}\tfor G_{u,1}\\
    &A_{e,4,\ell}: 1 \tfor G_e,\frac{\ell}{2m^3}\tfor G_{v,1}, \frac{1}{4}+\frac{1}{4m^2}+\frac{1}{m^3}\tfor G_{u,2}.\\
\end{align*}
Observe that each agent has a favorite good of utility 1, which is either $G_e$ for some edge $e$ or $G_{v,3}$ for some vertex $v$.

Note that $n\leq 6m^{10}|V|$ for $C$ large enough since the threshold graph $H$ has bounded degree and thus $|E|=O(|V|)$. Consider an $\epsilon$-HZ equilibrium $(x,p)$, where $\epsilon$ will be set later. Let $p(G_e)$ and $\bar{p}(G_e)$ denote the minimum and maximum prices of any good in $G_e$ (i.e. identical copies of goods in $G_e$ can have different prices). Similarly we define the minimum and maximum prices for groups of goods $G_{v,1}, G_{v,2}, G_{v,3}$. 

Similar to \cite{CCPY21}, we will rely on reasoning about the individual optimization LP and its dual defined as follows:
\begin{definition}[Individual Optimization LP]
    \begin{align*}
        \max &\sum_{j\in [n]}u_{ij}x_{ij}\\
        s.t. 
        &\sum_{j\in [n]}p_j x_{ij}\leq 1\\
        &\sum_{j\in [n]}x_{ij}=1\\
        & x_{ij}\geq 0, \forall j\in [n]
    \end{align*}
\end{definition}

\begin{definition}[Dual of Individual Optimization LP]
\label{def:dual-LP}
    \begin{align}
        \min \hspace{3pt}&\alpha_i+\mu_i\nonumber\\
        s.t. \hspace{3pt}& \alpha_i p_j+\mu_i\geq u_{ij}, \forall j\in [n] \label{eq:dual-constraint} \\
        & \alpha_i\geq 0 \nonumber
    \end{align}
\end{definition}
Let $\alpha_i, \mu_i$ be an optimal solution to the dual LP of each agent $i$ in $M_H$. Let $val_p(i)$ denote the optimal value of both of these LPs. This is the maximum utility agent $i$ can get subject to the non-relaxed budget of 1 and prices $p$ from the $\epsilon$-HZ equilibrium. 

\begin{definition}[$\delta$-suboptimal goods \citep{CCPY21}]
A good $j$ is called $\delta$-suboptimal for agent $i$ if $\alpha_i p_j+\mu_i\geq u_{ij}+\delta$.
\end{definition}

We state a few observations that are still true in the current setting. 

\begin{lemma}[\cite{CCPY21}, Lemma 3.5]
\label{lem:suboptimal-alloc}
    For agent $i$, the total allocation in $x_i$ to $\delta$-suboptimal goods is at most $2\epsilon/\delta$.
\end{lemma}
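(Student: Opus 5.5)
The plan is to compare agent $i$'s bundle $x_i$ with the optimal value of the individual LP via weak duality. Fix agent $i$ and an optimal dual solution $(\alpha_i,\mu_i)$, so that $\alpha_i\ge 0$, $\alpha_i p_j+\mu_i\ge u_{ij}$ for every $j$, and $\alpha_i+\mu_i=val_p(i)$. For each good $j$ set the slack
\[
s_j := \alpha_i p_j+\mu_i-u_{ij}\ge 0 .
\]
By definition, good $j$ is $\delta$-suboptimal for $i$ exactly when $s_j\ge \delta$.

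Next we sum the slacks weighted by $x_{ij}$ and use the equilibrium conditions. Unit demand gives $\sum_j x_{ij}=1$. The approximate budget gives $\sum_j p_j x_{ij}\le 1+\epsilon$, and since $\alpha_i\ge 0$,
\[
\sum_j x_{ij}s_j=\alpha_i\sum_j p_jx_{ij}+\mu_i-\sum_j u_{ij}x_{ij}\le \alpha_i(1+\epsilon)+\mu_i-\sum_j u_{ij}x_{ij}.
\]
Approximate optimality gives $\sum_j u_{ij}x_{ij}\ge val_p(i)-\epsilon=\alpha_i+\mu_i-\epsilon$. Combining,
\[
\sum_j x_{ij}s_j\le \epsilon\alpha_i+\epsilon=\epsilon(1+\alpha_i).
\]
Since every $s_j\ge 0$, Markov's inequality yields total allocation to $\delta$-suboptimal goods of at most $\epsilon(1+\alpha_i)/\delta$.

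It remains to show $\alpha_i\le 1$, which turns this into $2\epsilon/\delta$. This is the step I expect to need care, and it is where the normalization $\min_j p_j=0$ and $u_{ij}\in[0,1]$ enter. Let $j_0$ be a zero-price good. If $\alpha_i>1$, lower $\alpha_i$ to $1$ and raise $\mu_i$ to $\max(\mu_i,\, \max_j(u_{ij}-p_j))$, then argue this remains dual-feasible without increasing the objective. Feasibility at $j_0$ forces $\mu_i\ge u_{ij_0}\ge 0$. For goods with $p_j\ge 1$ the constraint holds automatically with $\alpha=1$, because $u_{ij}\le 1\le p_j$. For goods with $p_j<1$ the new $\mu$ covers the constraint directly.

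The remaining check is the objective: the increase in $\mu$ must not exceed the decrease $\alpha_i-1$. The increase is $\max_j\bigl(u_{ij}-p_j-\mu_i\bigr)$, and the old constraint gives $u_{ij}-\mu_i\le \alpha_ip_j$. So the increase is at most $(\alpha_i-1)p_j\le \alpha_i-1$ over goods with $p_j<1$. Hence an optimal dual exists with $\alpha_i\le 1$, and choosing that dual solution, as the lemma permits, gives the bound $\epsilon(1+\alpha_i)/\delta\le 2\epsilon/\delta$.
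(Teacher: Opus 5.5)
Your proof is correct and follows the standard argument behind Lemma 3.5 of \cite{CCPY21}, which the paper cites without giving its own proof: bound the $x_i$-weighted dual slack by $\epsilon(1+\alpha_i)$ using unit demand, the relaxed budget and $\epsilon$-optimality, then apply Markov with $\alpha_i\le 1$. Your final step is more roundabout than needed. As phrased (``an optimal dual exists \ldots\ choosing that dual''), it is also slightly weaker than the lemma, because $\delta$-suboptimality is defined relative to the already-fixed optimal $(\alpha_i,\mu_i)$. However, you already have $\mu_i\ge u_{ij_0}\ge 0$ and $\alpha_i+\mu_i=val_p(i)\le\max_j u_{ij}\le 1$. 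Hence \emph{every} optimal dual has $\alpha_i\le 1$, which is exactly the reasoning used in Lemma~\ref{lem:bound-on-mu-alpha}, and the case $\alpha_i>1$ never arises.
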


\begin{lemma}
\label{lem: pe-pv3-val}
\begin{align*}
    &\mu_i\geq 0\\
    &p(G_e)\geq 2(1-2\epsilon)\\
    &p(G_{v,3})\geq 5/3\\
    &\alpha_i+\mu_i=val_p(i)\leq 0.9, \forall i\in A
\end{align*}
\end{lemma}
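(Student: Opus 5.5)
The four claims are essentially independent. I would prove them in the order listed: the first uses only the dual LP and price normalization, the next two use a counting (pigeonhole) argument on the agent groups whose favorite good is scarce, and the last combines the two price bounds with the utility table.

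\emph{Step 1: $\mu_i\ge 0$.} By condition 3 of the $\epsilon$-HZ definition there is a good $j_0$ with $p_{j_0}=0$. The dual constraint \eqref{eq:dual-constraint} at $j_0$ reads $\mu_i\ge u_{ij_0}\ge 0$, which is the claim.

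\emph{Step 2: $p(G_e)\ge 2(1-2\epsilon)$.} Let $q$ be the price of the cheapest good in $G_e$. An agent $i\in A_{e,*}$ has utility $1$ on $G_e$ and $0$ elsewhere.
\begin{itemize}
\item Mixing that cheapest good with the zero-price good $j_0$ is affordable with a weight of $\min(1,1/q)$ on $G_e$. Hence $val_p(i)\ge \min(1,1/q)$.
\item By approximate optimality, the allocation $x_i(G_e)$ satisfies $x_i(G_e)\ge \min(1,1/q)-\epsilon$.
\item There are $64m^5$ agents in $A_{e,*}$ but only $32m^5$ units of $G_e$. So some agent has $x_i(G_e)\le 1/2$.
\end{itemize}
Therefore $\min(1,1/q)\le 1/2+\epsilon$. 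For small $\epsilon$ this forces $q>1$ and $q\ge 1/(1/2+\epsilon)=2/(1+2\epsilon)\ge 2(1-2\epsilon)$.

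\emph{Step 3: $p(G_{v,3})\ge 5/3$.} The same argument applies to $A_v$, which has $5m^{10}$ agents competing for $2m^{10}$ units of $G_{v,3}$.
\begin{itemize}
\item Some agent $i\in A_v$ has $x_i(G_{v,3})\le 2/5$.
\item Its other utilities are at most $\frac{m^2+1}{4m^2-2}=\frac14+O(1/m^2)$. So its realized utility is at most $\frac25+\frac35\big(\frac14+O(1/m^2)\big)\approx 0.55+O(1/m^2)$.
\item On the other hand, buying the cheapest $G_{v,3}$ good at price $q$ and filling the rest with $j_0$ gives $val_p(i)\ge\min(1,1/q)$.
\end{itemize}
Approximate optimality then gives $1/q\le 0.55+O(1/m^2)+\epsilon$. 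For $C$ large and $\epsilon$ a small constant, this yields $q\ge 1.8>5/3$. This is the only place where a little care with constants is needed: other agents (e.g.\ $A_{e,1}$) also want $G_{u,3}$, but their presence only reduces what $A_v$ can receive, so it only helps the pigeonhole bound.

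\emph{Step 4: $\alpha_i+\mu_i=val_p(i)\le 0.9$.} The equality is LP strong duality, since the primal is feasible and bounded. For the bound, I would check from the utility table that every agent's unique utility-$1$ good lies in some $G_e$ or $G_{v,3}$, each of which has price at least $\min(2(1-2\epsilon),5/3)=5/3$ by Steps 2–3. Every other utility is at most $1/2$: the maximum is $\frac12$ for $A_{e,1}$, then $\frac14+O(1/m^2)$, and the rest are $O(1/m^2)$.

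Now take any affordable unit-demand bundle $y$ in the individual LP. The budget constraint gives $\frac53\, y(\text{fav})\le 1$, so $y(\text{fav})\le 3/5$. Its value is therefore at most $\frac35+\frac12\cdot\frac25=0.8\le 0.9$, which proves the claim.
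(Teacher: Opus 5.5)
Your proof is correct and takes essentially the same approach as the paper, which simply defers to Lemmas 3.3, 3.13 and 3.14 of \cite{CCPY21}. Those arguments are the ones you reconstruct: $\mu_i\ge 0$ from the dual constraint at a zero-price good; the two price lower bounds from oversubscription of $G_e$ by $A_{e,*}$ and of $G_{v,3}$ by $A_v$ combined with approximate optimality; and the value bound $3/5+(2/5)\cdot(1/2)=0.8$ from the budget constraint on each agent's utility-$1$ group. Your write-up also makes explicit that only $\epsilon$ below a fixed constant and $m$ large are needed, which is the claim the paper makes without detail.
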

\begin{proof}
    Follows from Lemma 3.3, Lemma 3.13 and Lemma 3.14 in \cite{CCPY21} as their proof still applies when $\epsilon$ is a small constant.
\end{proof}

In what follows, we will keep a counter of bad groups of goods whose minimum prices or allocation does not behave nicely in equilibrium: either the price is too high or a lot of the goods from that group are allocated to agents with 0 utility. We will denote such groups of goods by $B_G\subset V\cup E$ (the vertex goods indexed by $v$ and edge goods indexed by $e$. (Note that we do not differentiate between the groups of goods $G_{v,1}$, $G_{v,2}$, and $G_{v,3}$ for the same vertex nor the groups of agents for the same edge.)

\begin{definition}
We say an agent is \emph{bad} if it has positive utility for any vertex goods or edge goods in $B_G$.
\end{definition}

We first show that there cannot be too many vertex goods $G_{v,3}$ or edge goods $G_e$ with a high price. We focus on these two groups of goods because every agent's favorite good is either $G_{v,3}$ or $G_e$. In what follows, we set the price ceiling to be $f(m)=m^{20}$ and add any groups of goods exceeding this ceiling to the bad goods. We also set $\epsilon = 1/f(m)^2 = 1/m^{40}$.
\begin{lemma}
\label{lem: frac-bad-goods-1}
Let $k_v$ be the fraction of vertices $v$ such that $p(G_{v,3})\geq f(m)$. Let $k_e$ be the fraction of edges $e$ such that $p(G_e)\geq f(m)$. If $f(m)=m^{20}$, then $k_v,k_e\leq O\left(\frac{m^5}{f(m)}\right)=O\left(\frac{1}{m^{15}}\right)$.
\end{lemma}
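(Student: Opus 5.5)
We bound the total money in the market and compare it with the cost of the expensive groups. Each agent spends at most $1+\epsilon$ by the relaxed budget condition, so the total expenditure is at most $(1+\epsilon)n \le 2n \le 12 m^{10}|V|$. By unit supply, every good is fully allocated, so each good $j$ contributes exactly $p_j\sum_i x_{ij}=p_j$ to total expenditure. Therefore
\[
\sum_{j\in G} p_j \;\le\; (1+\epsilon)\, n .
\]

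Next we lower bound the left-hand side using the expensive groups only. If $p(G_{v,3})\ge f(m)$, then every one of the $2m^{10}$ goods in $G_{v,3}$ has price at least $f(m)$, since $p(\cdot)$ is the minimum price in the group. Such a vertex therefore contributes at least $2m^{10}f(m)$. Similarly, an edge with $p(G_e)\ge f(m)$ contributes at least $32m^5 f(m)$. The groups $G_{v,3}$ and $G_e$ are disjoint across vertices and edges, and all prices are nonnegative. Summing gives
\[
k_v|V|\cdot 2m^{10} f(m) \;+\; k_e|E|\cdot 32m^5 f(m) \;\le\; 12 m^{10}|V|.
\]
From the first term, $k_v\le 6/f(m)$, which is well within the claimed bound.

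For $k_e$ we need the size of $|E|$ relative to $|V|$. The second term gives $k_e \le 12m^{10}|V|/(32 m^5 f(m)|E|) = O(m^5/f(m))\cdot |V|/|E|$. This is the one point needing care: we need $|V|=O(|E|)$, whereas bounded degree only gives $|E|=O(|V|)$.

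We close the gap using the structure of the threshold graph from the reduction in Lemma~\ref{lem:eps-delta-threshold-ppad}. Every vertex there lies on some gate gadget with at least one incident edge, and isolated vertices can be removed without loss of generality, since they do not interact with anything. Hence $|E|\ge |V|/2$, and $k_e\le O(m^5/f(m))=O(1/m^{15})$ when $f(m)=m^{20}$.

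If we want to avoid this structural assumption, we can instead bound the money of edge agents locally. The goods in $G_e$ are positively valued only by agents attached to $e$. Through the $\epsilon$-suboptimality machinery (Lemma~\ref{lem:suboptimal-alloc}), most of $G_e$ must be bought by the $O(m^5)$ agents in $A_{e,*}\cup A_e$. This yields the same $O(m^5/f(m))$ ratio per edge. The global counting argument above is simpler, so that is the route we take.
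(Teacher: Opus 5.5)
Your proof is correct and is essentially the paper's argument: total payments $\sum_j p_j\le(1+\epsilon)n$ must cover at least $2m^{10}f(m)$ for each vertex with $p(G_{v,3})\ge f(m)$ and at least $32m^5f(m)$ for each edge with $p(G_e)\ge f(m)$. You are right that the $k_e$ bound needs $|V|=O(|E|)$, which the paper uses silently when it writes $n\le 6m^{10}|E|$, and your fix is sound: an isolated vertex has in-sum $0$, is satisfied by $x_v=1$, and influences no other vertex, so it can be deleted, giving $|E|\ge|V|/2$. Drop the last paragraph, though: Lemma~\ref{lem:suboptimal-alloc} only controls each agent's own allocation, so that local sketch gives no per-edge bound and does not remove the need for $|V|=O(|E|)$.
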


\begin{proof}
 Since every agent can spend at most $(1+\epsilon)$, and there are $2m^{10}$ goods in $G_{v,3}$ for every $v$ and $32m^5$ goods in $G_e$ for every $e$, we have
    \begin{align*}
        &2m^{10}|V|k_vf(m)\leq n(1+\epsilon) \leq 6m^{10}|V|(1+\epsilon)\\
        &32m^{5}|E|k_ef(m)\leq n(1+\epsilon)\leq 6m^{10}|E|(1+\epsilon).\\
    \end{align*}
The result then follows since $\epsilon<1$. 
\end{proof}

Let $B^1_G$ denote the set of vertices $v$ of or edges $e$ such that $p(G_{v,3})\geq f(m)$ or $p(G_{e})\geq f(m)$. Then any good labeled with vertices or edges in $B^{1}_G$ is bad. Let $B^{1}_A$ be bad agents who have positive utility for any good in $B^1_G$.  

\begin{lemma}
\label{lem: frac bad agents-1 and alpha_i}
\begin{align*}
    &|B_{A}^{1}| = O\left(\frac{m^5}{f(m)}n\right); \\
    & \text{For } i\in A\setminus B_{A}^{1}, \alpha_i=\Omega\left(\frac{1}{f(m)}\right)
\end{align*}
\end{lemma}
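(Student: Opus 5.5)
The bound on $|B_A^1|$ is a counting argument, and the lower bound on $\alpha_i$ comes from the dual constraint \eqref{eq:dual-constraint} applied to agent $i$'s favorite good.

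\textbf{Counting bad agents.} Each vertex $v$ is tied to at most $O(m^{10})$ agents with positive utility for some good labeled $v$. These are the $5m^{10}$ agents of $A_v$ together with the edge agents of the $O(1)$ incident edges. Each edge $e$ is tied to $64m^5 + O(m^3)$ agents, namely $A_{e,*}$ and $A_e$. Vertex goods are also desired by edge agents of incident edges, but since $H$ has bounded degree this adds only $O(m^3)$ agents per vertex.

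By Lemma~\ref{lem: frac-bad-goods-1}, at most $k_v|V| = O(m^5/f(m))\,|V|$ vertices and $k_e|E| = O(m^5/f(m))\,|E|$ edges lie in $B_G^1$. Hence
\[
|B_A^1| \le O\!\left(\tfrac{m^5}{f(m)}\right)\left(|V|\cdot O(m^{10}) + |E|\cdot O(m^5)\right) = O\!\left(\tfrac{m^5}{f(m)}\, n\right),
\]
using $n \ge 5m^{10}|V|$ and $|E| = O(|V|)$. The vertex term is what matters. The edge term is even smaller relative to $n$, but it is in any case dominated, since $n$ also contains $64m^5|E|$.

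\textbf{Lower bound on $\alpha_i$.} Fix $i \notin B_A^1$. Every agent has a favorite good $j^*$ of utility $1$, lying in some $G_{v,3}$ or $G_e$. That group is not in $B_G^1$, since otherwise $i$ would be bad. So some good $j^*$ in that group has $p_{j^*} = p(\cdot) < f(m)$, using the minimum-price copy.

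The dual constraint \eqref{eq:dual-constraint} at $j^*$ gives $\alpha_i p_{j^*} + \mu_i \ge 1$. Lemma~\ref{lem: pe-pv3-val} gives $\alpha_i + \mu_i \le 0.9$ and $\mu_i \ge 0$, so $\mu_i \le 0.9$. Therefore
\[
\alpha_i\, f(m) > \alpha_i\, p_{j^*} \ge 1 - \mu_i \ge 0.1,
\]
so $\alpha_i \ge 0.1/f(m) = \Omega(1/f(m))$.

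The main care point is the first step: making sure the per-vertex count of agents that value vertex goods is $O(m^{10})$. The edge agents from incident edges contribute only $O(m^3)$ each, which bounded degree makes fine.
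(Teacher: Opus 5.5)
Your proof is correct and follows the paper's argument: you bound $|B_A^1|$ by counting through the fractions $k_v,k_e=O(m^5/f(m))$, and you bound $\alpha_i$ using the dual constraint at the favorite good together with $val_p(i)\le 0.9$; your count is in fact slightly more careful than the paper's, which omits the $O(m^3)$ agents per incident edge that value a bad vertex's goods (harmlessly, since they are absorbed in the $O(\cdot)$). One small fix: $\mu_i\le 0.9$ follows from $\alpha_i+\mu_i\le 0.9$ and dual feasibility $\alpha_i\ge 0$, not from $\mu_i\ge 0$, and the inequality $\alpha_i f(m) > \alpha_i p_{j^*}$ should be non-strict (the conclusion $\alpha_i\ge 0.1/f(m)$ is unaffected).
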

\begin{proof}
   To calculate the number of bad agents, we have
   \begin{align*}
    |B^{1}_A|&\leq 5m^{10}|V|k_v+(64m^5+O(m^3))|E|k_e\\
    &\leq (5m^{10}|V|+(64m^5+O(m^3))|E|)O\left(\frac{m^5}{f(m)}\right) = O\left(\frac{m^5}{f(m)}n\right).
    \end{align*}
    For $i\in A\setminus B_{A}^{1}$, without loss of generality, we can assume $i$'s favorite good is $G_e$. Since any agent gets utility 1 from their favorite item, we have
    \begin{align*}
        &\alpha_i p(G_e) +\mu_i\geq 1\\
        & \alpha_i + \mu_i = val_p(i)\\
        \implies &\alpha_i \geq \frac{1-val_p(i)}{p(G_e)-1}\geq \frac{0.1}{f(m)-1} = \Omega\left(\frac{1}{f(m)}\right)
    \end{align*}
    where we use the fact that $val_p(i)\leq 0.9$ and $p(G_e), p(G_{v,3})>1$ from Lemma \ref{lem: pe-pv3-val}.
\end{proof}

Let $g_i$ denote the amount of budget that an agent wastes with respect to the non-relaxed budget of 1, i.e. not spend or spend on goods with 0-utility or on goods with a higher price than the minimum price of its group. More formally, we say
\begin{align*}
    g_i = 1-\sum_{j:u_{ij}>0} x_{ij}p(G_j)
\end{align*}
where $G_j$ denote the group that a good $j$ belongs to and $p(G_j)$ is the minimum price of this group. Note that $g_i$ might be negative since each agent might spend up to $1+\epsilon$.

We show that $g_i$ is subconstant in $m$ with the right choice of $\epsilon$ for any agent not in $B_{A}^{1}$. The intuition is that any such agent can always purchase their favorite item with a bounded price and they do not already have their favorite item in full capacity (since their favorite item is under-supplied), implying their allocation would not be approximately optimal if they waste a lot of their budget. 

\begin{lemma}
\label{lem:g(1/m)}
   For $i\in A\setminus B_{A}^{1}$, 
   \begin{equation*}
       g_i\leq g(1/m)
   \end{equation*}
   where $g(1/m) = O(1/f(m))$.
\end{lemma}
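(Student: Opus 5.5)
Fix an agent $i\notin B_A^1$ and argue by contradiction: suppose $g_i$ is large. The plan is to exhibit a cheap deviation that raises $i$'s utility by more than $\epsilon$, which would violate approximate optimality (condition 5). By the definition of $B^1_A$, agent $i$'s favorite good group (without loss of generality $G_e$, or $G_{v,3}$) has a copy $j^*$ with price $p_{j^*}=p(G_e)<f(m)$, and $u_{ij^*}=1$. By Lemma~\ref{lem: pe-pv3-val}, $val_p(i)\le 0.9$, so $i$ does not already hold a near-full unit of utility-1 goods; the optimal unit-demand bundle puts weight strictly below $1$ on the favorite good.

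First, I would construct a modified bundle $y$ from $x_i$. Replace every purchased good $j$ with $u_{ij}>0$ by a cheapest copy in its group; this keeps utility unchanged and costs $\sum_{j:u_{ij}>0}x_{ij}p(G_j)=1-g_i$. Then take the mass on zero-utility goods, together with any mass on higher-priced copies, and redirect it. The freed budget is exactly $g_i$ (spending is measured against the non-relaxed budget $1$). Use this budget to shift weight $\theta$ from the lowest-utility part of the bundle, in particular from any zero-utility mass or, failing that, from the zero-price good, which always exists by normalization, toward $j^*$.

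Because $p_{j^*}<f(m)$, moving mass $\theta$ onto $j^*$ costs at most $\theta f(m)$. So $\theta=g_i/f(m)$ is affordable within budget $1$. The deviation gains utility. Here I would use $val_p(i)\le 0.9$ and the structure of utilities: every non-favorite utility is at most about $\frac{m^2+1}{4m^2-2}\le 1/2$ or $\frac12$. The key point is that the mass moved comes from goods of utility at most $1-\Omega(1)$, so the gain is $\Omega(\theta)=\Omega(g_i/f(m))$.

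An alternative, cleaner route is through the dual. Since $\alpha_i=\Omega(1/f(m))$ by Lemma~\ref{lem: frac bad agents-1 and alpha_i}, complementary slackness suggests the optimal budget constraint is tight. I would express
\[
val_p(i)-\sum_j u_{ij}x_{ij}\;\ge\;\sum_j x_{ij}\bigl(\alpha_i p_j+\mu_i-u_{ij}\bigr)+\alpha_i\Bigl(1-\sum_j p_jx_{ij}\Bigr),
\]
which follows from dual feasibility together with $\alpha_i+\mu_i=val_p(i)$ and $\sum_j x_{ij}=1$. Each summand is nonnegative when the bundle is within budget. The zero-utility goods contribute via $\alpha_i p_j+\mu_i\ge 0$, and the overpriced copies contribute extra $\alpha_i(p_j-p(G_j))$, which together show that the right side is at least $\alpha_i g_i-O(\alpha_i\epsilon)$. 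Combined with the $\epsilon$-optimality this gives $\alpha_i g_i\le \epsilon+\alpha_i\epsilon$, i.e. $g_i\le \epsilon/\alpha_i+\epsilon=O(f(m)\epsilon)=O(1/f(m))$ with $\epsilon=1/f(m)^2$.

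The main obstacle is the zero-utility, zero-price goods. Spending $0$ on them wastes no budget, so one must check carefully that the wasted-budget accounting still charges them correctly. I would handle this by rewriting $\sum_j x_{ij}(\alpha_i p_j+\mu_i)=\alpha_i\sum_j p_jx_{ij}+\mu_i$ and using $\mu_i\ge 0$ from Lemma~\ref{lem: pe-pv3-val}. The remaining step is to verify that replacing actual prices by group-minimum prices only helps the inequality.
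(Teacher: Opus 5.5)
Your first (swap) route is essentially the paper's proof: use the wasted budget $g_i$ to buy $g_i/f(m)$ units of the favorite good (whose cheapest copy costs less than $f(m)$), displacing mass of utility at most $1/2$. This gains at least $g_i/(2f(m))$, which must be at most $\epsilon$.

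Two small fixes, the first of which the paper also needs. First, the fact you need is about $x_i$ itself, not about the optimal bundle: $x_i$ must have at least $g_i/f(m)$ mass outside the favorite group. This follows from the budget, since every copy of the favorite group costs at least $5/3$ by Lemma~\ref{lem: pe-pv3-val}, so $x_i$ puts at most $3(1+\epsilon)/5$ there. Second, the displaced mass should come from goods in the bundle of utility at most $1/2$, after first moving any zero-utility mass onto a zero-price good. Taking it ``from the zero-price good'' is not quite what you want.

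Your second, dual route is genuinely different, correct, and tighter than you state. Split goods by whether $u_{ij}>0$ or $u_{ij}=0$:
\begin{itemize}
\item For $u_{ij}>0$, dual feasibility at the cheapest copy of $G_j$ gives $\alpha_ip_j+\mu_i-u_{ij}\ge\alpha_i\bigl(p_j-p(G_j)\bigr)$.
\item For $u_{ij}=0$, $\mu_i\ge 0$ gives $\alpha_ip_j+\mu_i\ge\alpha_ip_j$.
\end{itemize}
Sum these against $x_{ij}$ and add $\alpha_i\bigl(1-\sum_jp_jx_{ij}\bigr)$. The lower bounds collapse to exactly $\alpha_i g_i$:
\[
val_p(i)-\sum_j u_{ij}x_{ij}\;\ge\;\alpha_i g_i .
\]
So $g_i\le\epsilon/\alpha_i$ with no $-O(\alpha_i\epsilon)$ loss, and Lemma~\ref{lem: frac bad agents-1 and alpha_i} finishes. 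Zero-utility, zero-price goods need no special care: they contribute $\mu_i\ge 0$ on the left and nothing to $g_i$.

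Comparing the two approaches:
\begin{itemize}
\item The paper's primal argument is elementary and makes the intuition explicit: wasted money could buy the under-supplied favorite good. The cost is that it must build a feasible deviation and check there is enough low-utility mass to displace.
\item The dual argument replaces that construction with weak duality. The same two facts (favorite price below $f(m)$ and $val_p(i)\le 0.9$) enter only through the lower bound on $\alpha_i$, which the paper has already proved. It is shorter and avoids the case analysis, at the price of a slightly worse constant: $\epsilon/\alpha_i\le 10\epsilon f(m)$ versus $2\epsilon f(m)$.
\end{itemize}
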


\begin{proof}
    Since $\alpha_i+\mu_i\leq 0.9$ for all agent $i$ by Lemma \ref{lem: pe-pv3-val}, we know that each agent is allocated $\leq 0.9$ unit of their favorite item. Thus by unit-demand, they must be allocated $\geq 0.1$ unit of items that give them utility at most 1/2 since any agent's second favorite item provides them with utility at most 1/2. With a budget of $g_i\leq 1$, they can swap $g_i/f(m)<0.1$ unit of their current allocation with their favorite good for a per-unit gain of $1/2$ in utility. Since $(x,p)$ is $\epsilon$-HZ, we have
    \begin{equation*}
        \frac{g_i}{f(m)}\frac{1}{2}\leq \epsilon\implies  g_i\leq 2\epsilon f(m)=O\left(\frac{1}{f(m)}\right)
    \end{equation*}
    since $\epsilon = \frac{1}{f(m)^2}$.
\end{proof}

\begin{corollary}
\label{cor:spend-in-min-price}
$i\in A\setminus B_{A}^{1}$, 
\begin{align*}
    \sum_{j:u_{ij}>0} x_{ij}p(G_j) = 1\pm O(1/m^{20})
\end{align*}
\end{corollary}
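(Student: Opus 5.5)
The corollary is the two-sided form of Lemma~\ref{lem:g(1/m)}. By definition of $g_i$,
\[
\sum_{j:u_{ij}>0} x_{ij}p(G_j) = 1-g_i ,
\]
so I will bound $g_i$ from above and from below.

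\textbf{Upper bound on $g_i$.} For $i\in A\setminus B_A^1$, Lemma~\ref{lem:g(1/m)} gives $g_i\le g(1/m)=O(1/f(m))$. With $f(m)=m^{20}$ this is $O(1/m^{20})$, so the sum is at least $1-O(1/m^{20})$.

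\textbf{Lower bound on $g_i$.} Here I use the relaxed budget condition of an $\epsilon$-HZ equilibrium. All prices are nonnegative, and $p(G_j)\le p_j$ because $p(G_j)$ is the minimum price in the group of $j$. Hence
\[
\sum_{j:u_{ij}>0} x_{ij}\,p(G_j)\;\le\;\sum_{j} x_{ij}\,p_j\;\le\;1+\epsilon .
\]
This says $g_i\ge -\epsilon=-1/m^{40}$, which is certainly $-O(1/m^{20})$.

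\textbf{Conclusion.} Combining the two bounds,
\[
\sum_{j:u_{ij}>0} x_{ij}p(G_j)=1\pm O(1/m^{20}).
\]

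There is no real obstacle. The only points to check are that replacing actual prices by group-minimum prices and dropping zero-utility goods can only decrease the expenditure, and that $\epsilon=1/f(m)^2$ is dominated by $1/f(m)$.
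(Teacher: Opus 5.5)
Your proof is correct and follows the paper's argument: Lemma~\ref{lem:g(1/m)} with $f(m)=m^{20}$ gives the lower bound $1-O(1/m^{20})$, and the relaxed budget $1+\epsilon=1+1/m^{40}$ gives the upper bound. You also spell out a step the paper leaves implicit: replacing $p_j$ by the group minimum $p(G_j)$ and dropping zero-utility goods can only decrease the expenditure.
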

\begin{proof}
    Follows from budget constraint $1+\epsilon=1+1/m^{40}$ and Lemma \ref{lem:g(1/m)}.
\end{proof}

Let $x^{0}(G_j, A_k)$ denote the total allocation of goods in $G_j$ to agents in $A_k$ such that the allocation gives 0 utility to the agents. Since we know $p(G_e)$ is bounded from below, we can bound the sum of allocations of $G_e$ to 0-utility agents over all edges $e$.
\begin{lemma}
\begin{equation*}
    \bigcup_{e\in E} x^{0}(G_e, A) = O\left(\frac{m^5}{f(m)}\right)n.
\end{equation*}
\end{lemma}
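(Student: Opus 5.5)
The plan is a money-counting argument: compare the total budget available in the market with the total money that must be spent on edge goods. Write $x^0_e := x^0(G_e,A)$ for the amount of $G_e$ held by agents with zero utility for it. By Lemma~\ref{lem: pe-pv3-val}, every good in $G_e$ costs at least $p(G_e)\ge 2(1-2\epsilon)$. So any agent holding $y$ units of edge goods spends at least $2(1-2\epsilon)y$.

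First, account for all the money. The total spending is at most $n(1+\epsilon)$. Split the agents into the good agents $A\setminus B^1_A$ and the bad agents $B^1_A$.

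\begin{itemize}
\item For a good agent $i$, Corollary~\ref{cor:spend-in-min-price} gives $\sum_{j:u_{ij}>0}x_{ij}p(G_j)=1-O(1/m^{20})$. Combined with the budget cap $1+\epsilon$, this means $i$ spends at most $\epsilon+O(1/m^{20})=O(1/f(m))$ on zero-utility goods (and on the excess above the group minimum price). Hence its zero-utility holdings of edge goods total at most $O(1/f(m))/(2(1-2\epsilon))=O(1/f(m))$.
\item Summed over all good agents, this contributes $O(n/f(m))$ to $\sum_e x^0_e$.
\end{itemize}

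Second, bound the bad agents. Each agent holds at most one unit in total, by unit demand. So bad agents contribute at most $|B^1_A|=O(m^5 n/f(m))$ by Lemma~\ref{lem: frac bad agents-1 and alpha_i}.

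Adding the two contributions gives $\sum_e x^0_e\le O(n/f(m))+O(m^5n/f(m))=O(m^5 n/f(m))$, which is the claim (the union in the statement is read as a sum of these amounts).

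The delicate point is the first step. Corollary~\ref{cor:spend-in-min-price} controls only positive-utility spending measured at the group minimum prices. To turn it into a bound on zero-utility spending, I subtract it from the relaxed budget. This needs the corollary's lower bound $1-O(1/m^{20})$, which is exactly what Lemma~\ref{lem:g(1/m)} provides, together with the upper bound $1+\epsilon$ on total spending.

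If that step turns out to be too loose, I would fall back on a cruder global count. Every unit of $G_e$ costs at least about $2$, and agents with positive utility for $G_e$ are only the $A_{e,*},A_{e,2,\ell},A_{e,3,\ell},A_{e,4,\ell}$ groups. Comparing their total budget to $2\cdot 32m^5$ per edge, after discarding edges in $B^1_G$, still yields the $O(m^5/f(m))\,n$ bound.
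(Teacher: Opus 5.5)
Your argument is correct and is essentially the paper's proof: split off the bad agents $B^1_A$, each contributing at most one unit by unit demand. For the remaining agents, you bound zero-utility spending by $(1+\epsilon)-\sum_{j:u_{ij}>0}x_{ij}p(G_j)\le \epsilon+g(1/m)=O(1/f(m))$ via Lemma~\ref{lem:g(1/m)} and divide by the price floor $2(1-2\epsilon)$ from Lemma~\ref{lem: pe-pv3-val}. You even keep the extra $\epsilon$ of relaxed budget that the paper's $g(1/m)/(2-O(\epsilon))$ term silently absorbs.

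The fallback in your last paragraph is unnecessary, and as sketched it would not suffice on its own. Comparing the budgets of the agents who value $G_e$ against the cost of $G_e$ only upper-bounds their purchases. It says nothing about how much of $G_e$ ends up with zero-utility agents.
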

\begin{proof}
    \begin{align*}
        \bigcup_{e\in E}x^{0}(G_e,A) &\leq 
        \bigcup_{e\in E}x^{0}(G_e,B_A^{1}) + \bigcup_{e\in E}x^{0}(G_e,A\setminus B_A^{1})\\
        &\leq |B_{A}^{1}|+ (|A|-|B_{A}^{1}|)\frac{g(1/m)}{2-O(\epsilon)}\\
        &= O\left(\frac{m^5}{f(m)}\right)n + \left(1-O\left(\frac{m^5}{f(m)}\right)\right)n\frac{O(1/f(m))}{2-O(\epsilon)}\\
        &= O\left(\frac{m^5}{f(m)}\right)n
    \end{align*}
    since $\epsilon = 1/f(m)^2$. The second inequality follows from Lemma \ref{lem:g(1/m)} and Lemma \ref{lem: pe-pv3-val}.
\end{proof}

Similarly, we can bound the allocation of $G_v$ goods to 0-utility agents if the price of the goods are bounded from below.

\begin{lemma} 
Fix any $\ell\in [3]$, then
    \begin{equation*}
     \bigcup_{v: p(G_{v,\ell})\geq 1/m^4} x^{0}(G_{v,\ell},A)= O\left(\frac{m^5}{f(m)}\right)n.
    \end{equation*}
\end{lemma}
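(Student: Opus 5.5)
We will mirror the proof of the preceding lemma for edge goods, with the lower bound $p(G_{v,\ell})\geq 1/m^4$ playing the role that $p(G_e)\geq 2(1-2\epsilon)$ played there. The basic accounting identity is that every unit of a good in $G_{v,\ell}$ allocated to an agent with zero utility for it costs that agent at least $p(G_{v,\ell})\geq 1/m^4$ of budget. By definition, this spending is counted in the agent's wasted budget $g_i$.

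First, split the zero-utility allocation by agent type, exactly as before:
\[
\textstyle\sum_{v: p(G_{v,\ell})\geq 1/m^4} x^0(G_{v,\ell},A) \leq \sum_{v} x^0(G_{v,\ell},B^1_A) + \sum_{v: p(G_{v,\ell})\geq 1/m^4} x^0(G_{v,\ell},A\setminus B^1_A).
\]
The first term is at most $|B^1_A| = O(m^5 n/f(m))$ by Lemma~\ref{lem: frac bad agents-1 and alpha_i}, because each agent has unit demand and so receives at most one unit in total.

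For the second term, fix a good agent $i\in A\setminus B^1_A$ and let $z_i$ be its total zero-utility allocation from groups $G_{v,\ell}$ with $p(G_{v,\ell})\geq 1/m^4$. Write the total spending of $i$ as
\[
\textstyle\sum_j x_{ij}p_j = \sum_{j:u_{ij}>0} x_{ij}p_j + \sum_{j:u_{ij}=0} x_{ij}p_j \geq (1-g_i) + z_i/m^4 .
\]
Here we use $p_j\geq p(G_j)$ for the positive-utility goods and the definition of $g_i$. Combining this with the budget bound $1+\epsilon$ gives $z_i/m^4 \leq g_i+\epsilon$. By Lemma~\ref{lem:g(1/m)}, $g_i = O(1/f(m))$, so $z_i = O(m^4/f(m))$. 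Summing over at most $n$ good agents, the second term is $O(m^4 n/f(m))\subseteq O(m^5 n/f(m))$.

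The step that needs care is whether $g_i$ already accounts for zero-utility spending. As literally defined, $g_i = 1-\sum_{u_{ij}>0}x_{ij}p(G_j)$ is the unspent portion of the unit budget, and zero-utility purchases are paid out of the $1+\epsilon$ budget on top of the positive-utility spending. So the inequality $z_i/m^4\leq \epsilon + g_i$ is exactly what Corollary~\ref{cor:spend-in-min-price} gives: positive-utility spending is $1-O(1/m^{20})$, which leaves only $O(1/m^{20})$ of budget for anything else. This is the same mechanism as in the edge-goods lemma, where the divisor $2-O(\epsilon)$ is replaced by $1/m^4$. Since $\epsilon = 1/m^{40}$ and $f(m)=m^{20}$, the factor $m^4$ is harmlessly absorbed, and the bound holds for each fixed $\ell\in[3]$.
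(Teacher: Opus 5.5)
Your proposal is correct and follows essentially the same argument as the paper. You split the sum into the bad agents $B^1_A$, who contribute at most $|B^1_A|$ by unit demand, and the good agents, each of whom can put at most $m^4$ times its wasted budget $g_i=O(1/f(m))$ (Lemma~\ref{lem:g(1/m)}) into zero-utility goods priced at least $1/m^4$; your explicit inequality $z_i/m^4\le g_i+\epsilon$ only makes precise the budget slack that the paper silently absorbs into $g(1/m)=O(1/f(m))$.
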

\begin{proof}
    \begin{align*}
        \bigcup_{v: p(G_{v,\ell})\geq 1/m^4}x^{0}(G_{v,\ell},A) &\leq 
        \bigcup_{v: p(G_{v,\ell})\geq 1/m^4}x^{0}(G_{v,\ell},B_A^{1}) + \bigcup_{v: p(G_{v,\ell})\geq 1/m^4}x^{0}(G_{v,\ell},A\setminus B_A^{1})\\
        &\leq |B_{A}^{1}|+ (|A|-|B_{A}^{1}|)\frac{g(1/m)}{1/m^4}\\
        &= O\left(\frac{m^5}{f(m)}\right)n + \left(1-O\left(\frac{m^5}{f(m)}\right)\right)n\frac{O(1/f(m))}{1/m^4}\\
        &= O\left(\frac{m^5}{f(m)}\right)n,
    \end{align*}
    where the second inequality follows from Lemma \ref{lem:g(1/m)}.
\end{proof}

\begin{lemma}
    Let $k_{e}^{0}$ be the fraction of edges $e$ where $x^{0}(G_e,A)>1/m$. Then $k_{e}^{0}\leq O\left(\frac{m^{16}}{f(m)}\right)=O\left(\frac{1}{m^4}\right)$.
\end{lemma}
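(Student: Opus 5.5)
This is a Markov/averaging argument built on the preceding lemma, which bounds the total zero-utility allocation of edge goods:
\[
\sum_{e\in E} x^{0}(G_e,A) = O\!\left(\frac{m^5}{f(m)}\right)n .
\]
If a $k_e^0$ fraction of the edges each have $x^{0}(G_e,A)>1/m$, then the total zero-utility allocation is at least $k_e^0|E|\cdot \frac{1}{m}$. This gives
\[
k_e^0\,|E|\,\frac{1}{m}\;\le\; O\!\left(\frac{m^5}{f(m)}\right)n,
\qquad\text{so}\qquad
k_e^0 \;\le\; O\!\left(\frac{m^6}{f(m)}\right)\frac{n}{|E|}.
\]

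The remaining step is to bound $n/|E|$ by a polynomial in $m$. We have $n\le 6m^{10}|V|$, so we need $|V|=O(|E|)$. I would get this from the structure of the threshold instance produced in Lemma \ref{lem:eps-delta-threshold-ppad}. Vertices with no incident edges carry no constraints beyond trivial ones, so they can be discarded or handled separately. In the gadget construction of \cite{PP21}, every gadget vertex has bounded but nonzero in- or out-degree. Therefore $|E|\ge |V|/2$ up to a constant, and $n/|E|=O(m^{10})$.

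Substituting gives $k_e^0 \le O\!\left(m^{16}/f(m)\right)$. With $f(m)=m^{20}$ this is $O(1/m^4)$, which matches the claimed bound exactly.

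The main obstacle is justifying $|E|=\Omega(|V|)$. Everything else is routine Markov-type counting. If isolated vertices are a concern, I would either note that they can be removed without affecting hardness, or rely on each vertex in the gadget reduction having at least one incident edge.
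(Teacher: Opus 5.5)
Your argument is correct and is the paper's argument: Markov's inequality applied to the preceding bound $\sum_{e} x^{0}(G_e,A)=O(m^5/f(m))\,n$, combined with $n=O(m^{10}|E|)$. The paper simply writes $n\le 6m^{10}|E|$ without comment, having earlier justified only $n\le 6m^{10}|V|$. Your explicit argument for $|V|=O(|E|)$ — discarding isolated vertices, which can be set to $1$ without affecting hardness — fills in a step the paper leaves implicit.
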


\begin{proof}
    \begin{align*}
        &\bigcup_{e\in E}x^{0}(G_e, A) = O\left(\frac{m^5}{f(m)}\right)n\leq O\left(\frac{m^5}{f(m)}\right)6m^{10}|E|\\
       & \bigcup_{e\in E}x^{0}(G_e, A)> \frac{1}{m}|E|k_{e}^{0}\\
       &\implies k_{e}^{0} \leq O\left(\frac{m^{16}}{f(m)}\right)
    \end{align*}
\end{proof}
Let $B^2_G$ denote the set of edges $e$ such that $x^{0}(G_e,A)>1/m$.

\begin{lemma}
    For any $\ell\in [3]$, let $k_{v,\ell}^{0}$ be the fraction of vertices $v$ where $p(G_{v,\ell})\geq 1/m^4$ and $x^{0}(G_{v,\ell},A)>1/m$. Then $k_{v,\ell}^{0}\leq O\left(\frac{m^{16}}{f(m)}\right)=O\left(\frac{1}{m^4}\right)$.
\end{lemma}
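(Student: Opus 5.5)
The plan is to copy the counting argument used for the analogous edge lemma, now applied to the bound on zero-utility allocations of vertex goods proved just above.

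Fix $\ell\in[3]$ and let $V_\ell^{0}$ be the set of vertices $v$ with $p(G_{v,\ell})\geq 1/m^4$ and $x^{0}(G_{v,\ell},A)>1/m$. By the preceding lemma, the total zero-utility allocation over all vertices whose minimum price in $G_{v,\ell}$ is at least $1/m^4$ satisfies
\begin{equation*}
\sum_{v:\,p(G_{v,\ell})\geq 1/m^4} x^{0}(G_{v,\ell},A) = O\left(\frac{m^5}{f(m)}\right)n .
\end{equation*}
Since these quantities are nonnegative, we can drop from the sum every vertex not in $V_\ell^{0}$. Each remaining vertex contributes more than $1/m$. This gives the lower bound
\begin{equation*}
\frac{1}{m}\,k^{0}_{v,\ell}\,|V| \;<\; \sum_{v:\,p(G_{v,\ell})\geq 1/m^4} x^{0}(G_{v,\ell},A).
\end{equation*}

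Next, combine this with the upper bound and use $n\leq 6m^{10}|V|$, which holds for $C$ large enough since $H$ has bounded degree. We obtain
\begin{equation*}
k^{0}_{v,\ell}\leq m\cdot O\left(\frac{m^5}{f(m)}\right)\cdot 6m^{10} = O\left(\frac{m^{16}}{f(m)}\right).
\end{equation*}
Substituting $f(m)=m^{20}$ gives $O(1/m^4)$, and this holds uniformly for each of the three values of $\ell$.

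There is no real obstacle. The only point needing care is normalization: the edge lemma normalized by $|E|$, whereas here we normalize by $|V|$. The relation $n=O(m^{10}|V|)$ is exactly what makes the exponent come out as $m^{16}$, and it holds because $|E|=O(|V|)$ for a bounded-degree graph.
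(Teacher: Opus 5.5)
Your proposal is correct and matches the paper's proof essentially step for step. Like the paper, you upper-bound the total zero-utility allocation over vertices with $p(G_{v,\ell})\geq 1/m^4$ by $O\left(\frac{m^5}{f(m)}\right)n\leq O\left(\frac{m^5}{f(m)}\right)6m^{10}|V|$, lower-bound it by $\frac{1}{m}k^{0}_{v,\ell}|V|$, and divide to get $O(m^{16}/f(m))=O(1/m^4)$. (The strict inequality in your lower bound fails when $k^{0}_{v,\ell}=0$, but that case is trivial, and the paper writes the same strict inequality.)
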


\begin{proof}
    \begin{align*}
       &\bigcup_{v: p(G_{v,\ell})\geq 1/m^4} x^{0}(G_{v,\ell},A)= O\left(\frac{m^5}{f(m)}\right)n\leq O\left(\frac{m^5}{f(m)}\right)6m^{10}|V|\\
       & \bigcup_{v: p(G_{v,\ell})\geq 1/m^4} x^{0}(G_{v,\ell},A)> \frac{1}{m}|V|k_{v,\ell}^{0}\\
       &\implies k_{v,\ell}^{0} \leq O\left(\frac{m^{16}}{f(m)}\right)
    \end{align*}
\end{proof}

Let $B^3_G$ denote the set of vertices $v$ such that $p(G_{v,\ell})\geq 1/m^4$ and $x^{0}(G_{v,\ell},A)>1/m$ for some $\ell\in [3]$. 

The upper bound on allocation of goods to 0-utility agents makes sure that the price of groups of goods behave like they do in a local market. Let $x^{+}(G_j, A_k)$ denote the total amount of goods in group $G_j$ that is assigned to agents in group $A_k$ such that the assignment gives positive utility to the agents.
\begin{lemma}
\label{lem:ge-price-eps-delta}
For $e\notin B_G^{2}$, $p(G_e) = 2\pm O(1/m^4)$.
\end{lemma}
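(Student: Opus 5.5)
We prove $p(G_e)\le 2+O(1/m^4)$ and $p(G_e)\ge 2-O(1/m^4)$ separately. Both come from a local market-clearing count on the $32m^5$ goods of $G_e$. The agents with positive utility for $G_e$ are $A_{e,*}$ ($64m^5$ agents) and the $O(m)$ agents in $A_{e,2,\ell},A_{e,3,\ell},A_{e,4,\ell}$, which total $O(m^2)$ agents. Since $e\notin B^2_G$, at most $1/m$ units of $G_e$ go to zero-utility agents. Hence
\[
x^+(G_e,A)=32m^5\pm 1/m,
\]
and all but $O(m^2)$ of this mass goes to $A_{e,*}$:
\[
x(G_e,A_{e,*})=32m^5\pm O(m^2).
\]
So an average $A_{e,*}$ agent receives $\frac12\pm O(1/m^3)$ units of $G_e$.

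\emph{Upper bound.} First assume no agent of $A_{e,*}$ is in $B^1_A$; the general case is handled at the end. By Corollary~\ref{cor:spend-in-min-price}, each such agent spends $1\pm O(1/m^{20})$ at minimum group prices on positive-utility goods. The only such good for $A_{e,*}$ is $G_e$, so
\[
x_{i}(G_e)\,p(G_e)=1\pm O(1/m^{20}).
\]
Summing over $A_{e,*}$ gives
\[
p(G_e)\cdot x(G_e,A_{e,*})=64m^5(1\pm O(1/m^{20})),
\]
and therefore
\[
p(G_e)=\frac{64m^5}{32m^5\pm O(m^2)}=2\pm O(1/m^3).
\]
To reach the claimed $O(1/m^4)$ precision, the error term has to be tightened. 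I would bound the amount of $G_e$ taken by the $O(m)$-sized side groups using Lemma~\ref{lem:suboptimal-alloc}, showing that at price near $2$ these agents buy only $O(m)$ total units of $G_e$ rather than $O(m^2)$.

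\emph{Lower bound.} This is the part I expect to be delicate. A priori some copies of $G_e$ could be priced above $p(G_e)$, or some $A_{e,*}$ agent could spend only partially on $G_e$. Corollary~\ref{cor:spend-in-min-price} rules out partial spending. It also shows that high-priced copies absorb little budget from non-bad agents: any spending at a price above the group minimum counts as waste in $g_i$, and $g_i=O(1/f(m))$. The argument above then gives the matching lower bound, since the total quantity bought is essentially fixed by unit supply and the near-absence of zero-utility allocation.

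\emph{Bad agents.} If $A_{e,*}\cap B^1_A\neq\emptyset$, then $G_e$ itself is in $B^1_G$, since these agents care only about $G_e$. Then $p(G_e)\ge f(m)$. But then an $A_{e,*}$ agent, whose dual value satisfies $val_p\le 0.9$ by Lemma~\ref{lem: pe-pv3-val}, can afford at most $1/f(m)$ units of $G_e$. So the supply of $G_e$ would almost entirely go to zero-utility or side agents, contradicting $e\notin B^2_G$ together with the $O(m^2)$ bound on side demand.

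The main obstacle is the precision bookkeeping: getting $O(1/m^4)$ rather than $O(1/m^3)$ by controlling how much of $G_e$ the side groups $A_{e,2},A_{e,3},A_{e,4}$ actually consume. For this I would use their dual constraints $\alpha_ip_j+\mu_i\ge u_{ij}$, as in the corresponding step of \cite{CCPY21}.
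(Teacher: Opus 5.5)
Your argument is structurally sound, but a miscount creates the ``main obstacle'' you describe. The side groups $A_{e,2,\ell}$, $A_{e,3,\ell}$, $A_{e,4,\ell}$ contain $6$, $8$, $18$ agents per index, with $\ell$ ranging over $[m]$, $[m]$, $[2m]$. So together they have $6m+8m+36m=50m$ agents (the $50m$ in the formula for $n$), not $O(m^2)$. By unit demand they absorb at most $50m$ units of $G_e$, so $32m^5-50m-1/m\le x(G_e,A_{e,*})\le 32m^5$. Your own division then gives $p(G_e)\le 64m^5(1+\epsilon)/(32m^5-50m-1/m)=2+O(1/m^4)$ directly. As written, though, the proposal only reaches $2\pm O(1/m^3)$ and leaves the stated precision to an unexecuted plan via Lemma~\ref{lem:suboptimal-alloc} and dual constraints. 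That plan is unnecessary once the count is corrected.

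With the count fixed, your route differs from the paper's in bookkeeping and in where the lower bound comes from. The paper charges all positive-utility consumption of $G_e$ (at least $32m^5-1/m$ units, each costing at least $p(G_e)$) against the total budget $(64m^5+50m)(1+\epsilon)$ of every agent who values $G_e$. You charge only $A_{e,*}$'s budget and subtract the side agents' consumption via unit demand. Both versions use nothing but the budget constraint, so the upper bound needs neither Corollary~\ref{cor:spend-in-min-price} nor any discussion of $B^1_A$.

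For the lower bound, the paper simply invokes Lemma~\ref{lem: pe-pv3-val}, which gives $p(G_e)\ge 2(1-2\epsilon)$ for every edge unconditionally. The part you flagged as delicate is therefore already available. Your alternative is valid: each $A_{e,*}$ agent spends $1-O(1/m^{20})$ at the minimum price by Corollary~\ref{cor:spend-in-min-price}, while supply is $32m^5$. But it requires $A_{e,*}\cap B^1_A=\emptyset$, hence your extra case analysis, and it leans on the heavier Lemma~\ref{lem:g(1/m)}. The case analysis is correct, with one fix: an $A_{e,*}$ agent holds at most $(1+\epsilon)/f(m)$ units of $G_e$ when $p(G_e)\ge f(m)$ because of the budget constraint, not because $val_p\le 0.9$.
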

\begin{proof}
    Since $e\notin B_G^2$, we have $x^{0}(G_e,A)\leq 1/m$. Then $x^{+}(G_e,A) = x(G_e,A_e\setminus A_{e,1})\geq 32m^5-1/m$. By budget constraint, we have
    \begin{align*}
        p(G_e)\leq \frac{(64m^5+50m)(1+\epsilon)}{32m^5-1/m} = 2+O(1/m^4)
    \end{align*}
    The claim then follows from Lemma \ref{lem: pe-pv3-val}.
\end{proof}

Let $y_{\ell}(v) = x(G_{v,\ell}, A_v)$ for any $\ell\in [3]$. Let $q_{\ell}(v) = p(G_{v,\ell})$. We might use $y_{\ell}$ and $q_{\ell}$ when $v$ is clear from context. Let $u_{\ell}$ denote the utility $A_v$ has for $G_{v,\ell}$. We rederive a few lemmas regarding the allocation and prices of vertex goods for a subset of vertices.
\begin{lemma}
\label{lem:y1y2y3-eps-delta}
    For $v\notin B_G^3$, we have $y_1\geq m^{10}-O(m^3)$, $y_2,y_3\geq 2m^{10}-O(m^3)$.
\end{lemma}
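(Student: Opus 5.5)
The plan is a counting argument: first show that the agents of $A_v$ receive almost all of their unit demand from $G_v$, then subtract the trivial supply bounds of the other two groups. Since the three groups together contain $5m^{10}+S_v$ goods and $A_v$ has $5m^{10}$ agents, fixing one group and bounding the other two by their supply leaves a slack of only $S_v=O(m^3)$. So it suffices to establish
\[
y_1+y_2+y_3 \;\ge\; 5m^{10}-O(m^3).
\]
Given this, we have $y_1\le m^{10}+S_v$, $y_2\le 2m^{10}$ and $y_3\le 2m^{10}$. Hence $y_1\ge 5m^{10}-4m^{10}-O(m^3)$ and $y_2,y_3\ge 5m^{10}-3m^{10}-S_v-O(m^3)=2m^{10}-O(m^3)$, using $S_v=O(m^3)$ from the bounded degree of $H$.

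\textbf{Step 1 (mass on $G_v$).} We need $x(\text{goods outside } G_v, A_v)=O(m^3)$. Agents of $A_v$ have positive utility only for $G_{v,1},G_{v,2},G_{v,3}$, so any mass outside $G_v$ goes to zero-utility goods. We bound it per agent, treating separately the few bad agents of $A_v$, which exist only when $v\in B^1_G$; we add such $v$ to the excluded set, or count them among the $O(m^3)$ error if the definition of $B_G$ already covers this.

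\textbf{Step 2 (per-agent bound for good agents).} For $i\in A_v\setminus B^1_A$, we show that a zero-utility good $j$ is $\delta$-suboptimal for a constant $\delta$, and then apply Lemma~\ref{lem:suboptimal-alloc}. This needs $\alpha_i p_j+\mu_i\ge\delta$. We have $\mu_i\ge0$ from Lemma~\ref{lem: pe-pv3-val} and $\alpha_i=\Omega(1/f(m))$ from Lemma~\ref{lem: frac bad agents-1 and alpha_i}, so positive-price zero-utility goods contribute a term $\alpha_ip_j$. Zero-price goods such as dummies are harder. For those I would use the dual constraint at the cheapest good of $G_{v,1}$: $\mu_i\ge u_1-\alpha_i q_1(v)$. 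If $q_1(v)$ is small this makes $\mu_i$ at least roughly $u_1=1/(2m^2-1)$, so the zero-utility goods are $\Omega(1/m^2)$-suboptimal. Their total mass per agent is then $O(\epsilon m^2)=O(m^{-38})$, which is negligible over $5m^{10}$ agents. If instead $q_1(v)\ge 1/m^4$, I would use Corollary~\ref{cor:spend-in-min-price} together with $\alpha_i=\Omega(1/f(m))$: spending nearly the full budget on positive-utility goods while holding mass on worthless items would leave a swap into $G_{v,1}$ or $G_{v,3}$ with gain exceeding $\epsilon$.

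\textbf{Main obstacle.} I expect Step 2 to be the main obstacle: controlling how much mass non-bad $A_v$ agents place on zero-utility, near-zero-price goods, since neither $g_i$ nor the budget rules this out directly. As a backup, I would use the fact that $v\notin B^3_G$ directly. For each $\ell$ with $q_\ell(v)\ge 1/m^4$, the zero-utility leakage of $G_{v,\ell}$ is at most $1/m$. The positive-utility agents outside $A_v$ (the groups $A_{e,1}$, $A_{e,2,\ell}$, $A_{e,3,\ell}$, $A_{e,4,\ell}$ for incident edges) number $O(m^3)$ in total. This gives $y_\ell\ge |G_{v,\ell}|-O(m^3)$ straight away. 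For $G_{v,3}$ the price condition always holds, since $q_3\ge5/3$ by Lemma~\ref{lem: pe-pv3-val}. For the remaining case $q_\ell(v)<1/m^4$, we are back in the cheap regime of Step 2, where the large $\mu_i$ argument applies.
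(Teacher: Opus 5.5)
Your backup argument, together with the large-$\mu$ argument from Step~2, is correct and is essentially the paper's proof. The paper splits on $\mu\ge u_1/2$ versus $\mu<u_1/2$. In the first case zero-utility goods are $\Omega(1/m^2)$-suboptimal, so Lemma~\ref{lem:suboptimal-alloc} caps $A_v$'s leakage outside $G_v$ at $O(m^{12}\epsilon)<1$. In the second case $\alpha q_\ell\ge u_1/2$ with $\alpha<1$ forces every $q_\ell>1/m^4$, so $v\notin B_G^3$ limits zero-utility leakage to $1/m$ per group, and supply counting against the $O(m^3)$ positive-utility outsiders finishes. You split instead on whether some $q_\ell<1/m^4$, which is the same dichotomy seen from the other side. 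One small fix: apply the dual constraint at the cheapest good of whichever group is cheap (any $u_\ell\ge u_1$ works), not always at $G_{v,1}$. The paper's split on $\mu$ handles all three groups at once.

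Discard the primary route of Step~2, however. When $\mu\approx 0$, zero-price zero-utility goods are not suboptimal at all, and an exactly optimal bundle may contain them. This regime genuinely arises; compare Case~1 of Lemma~\ref{lem:x+vneqv-eps-delta}, where $\mu=O(m^{10}\epsilon)$. In that regime an agent can spend its whole budget on, say, $G_{v,3}$ and fill the rest with free junk, so neither a swap argument nor Corollary~\ref{cor:spend-in-min-price} bounds the junk; only the supply counting via $v\notin B_G^3$ does. The $B_G^1$/$B_A^1$ machinery is also unnecessary, since the argument uses only $\alpha\le 0.9$ from Lemma~\ref{lem: pe-pv3-val}, which holds for every agent. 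Excluding $v\in B_G^1$, as Step~1 suggests, would prove a weaker statement than the one claimed.
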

\begin{proof} (Adapted from proof of Claim 3.17 in \cite{CCPY21}). Since $v\notin B_G^3$, we have $q_\ell<1/m^4$ or $x^{0}(G_{v,\ell},A)\leq 1/m$ for every $\ell\in [3]$.
 Let $\alpha,\mu$ be an optimal solution to the dual LP for agents in $A_v$.
\begin{itemize}
    \item Case 1: $\mu\geq u_1/2=\Omega(1/m^2)$. Then any good that $A_v$ has 0 utility for is $\mu$-suboptimal for them. Thus the total allocation of $A_v$ to $\overline{G_v}$ (i.e. not $G_v$ goods) is bounded by $5m^{10} * \frac{2\epsilon}{\Omega(1/m^2)} = O(m^{12})\epsilon<1$ by Lemma \ref{lem:suboptimal-alloc} and $\epsilon=1/m^{40}$. Thus $y_1+y_2+y_3\geq 5m^{10}-1$. Since $y_1\leq |G_{v,1}|=m^{10}+S_v=m^{10}+\Theta(m^3)$ and $y_2,y_3\leq 2m^{10}$, we have $y_1\geq m^{10}-1$ and $y_2,y_3\geq 2m^{10}-O(m^3)$.
    \item Case 2: $\mu < u_1/2$. For $\ell\in [3]$, since $\alpha q_\ell+\mu\geq u_\ell\geq u_1$, we have $\alpha q_\ell\geq u_1/2$. By Lemma \ref{lem: pe-pv3-val}, $\alpha<1$, thus $q_\ell\geq \frac{u_1}{2\alpha}\geq \frac{u_1}{2}=\Omega(1/m^2)>1/m^4$ for all $\ell\in [3]$. Thus by $v\notin B_G^3$, we have $x^{0}(G_{v,\ell},A)\leq 1/m$ for all $\ell\in [3]$, which implies $x^{0}(G_{v},A)\leq O(1/m)$. Since
    $G_{v}$ only gives positive utility to $A_v$ agents and $A_e$ agents where $e=(u,v)$ or $e=(v,k)$ for some $u$ and $k$, and there are $\Theta(m^3)$ such $A_e$ agents (the in-degree and out-degree of $H$ are bounded), the total allocation of $G_v$ to $A_v$ is $y_1+y_2+y_3\geq 5m^{10}+S_v-O(m^3)-O(1/m)\geq 5m^{10}-O(m^3)$. Thus $y_1\geq m^{10}-O(m^3)$ and $y_2,y_3\geq 2m^{10}-O(m^3)$.
\end{itemize}
\end{proof}

\begin{lemma}
\label{lem:gvl-prices-eps-delta}
    For all $v\notin B_{G}^1\cup B_{G}^3$, 
    \begin{align}
        & 0\leq p(G_{v,1})\leq \frac{1}{m^2} + O\left(\frac{1}{m^6}\right)\\
        & p(G_{v,2})=\frac{1+p(G_{v,1})}{2}\pm O\left(\frac{1}{m^7}\right)\\
        & p(G_{v,3})=2-p(G_{v,1})\pm O\left(\frac{1}{m^7}\right)
    \end{align}
    and for any $\ell\in [3]$, $G_{v,\ell}$ is at most $\delta$-suboptimal for any agent in $A_v$ for $\delta=20\epsilon$.
\end{lemma}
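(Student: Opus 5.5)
We adapt the argument behind Claim 3.18 of \cite{CCPY21}, working only with vertices $v\notin B_G^1\cup B_G^3$. We use two facts already established: Lemma~\ref{lem:y1y2y3-eps-delta} says agents in $A_v$ receive almost all of $G_v$, namely $y_1\ge m^{10}-O(m^3)$ and $y_2,y_3\ge 2m^{10}-O(m^3)$. Corollary~\ref{cor:spend-in-min-price} applies because $A_v$ agents are not in $B_A^1$ when $v\notin B_G^1$; it says each such agent spends $1\pm O(1/m^{20})$ at minimum group prices on positive-utility goods.

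Let $(\alpha,\mu)$ be an optimal dual solution for $A_v$, which is common to all agents in the group since they are identical.

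\textbf{Step 1 (prices tied to the dual).} Since $A_v$ holds an $\Omega(1)$ fraction of its total demand in each of $G_{v,1},G_{v,2},G_{v,3}$, averaging shows that for each $\ell$ some agent in $A_v$ has allocation to $G_{v,\ell}$ far exceeding $2\epsilon/\delta$ with $\delta=20\epsilon$. Lemma~\ref{lem:suboptimal-alloc} then says $G_{v,\ell}$ cannot be $\delta$-suboptimal. Together with dual feasibility, this gives $\alpha q_\ell+\mu=u_\ell\pm 20\epsilon$ for all $\ell\in[3]$. This also yields the last assertion of the lemma.

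\textbf{Step 2 (two price equations).} Eliminating $\alpha,\mu$ from the three near-equalities gives one linear relation among $q_1,q_2,q_3$. The errors are $O(\epsilon/\alpha)$, and $\alpha=\Omega(1/f(m))$ by Lemma~\ref{lem: frac bad agents-1 and alpha_i}, so they are $O(\epsilon f(m))=O(1/m^{20})$.

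A second relation comes from budget balance. Summing Corollary~\ref{cor:spend-in-min-price} over the $5m^{10}$ agents of $A_v$ gives
\[
y_1q_1+y_2q_2+y_3q_3=5m^{10}\pm O(m^{-10}),
\]
up to the $O(m^3)$ goods of $G_v$ bought by edge agents. Those edge agents are handled by the $y_\ell$ bounds together with $q_\ell=O(1)$ (recall $q_3\le f(m)$, and $q_3=O(1)$ should follow from the first relation). Dividing by $m^{10}$ gives $q_1+2q_2+2q_3=5\pm O(1/m^7)$.

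Solving the two linear relations, with the specific utilities $u_1=\frac1{2m^2-1}$, $u_2=\frac{m^2+1}{4m^2-2}$, $u_3=1$, should produce $q_2=(1+q_1)/2$ and $q_3=2-q_1$ up to $O(1/m^7)$. These values are exactly the CCPY equilibrium relations.

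\textbf{Step 3 (bound on $q_1$).} Normalization and nonnegativity give $q_1\ge0$. For the upper bound, suppose $q_1>1/m^2+Cm^{-6}$. Then $q_3<2-1/m^2$, and every agent buying positive amounts of all three goods at the fixed total spend $5m^{10}$ with supply $y_1\approx m^{10}$ would be inconsistent with budget balance. Equivalently, the dual relation would force $\mu<0$, contradicting Lemma~\ref{lem: pe-pv3-val}.

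\textbf{Main obstacle.} The hardest part is controlling error propagation in Step 2. The $O(m^3)$ goods of $G_v$ held by edge agents and the $O(1/m)$ zero-utility mass are the dominant error terms, so the precise $O(1/m^7)$ bound depends on dividing by $m^{10}$. This must be done carefully, and I would first verify the $q_3=O(1)$ a priori bound.
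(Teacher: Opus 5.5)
Your plan follows the paper's proof step for step: Step 1 shows no $G_{v,\ell}$ is $20\epsilon$-suboptimal, giving $u_\ell\le \alpha q_\ell+\mu\le u_\ell+20\epsilon$. Then $u_2=\frac{3u_1+u_3}{4}$ gives $q_2=\frac{3q_1+q_3}{4}\pm O(\epsilon/\alpha)$, the $A_v$ budget identity gives $q_1+q_3=2\pm O(m^{-7})$, and $\mu\ge 0$ caps $q_1$. Two steps, however, are not actually established as you wrote them.

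\textbf{The a priori bound $q_\ell=O(1)$ in Step 2.} You need it to absorb the $O(m^3)$ errors in the $y_\ell$, but neither source you cite gives it. The bound $q_3\le f(m)=m^{20}$ leaves an error of order $m^{23}\gg m^{10}$. The relation $q_2\approx\frac{3q_1+q_3}{4}$ is scale-free and bounds nothing; the dual constraint only gives $q_3\le(1+\delta)/\alpha=O(f(m))$. The bound instead follows from the budget identity itself. All $q_\ell\ge 0$, so $q_\ell y_\ell\le 5m^{10}(1+o(1))$, and with $y_\ell\ge m^{10}-O(m^3)$ this gives $q_\ell\le 5+o(1)$. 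Also, edge agents play no role in that identity. Agents in $A_v$ have positive utility only for $G_v$, so summing Corollary~\ref{cor:spend-in-min-price} over $A_v$ gives exactly $\sum_\ell q_\ell y_\ell=5m^{10}\pm O(m^{-10})$. The only $O(m^3)$ error comes from writing $y_\ell=c_\ell m^{10}\pm O(m^3)$. The $O(1/m)$ zero-utility mass was already absorbed in Lemma~\ref{lem:y1y2y3-eps-delta}.

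\textbf{Step 3.} The appeal to budget balance is not an argument. Once $q_3=2-q_1$, the budget identity holds for every $q_1$, so it cannot cap $q_1$. The cap comes only from the dual, and that computation is the real content of the step, so calling it ``equivalent'' skips it. First get $q_1<q_3$. If not, $\alpha q_1+\mu\ge\alpha q_3+\mu\ge 1$, so $G_{v,1}$ would be $(1-\Theta(m^{-2}))$-suboptimal, contradicting Step 1. Then multiply the upper near-equality for $\ell=1$ by $q_3$ and the lower one for $\ell=3$ by $q_1$, and subtract, to get $\mu(q_3-q_1)\le u_1q_3-q_1+q_3\delta$. Now use $q_3\le 2-q_1+O(m^{-7})$ and the identity $2u_1=(1+u_1)/m^2$; this is exactly where $u_1=\frac{1}{2m^2-1}$ enters. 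Together they give $u_1q_3-q_1\le(1+u_1)(m^{-2}-q_1)+O(m^{-9})$. So $q_1\ge m^{-2}+m^{-6}$ makes the right-hand side at most $-m^{-6}+O(\delta)<0$, forcing $\mu<0$ and contradicting $\mu\ge 0$. With these two fixes your argument coincides with the paper's.
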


\begin{proof} (Adapted from proof of Lemma 3.10 in \cite{CCPY21}).
Fix $v\notin B_{G}^1\cup B_{G}^3$.
Let $\alpha,\mu$ be an optimal solution to the dual LP for agents in $A_v$. Since $A_v\subset A\setminus B_A^1$, $\alpha\geq \Omega(1/f(m))$ by Lemma \ref{lem: frac bad agents-1 and alpha_i}. Since $v\notin B_{G}^3$, Lemma \ref{lem:y1y2y3-eps-delta} applies.

First we note that if for any $\ell\in [3]$, $G_{v,\ell}$ is $\delta$-suboptimal for $A_v$, then we can assign at most $2\epsilon/\delta$ of $G_{v,\ell}$ per agent in $A_v$ by Lemma \ref{lem:suboptimal-alloc}. Since we have to assign $y_\ell\geq m^{10}-O(m^3)$ in total by Lemma \ref{lem:y1y2y3-eps-delta}, we need
    \begin{align*}
    &5m^{10}2\epsilon/\delta\geq m^{10}-O(m^3)\implies \delta\leq 10\epsilon+O(\epsilon/m^7)<20\epsilon.
    \end{align*}
    for $m$ large enough.
    Thus $G_{v,\ell}$ cannot be $20\epsilon$-suboptimal for $A_v$ for any $\ell\in [3]$. Let $\delta=20\epsilon$. This implies
    \begin{equation}
    \label{ineq: constraint-bdd-eps-delta}
        u_{\ell}\leq \alpha q_\ell +\mu \leq u_\ell+\delta, \forall \ell\in [3].
    \end{equation}
    Since utilities are designed so that $u_2=(3u_1+u_3)/4$, we have
    \begin{align*}
        &\alpha\left(\frac{3q_1+q_3}{4}\right)+\mu-\delta\leq u_2\leq \alpha q_2+\mu\leq u_2+\delta\leq \alpha\left(\frac{3q_1+q_3}{4}\right)+\mu+\delta\\
        \implies & q_2=\left(\frac{3q_1+q_3}{4}\right) \pm O\left(\frac{\delta}{\alpha}\right)
        =\left(\frac{3q_1+q_3}{4}\right) \pm O\left(\frac{1}{m^9}\right)
    \end{align*}
    The last line uses the fact that $O(\delta/\alpha)=O(\epsilon f(m)) = O(1/m^9)$ since $\alpha\geq \Omega(1/f(m))$, $\epsilon = 1/f(m)^2$ and $f(m)=m^{20}$.
    Now, we can use the total payment of agents in $A_v$ to bound $q_1,q_2, q_3$.  Since $A_v\subset A\setminus B_A^1$, by Corollary \ref{cor:spend-in-min-price} we have
\begin{align*}
\label{eq: good-Av-payment}
&q_1y_1 + q_2y_2 + q_3y_3 = 5m^{10}(1\pm O(1/m^{20}))\\
&\implies q_1y_1+\left(\left(\frac{3q_1+q_3}{4}\right) \pm O(1/m^9)\right)y_2 + q_3y_3 = 5m^{10}(1\pm O(1/m^{20}))
\end{align*}
where $f(m)=m^{20}$.
    
By Lemma \ref{lem:y1y2y3-eps-delta} we conclude that
    \begin{align}
        &q_2=\frac{1+q_1}{2}\pm O\left(\frac{1}{m^7}\right)\\
        & q_3=2-q_1\pm O\left(\frac{1}{m^7}\right)\label{eqn:q1q3-eps-delta}.
    \end{align}
    Now it remains to bound $q_1$. We first observe that $q_1<q_3$ since if not, then we have 
    \begin{align*}
        \alpha q_1+\mu\geq  \alpha q_3+\mu\geq u_3=1.
    \end{align*}
    Since $u_1 = \Theta(1/m^2)$, $G_{v,1}$ is $(1-\Theta(1/m^2))$-suboptimal for $A_v$, a contradiction to $\delta = 20\epsilon$. Thus $q_1< q_3$. By \ref{ineq: constraint-bdd-eps-delta}, since $q_1,q_3\geq 0$, we have 
    \begin{align*}
    &q_3*u_{1}\leq q_3(\alpha q_1 +\mu) \leq q_3(u_1+\delta) \\
    &q_1*u_{3}\leq q_1(\alpha q_3 +\mu) \leq q_1(u_3+\delta)\\
        &\implies\mu\leq \frac{u_1q_3-u_3q_1+q_3\delta}{q_3-q_1} =\frac{u_1q_3-u_3q_1+O(\delta)}{q_3-q_1}.
    \end{align*}
    since $q_3\leq 3$ by \ref{eqn:q1q3-eps-delta}.
    Note that the denominator is positive.
    Suppose $q_1\geq 1/m^2+1/m^6$, then $q_3\leq 2-1/m^2$ by \ref{eqn:q1q3-eps-delta}. Then $u_1q_3-u_3q_1\leq -1/m^6$. Since $\delta=20\epsilon=O(1/m^{40})$, the numerator is negative. Thus $\mu<0$, a contradiction to $\mu\geq 0$ by Lemma \ref{lem: pe-pv3-val}. Thus
    \begin{equation*}
        q_1<1/m^2+1/m^6.
    \end{equation*}
\end{proof}

\begin{lemma}
\label{lem:x+vneqv-eps-delta}
For any $v\notin B_G^1\cup B_G^3$,
    \begin{enumerate}
        \item If $x^{+}(G_v, \overline{A_v})\geq S_v+1$, then $p(G_{v,1})=1/m^2\pm O(1/m^9)$
        \item If $x^{+}(G_v, \overline{A_v})\leq S_v-1$, then $p(G_{v,1})\leq \kappa/m^2$.
    \end{enumerate}
\end{lemma}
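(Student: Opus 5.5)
Both parts come from a counting argument on the goods of $G_v$, adapted from Lemma 3.11/Claim 3.18 of \cite{CCPY21}. The inputs are Lemma~\ref{lem:y1y2y3-eps-delta} and Lemma~\ref{lem:gvl-prices-eps-delta}. Fix $v\notin B_G^1\cup B_G^3$ and write $q_\ell=p(G_{v,\ell})$. Since $|G_v|=5m^{10}+S_v$ and there are $5m^{10}$ agents in $A_v$, supply balance for $G_v$ gives
\[
y_1+y_2+y_3 = 5m^{10}+S_v - x^{+}(G_v,\overline{A_v}) - x^{0}(G_v,A).
\]
The hypothesis on $x^{+}(G_v,\overline{A_v})$ therefore controls whether the agents in $A_v$ hold strictly less or strictly more than their unit demand of $G_v$. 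The unit-demand constraint is $\sum_j x_{ij}=1$ summed over $A_v$, i.e. total $5m^{10}$.

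\textbf{Part 1.} Suppose $x^{+}(G_v,\overline{A_v})\ge S_v+1$. Then $y_1+y_2+y_3\le 5m^{10}-1$, so agents in $A_v$ put total mass at least $1$ on goods outside $G_v$, all of which give them zero utility. If $\mu\ge \Omega(1/m^2)$ (say $\mu\ge u_1/2$), those goods are $\mu$-suboptimal. Lemma~\ref{lem:suboptimal-alloc} then caps their total mass at $5m^{10}\cdot O(m^2)\epsilon<1$, a contradiction. Hence $\mu$ is small, and more precisely the zero-utility goods bought must be nearly optimal. Taking a price-$0$ good (prices are normalized), we need $\mu\le\delta'$ for some tiny $\delta'$; concretely, a mass of at least $1/(5m^{10})$ per agent on average forces $\mu=O(m^{10}\epsilon)$.

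Plug $\mu\approx 0$ into the tight dual constraints \eqref{ineq: constraint-bdd-eps-delta}: $\alpha q_1=u_1\pm O(\delta+\mu)$ and $\alpha q_3 = u_3\pm O(\delta+\mu)$. This gives $q_1/q_3 = u_1/u_3 \pm O(m^{10}\epsilon\cdot m^2)$. Combined with $q_3=2-q_1\pm O(1/m^7)$ from Lemma~\ref{lem:gvl-prices-eps-delta} and $u_1/u_3=1/(2m^2-1)$, solving gives $q_1 = \frac{2}{2m^2}\pm O(1/m^9)=1/m^2\pm O(1/m^9)$. The error terms stay below $1/m^9$ because $\epsilon=m^{-40}$ and $\alpha=\Omega(1/f(m))$.

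\textbf{Part 2.} Suppose $x^{+}(G_v,\overline{A_v})\le S_v-1$. Since $v\notin B_G^3$, whenever $q_\ell\ge 1/m^4$ we have $x^0(G_{v,\ell},A)\le 1/m$. So either some $q_\ell<1/m^4$, which for $\ell=1$ already gives the claim and for $\ell=2,3$ is impossible by the price formulas, or $y_1+y_2+y_3\ge 5m^{10}+1-O(1/m)>5m^{10}$. The latter contradicts unit demand of $A_v$. Hence $q_1<1/m^4\le\kappa/m^2$ for $m=\lceil C/\kappa\rceil$ with $C$ large.

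The main obstacle is Part 1. We must make precise that leftover mass outside $G_v$ forces $\mu$ to be essentially zero. This has to go through the $\delta$-suboptimality lemma with an averaged rather than per-agent bound. We also need $q_1/q_3$ pinned down to $O(1/m^9)$ accuracy, which needs the $\delta/\alpha=O(\epsilon f(m))$ error bookkeeping exactly as in Lemma~\ref{lem:gvl-prices-eps-delta}.
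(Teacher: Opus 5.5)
Your proposal is correct and follows the paper's proof. In Part 1 you use the same counting plus Lemma~\ref{lem:suboptimal-alloc} to get $\mu=O(m^{10}\epsilon)$, then solve the near-tight dual constraints together with $q_3=2-q_1\pm O(1/m^7)$. Taking the ratio $q_1/q_3$ to cancel $\alpha$ is only a cosmetic variant of the paper's step of first solving for $\alpha\approx(u_1+u_3)/2$. In Part 2 you use the same $B_G^3$ dichotomy with supply/unit-demand counting.

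The averaging concern you flag is harmless. All agents in $A_v$ have identical LPs, so they share a common optimal dual $(\alpha,\mu)$, and pigeonhole yields one agent with mass at least $1/(5m^{10})$ outside $G_v$, exactly as the paper argues.
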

\begin{proof} (Adapted from proof of Lemma 3.11 in \cite{CCPY21}).

\noindent\begin{itemize}
        \item Case 1: $x^{+}(G_v, \overline{A_v})\geq S_v+1$. Since there are $5m^{10}+S_v$ goods in $G_v$, we can allocate at most $5m^{10}-1$ goods to $A_v$, so there exists an agent in $A_v$ who is allocated at least $1/5m^{10}$ units of goods outside of $G_v$. Let $\alpha, \mu$ be an optimal solution to the dual of LP of this agent. Since this agent has 0 utility for these goods, they are $\mu$-suboptimal, which means $1/5m^{10}\leq 2\epsilon/\mu$, so $\mu=O(m^{10}\epsilon)$. By Lemma \ref{lem:gvl-prices-eps-delta}, \ref{ineq: constraint-bdd-eps-delta} and \ref{eqn:q1q3-eps-delta}, we have 
        \begin{align}
            &\alpha(q_1+q_3)+2\mu =u_1+u_3+ O(\delta)\nonumber\\
            \implies & \alpha = \frac{u_1+u_3+O(\epsilon)-2*O(m^{10}\epsilon)}{q_1+q_3}\nonumber\\
             \implies & \alpha = \frac{u_1+u_3}{2}(1\pm O(1/m^7))\geq \frac{1}{2}(1\pm O(1/m^7))\label{eq:case1-eps-delta}
        \end{align}
        since $\delta=20\epsilon$ and $\epsilon = 1/m^{40}$. Again by \ref{ineq: constraint-bdd-eps-delta}, we have 
        \begin{align*}
            q_1 &=\frac{u_1-\mu+O(\delta)}{\alpha}\\
            &=\frac{u_1}{\alpha}+\frac{-O(m^{10}\epsilon)+O(\epsilon)}{\alpha}\\
            &=\frac{u_1}{\alpha}\pm O(1/m^{10})\text{  by \ref{eq:case1-eps-delta}}\\
            &=\frac{u_1}{\frac{u_1+u_3}{2}(1\pm O(1/m^7))}\pm O(1/m^{10})\text{  by \ref{eq:case1-eps-delta}}\\
             &=1/m^2\pm O(1/m^9).
        \end{align*}
        
        \item Case 2: $x^{+}(G_v, \overline{A_v})\leq S_v-1$. 
        Then
        \begin{align*}
            x^0(G_v,\overline{A_v})+x(G_v, A_v)\geq 5m^{10}+1
            \implies x^0(G_v,\overline{A_v})\geq 1
        \end{align*}
        since $x(G_v, A_v)\leq |A_v|=5m^{10}$.
        This implies $x^0(G_{v,1},\overline{A_v})+x^0(G_{v,2},\overline{A_v})+x^0(G_{v,3},\overline{A_v})\geq 1$. Since $v\notin B_G^3$, it must be the case that either $p(G_{v,\ell})<1/m^4$ or $x^0(G_{v,\ell}, A)\leq 1/m$ for all $\ell\in [3]$. Since $p(G_{v,\ell})>1/m^4$ for $\ell=2,3$ by Lemma \ref{lem:gvl-prices-eps-delta}, it must be the case that $$p(G_{v,1})\leq \frac{1}{m^4}\leq  \frac{1}{m^3}\frac{\kappa}{C}\leq \frac{\kappa}{m^2}$$ where $m=\lceil \frac{C}{\kappa}\rceil$. 
    \end{itemize}
\end{proof}

\begin{lemma}
\label{lem:x+ueve-eps-delta}
Consider any $e=(u,v)\in E$ such that $e\notin B_{G}^1\cup B_G^2$ and $u,v\notin B_{G}^1\cup B_G^3$, then
\begin{align*}
    &x^{+}(G_u, A_e)=24m^3+12m\pm O(1)\\
    &x^{+}(G_v, A_e)=24m^3+15m-6m^3p(G_{u,1})\pm O(1)\\
\end{align*}
\end{lemma}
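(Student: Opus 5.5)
The plan is to follow the proof of the corresponding edge lemma in \cite{CCPY21}, group by group. For each group of agents in $A_e$, I will determine its demand from the (approximately known) prices, and then add up the positive-utility allocations of $G_u$ and $G_v$. Since $e,u,v$ are all good, every agent in $A_e$ lies in $A\setminus B_A^1$. Hence the following earlier results apply:
\begin{itemize}
\item Lemma~\ref{lem: frac bad agents-1 and alpha_i} gives $\alpha_i=\Omega(1/f(m))$.
\item Corollary~\ref{cor:spend-in-min-price} shows that each such agent spends $1\pm O(1/m^{20})$ at minimum group prices on positive-utility goods.
\item Lemma~\ref{lem:suboptimal-alloc} shows that a good which is $\delta$-suboptimal, for any $\delta\gg\epsilon$ (say $\delta=1/m^{30}$), receives only $O(\epsilon/\delta)$ per agent.
\end{itemize}
The prices in play are also pinned down by earlier lemmas:
\begin{itemize}
\item $p(G_e)=2\pm O(1/m^4)$ by Lemma~\ref{lem:ge-price-eps-delta}.
\item $q_3(u)=2-q_1(u)\pm O(1/m^7)$ and $q_2(u)=(1+q_1(u))/2\pm O(1/m^7)$, with $q_1(u),q_1(v)\le 1/m^2+O(1/m^6)$, by Lemma~\ref{lem:gvl-prices-eps-delta}.
\end{itemize}

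Step 1: the agents $A_{e,1}$. Their only positive-utility goods are $G_{u,3}$ (utility $1$) and $G_{v,1}$ (utility $1/2$). I first argue that each such agent spends essentially its whole budget on these two goods, mixing them so as to exhaust it. Buying only $G_{v,1}$ would leave unspent budget that could be moved to $G_{u,3}$, and by Corollary~\ref{cor:spend-in-min-price} this waste is impossible. Solving $a\,q_3(u)+(1-a)q_1(v)=1\pm O(1/m^{20})$ gives
\[
a=\frac{1-q_1(v)}{q_3(u)-q_1(v)} = \frac12+\frac{q_1(u)}{4}\pm O(1/m^4).
\]
Summed over the $48m^3$ agents, this yields $24m^3+12m^3q_1(u)\pm O(1)$ units of $G_{u,3}$, and the complementary amount $24m^3-12m^3q_1(u)\pm O(1)$ of $G_{v,1}$.

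Step 2: the threshold agents $A_{e,2,\ell}$, $A_{e,3,\ell}$ and $A_{e,4,\ell}$. Each of them has utility $1$ for $G_e$, priced near $2$, and a small utility $\ell/(2m^3)$ for a cheap good $G_{w,1}$, with $w=v$ or $w=u$. The agent's LP optimum is to spend the budget on a mix of $G_e$ and a cheap good. Which cheap good it uses depends on comparing the bang-per-buck slopes, that is, on how $\ell/(2m^3)$ compares with a threshold determined by $q_1(w)$ (and, for $A_{e,4,\ell}$, with the value $\frac14+\frac1{4m^2}+\frac1{m^3}$ of $G_{u,2}$ at price $q_2(u)$).

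For all but $O(1)$ indices $\ell$ the gap between the two options is $\Omega(1/m^3)$, which is much larger than $\epsilon f(m)$. Using the dual constraints \eqref{eq:dual-constraint} and Lemma~\ref{lem:suboptimal-alloc}, the suboptimal option is therefore bought in negligible quantity. The remaining $O(1)$ indices are near indifference; since each group has constant size, they contribute only $O(1)$ ambiguity.

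Counting the non-ambiguous $\ell$ then gives the following:
\begin{itemize}
\item The $A_{e,3,\ell}$ agents take about $\Theta(m)$ units of $G_{u,1}$ in total, with a $q_1(u)$-dependence that cancels against the $A_{e,4,\ell}$ purchases of $G_{u,2}$. The design constants ($8$ per $\ell$, $18$ per $\ell$, the $\frac1{m^3}$ shift) are chosen to make this cancellation happen, so the $12m^3q_1(u)$ term from Step 1 is absorbed.
\item The $A_{e,2,\ell}$ and $A_{e,4,\ell}$ agents together take $15m-6m^3q_1(u)\cdot(\text{correction})$ units of $G_{v,1}$. Combined with Step 1, this leaves the stated $24m^3+15m-6m^3q_1(u)\pm O(1)$.
\end{itemize}

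Step 3: assembling the totals. I sum the contributions to $x^+(G_u,A_e)$ and $x^+(G_v,A_e)$. Every per-agent error is $O(1/m^4)$ or smaller, and there are $O(m^3)$ agents, so the accumulated error is $O(1)$. Adding the $O(1)$ from the ambiguous threshold indices keeps the total error at $O(1)$.

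The main obstacle is Step 2. I need to certify that the price perturbations, which are $O(1/m^4)$ on $G_e$ and $O(1/m^7)$ on vertex goods, together with the $\epsilon$-slack in optimality, shift each threshold by only $O(1)$ indices $\ell$. I also need to verify that the constants produce exactly $12m$ and $15m-6m^3q_1(u)$. I would first redo the exact-equilibrium computation of \cite{CCPY21} with symbolic $q_1(u),q_1(v)$, and then perturb it.
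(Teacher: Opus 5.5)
Your route is the paper's own: the paper proves this lemma by rerunning Lemma 3.12 of \cite{CCPY21} group by group, with Lemma~\ref{lem:suboptimal-alloc}, Corollary~\ref{cor:spend-in-min-price}, Lemma~\ref{lem:gvl-prices-eps-delta} and Lemma~\ref{lem:ge-price-eps-delta} substituted for their counterparts. However, the one computation you carry out explicitly is wrong, and the error hides the mechanism the lemma depends on. In Step 1, using $q_3(u)=2-q_1(u)\pm O(m^{-7})$,
\[
\frac{1-q_1(v)}{q_3(u)-q_1(v)}=\frac12+\frac{q_1(u)-q_1(v)}{2\,(2-q_1(u)-q_1(v))}\pm O(m^{-7})=\frac12+\frac{q_1(u)-q_1(v)}{4}\pm O(m^{-4}).
\]
You dropped the term $-q_1(v)/4$. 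Since $q_1(v)$ can be as large as $1/m^2$, this is not an $O(1/m^4)$ error: over the $48m^3$ agents it amounts to $12m^3q_1(v)=\Theta(m)$. So, up to $O(1)$, $A_{e,1}$ buys $24m^3+12m^3(q_1(u)-q_1(v))$ units of $G_{u,3}$ and $24m^3-12m^3(q_1(u)-q_1(v))$ units of $G_{v,1}$.

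Step 2 must therefore also cancel a $q_1(v)$ term, and your description of the cancellations is not correct. This cancellation is exactly the job of the $A_{e,4,\ell}$ agents. Comparing the slopes toward $G_e$ from $G_{v,1}$ and from $G_{u,2}$ gives a cutoff $\ell^*=m^3q_1(v)+\frac{2}{3}(m-m^3q_1(u))+O(1)$. Below it an agent buys about $2/3$ unit of $G_{u,2}$; the $1/m^3$ shift guarantees this beats a free zero-utility good. Above it the agent buys about $1/2$ unit of $G_{v,1}$. In total $A_{e,4}$ buys $8m+12m^3q_1(v)-8m^3q_1(u)$ units of $G_{u,2}$ and $12m-9m^3q_1(v)+6m^3q_1(u)$ units of $G_{v,1}$. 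The $A_{e,2,\ell}$ and $A_{e,3,\ell}$ agents switch away from a free zero-utility good at $\ell=m^3q_1(v)$ and $\ell=m^3q_1(u)$ respectively. They buy $3m-3m^3q_1(v)$ units of $G_{v,1}$ and $4m-4m^3q_1(u)$ units of $G_{u,1}$. In $x^+(G_u,A_e)$ the coefficients of $m^3q_1(u)$ are then $12-4-8=0$, and those of $m^3q_1(v)$ are $-12+12=0$. In $x^+(G_v,A_e)$ they are $-12+6=-6$ and $12-3-9=0$. So $A_{e,3}$ and $A_{e,4}$ do not cancel each other; together they cancel the $+12m^3q_1(u)$ from Step 1. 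Likewise $A_{e,2}\cup A_{e,4}$ contribute $15m+6m^3q_1(u)-12m^3q_1(v)$ units of $G_{v,1}$, not $15m-6m^3q_1(u)\cdot(\text{correction})$. With your Step 1 as written, a $\Theta(m^3q_1(v))$ term would survive and the lemma would not follow.

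By contrast, the perturbation issue you flag as the main obstacle is benign. The bound $p(G_e)=2\pm O(m^{-4})$ moves each cutoff by only $O(m^3\cdot m^{-4})=O(1/m)$ indices, and the $O(m^{-7})$ vertex-price errors move them by even less. Per-agent amount errors for the threshold agents are $O(1/m^2)$ rather than the $O(1/m^4)$ you state, but there are only $O(m)$ such agents, so the total error is still $O(1)$.
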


\begin{proof}
The proof of Lemma \ref{lem:x+ueve-eps-delta} is nearly identical to the proof of Lemma 3.12 in \cite{CCPY21}, so we only highlight the modifications that need to be made.
At a high level, the proof of Lemma 3.12 in \cite{CCPY21} characterizes the optimal dual solution $(\alpha, \mu)$ for each agent and then applies their bounds on the allocations to each agent and prices. 
Thus, the only modifications that need to be made are to use our bounds on the allocations and prices when they use theirs.
We provide a mapping from their results to ours below so that this substitution is easier.
\begin{center}
\begin{tabular}{ c|c } 
    \cite{CCPY21} & Here \\ 
    \hline
    Lemma 3.5 & Lemma \ref{lem:suboptimal-alloc} \\
    Corollaries 3.6 + 3.15 & Lemma \ref{lem:suboptimal-alloc} + Corollary \ref{cor:spend-in-min-price} (requires $e\notin B_G^1$) \\ 
    Lemma 3.10 & Lemma \ref{lem:gvl-prices-eps-delta} (requires $u,v\notin B_G^1\cup B_G^3$) \\
    Lemma 3.16 & Lemma \ref{lem:ge-price-eps-delta} (requires $e\notin B_G^2$)
\end{tabular}
\end{center}
A careful reader may realize that whenever \cite{CCPY21} obtains a bound up to error $O(m^6\varepsilon)$, this error is $O(1/\mathrm{poly}(n))$ for them since $\varepsilon= O(1/\mathrm{poly}(n))$ in their setting.
Meanwhile, for us, $O(m^6\varepsilon) = O(1/m^{34})$ since we chose $\varepsilon = O(1/m^{40})$.
Nonetheless, this additional amount of error does not quantitatively affect the proof.

\end{proof}

\begin{lemma}
\label{lem:bound on bad fraction}
    Let $\Delta$ be the fraction of $v\in V$ s.t. $v\notin B_G^1\cup B_G^3$ and for any edge $(u,v)\in E$ and $(v,w)\in E$, we have $u,w\notin B_G^1\cup B_G^3$ and $(u,v),(v,w)\notin B_G^1\cup B_G^2$. Then $\Delta\geq 1-O(1/m^4)$.
\end{lemma}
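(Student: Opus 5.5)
This is a union-bound/counting argument. A vertex $v$ fails the condition defining $\Delta$ only if it lies in a ``bad neighborhood''. That happens if $v$ itself is in $B_G^1\cup B_G^3$, or one of its in- or out-neighbors is, or one of its incident edges is in $B_G^1\cup B_G^2$. Let $D$ be the constant bounding the in-degree and out-degree of $H$ (Lemma~\ref{lem:eps-delta-threshold-ppad}). Each bad vertex $w$ can then disqualify at most $2D+1$ vertices: itself and its in- and out-neighbors. Each bad edge $e=(u,w)$ disqualifies at most its two endpoints. Hence
\[
(1-\Delta)|V| \le (2D+1)\,|V\cap(B_G^1\cup B_G^3)| + 2\,|E\cap(B_G^1\cup B_G^2)|.
\]

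Next I bound each term using the earlier counting lemmas. By Lemma~\ref{lem: frac-bad-goods-1}, the vertices in $B_G^1$ number at most $k_v|V|=O(m^5/f(m))|V|=O(1/m^{15})|V|$. The edges in $B_G^1$ number at most $k_e|E|=O(1/m^{15})|E|$.

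For $B_G^2$, the lemma bounding $k_e^0$ gives at most $O(m^{16}/f(m))|E|=O(1/m^4)|E|$ edges.

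For $B_G^3$, I take a union over $\ell\in[3]$ of the lemma bounding $k_{v,\ell}^0$. This gives at most $3\cdot O(m^{16}/f(m))|V|=O(1/m^4)|V|$ vertices.

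Since $H$ has bounded degree, $|E|\le D|V|$, so all edge counts convert to $O(1/m^4)|V|$ as well. Substituting these bounds, and absorbing $D$ into the $O(\cdot)$ since it is a constant independent of $m$, I get $1-\Delta=O(1/m^4)$.

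The only point needing care is that the hidden constants in the cited lemmas do not depend on $m$. Those lemmas derive their constants from $n\le 6m^{10}|V|$ and from the bounded-degree assumption, so this holds. I do not expect a real obstacle here.
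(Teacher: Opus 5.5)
Your proposal is correct and follows the paper's proof: you use a union bound that charges each vertex of $B_G^1\cup B_G^3$ to its bounded-size closed neighborhood and each edge of $B_G^1\cup B_G^2$ to its two endpoints. You then plug in the bounds on $k_v, k_e, k_e^0, k_{v,\ell}^0$ (with a union over $\ell\in[3]$) and use $|E|=O(|V|)$; your explicit coefficient $2D+1$ is just a cleaner form of the paper's $1+6$ and $6$.
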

\begin{proof}
    By definition of $B_G^1$, $B_G^2$, and $B_G^3$, we have 
    \begin{align*}
        &|B_G^1|\leq k_v|V|+k_e|E|\\
        &|B_G^2|\leq k_e^0|E|\\
        &|B_G^3|\leq k_{v,\ell}^0*3|V|\\ 
    \end{align*}
    Thus 
    \begin{align*}
        &\Delta|V|\geq |V|-(|B_G^1|+|B_G^3|)-6(|B_G^1|+|B_G^3|)-6(|B_G^1|+|B_G^2|)\\
        &\geq |V|-O(|B_G^1|)-O(|B_G^2|)-O(|B_G^3|)\\
        &=|V|-|V|O(k_v+k_e+k_e^0+k_{v,\ell}^0)\\
        &= |V|-|V|(O(1/m^{15})+O(1/m^{15})+O(1/m^4)+O(1/m^4)) \\
        &= |V|(1-O(1/m^4))
    \end{align*}
    where we use the fact that $|E|=O(|V|)$. Here we first remove the bad vertices from $B_G^1$ and $B_G^3$ themselves and then we remove any vertex destroyed by any of its bad neighbors (each bad vertex can destroy up to a constant number of vertices since the degree of $H$ is bounded) and finally remove any vertex destroyed by any of its incident bad edges (each bad edge can destroy two vertices). Here we crucially rely on the fact that $H$ has bounded in-degree and out-degree.
\end{proof}

Now we are ready to show the core Lemma.  
\begin{lemma}
\label{lem: reduce-threshold-hz}
    There is a polynomial-time reduction from $(\kappa,\delta)$-$\threshold$ game to $\epsilon$-HZ.
\end{lemma}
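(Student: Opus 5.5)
Given an instance $(H,t=1/2)$ of $(\kappa,\delta)$-$\threshold$ with bounded in/out-degree, the reduction builds the market $M_H$ above with $m=\lceil C/\kappa\rceil$ and $\epsilon=1/m^{40}$. Both are constants depending only on $\kappa$, so $M_H$ has $n=\Theta(m^{10}|V|)$ agents and goods and is built in polynomial time. From an $\epsilon$-HZ equilibrium $(x,p)$ we read off the strategy profile
\[
x_v=\min\{1,\max\{0,\,1-m^2p(G_{v,1})\}\}.
\]
Equivalently, $x_v=1$ when $p(G_{v,1})\leq\kappa/m^2$ and $x_v\approx 0$ when $p(G_{v,1})\approx 1/m^2$, as in \cite{CCPY21}. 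It then suffices to show that every vertex counted in $\Delta$ of Lemma~\ref{lem:bound on bad fraction} satisfies the $(\kappa,\delta)$-threshold condition. That lemma gives $\Delta\geq 1-O(1/m^4)$, so choosing $C$ large makes $O(1/m^4)\leq O(\kappa^4/C^4)\le\delta$ (taking $\kappa\le1$, or $C$ depending on $\delta$ as well). The remaining arbitrary vertices form the allowed $\delta$-fraction.

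Fix a good vertex $v$. All its in-neighbours $u$, out-neighbours $w$ and incident edges avoid $B_G^1\cup B_G^2\cup B_G^3$, so Lemma~\ref{lem:x+ueve-eps-delta} applies to every incident edge. Summing over out-edges $(v,w)$ and in-edges $(u,v)$ gives
\[
x^+(G_v,\overline{A_v})=(24m^3+12m)\outdeg(v)+\sum_{(u,v)\in E}\bigl(24m^3+15m-6m^3p(G_{u,1})\bigr)\pm O(1).
\]
Substituting $m^2p(G_{u,1})\approx 1-x_u$ turns each in-edge term into $24m^3+15m-6m(1-x_u)\pm O(1)$. Comparing with $S_v$, this yields
\[
x^+(G_v,\overline{A_v})-S_v=6m\Bigl(\sum_{u\in N_v}x_u-\tfrac12\Bigr)\pm O(1),
\]
where the $-3m$ in $S_v$ produces the threshold $1/2$. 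This is exactly the bookkeeping of \cite{CCPY21}, and I would check the constants against their Lemma 3.11--3.12.

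Next comes the case analysis via Lemma~\ref{lem:x+vneqv-eps-delta}. If $\sum_{u\in N_v}x_u>1/2+\kappa$, the difference is at least $6m\kappa-O(1)\geq 1$ for $C$ large, so $p(G_{v,1})=1/m^2\pm O(1/m^9)$ and $x_v\leq O(1/m^7)\leq\kappa$. If $\sum_{u\in N_v}x_u<1/2-\kappa$, the difference is at most $-1$, so $p(G_{v,1})\leq\kappa/m^2$ and $x_v\geq 1-\kappa$. Otherwise no condition is needed.

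The main obstacle is the error terms when converting in-neighbour prices to strategies. Lemma~\ref{lem:gvl-prices-eps-delta} only gives $p(G_{u,1})\leq 1/m^2+O(1/m^6)$, which must feed into $6m^3p(G_{u,1})$ with only $O(1)$ loss per edge. This works because $6m^3\cdot O(1/m^6)=o(1)$ and the degree is bounded. One must also check that the clipping in the definition of $x_u$ does not distort the sum. For this, I would use the exact value $1-m^2p(G_{u,1})$ in the algebra and note that clipping changes each term by only $O(1/m^4)$.
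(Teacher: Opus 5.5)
Your proposal has a genuine gap: the strategy profile is the complement of the one that works, and the key identity you derive from it drops a term. With $m^2p(G_{u,1})\approx 1-x_u$ you get $6m^3\sum_{u\in N_v}p(G_{u,1})\approx 6m\,\indeg(v)-6m\sum_{u\in N_v}x_u$, so
\[
x^+(G_v,\overline{A_v})-S_v=3m-6m^3\sum_{u\in N_v}p(G_{u,1})\pm O(1)=6m\Bigl(\sum_{u\in N_v}x_u-\indeg(v)+\frac12\Bigr)\pm O(1).
\]
This is not $6m(\sum_{u\in N_v}x_u-1/2)$; the two agree only when $\indeg(v)=1$, and for every other vertex the case analysis breaks.

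Two examples show the failure. Take a good vertex of in-degree $0$. Then $x^+(G_v,\overline{A_v})-S_v=3m\pm O(1)>1$, so Lemma~\ref{lem:x+vneqv-eps-delta} gives $p(G_{v,1})\approx 1/m^2$ and your $x_v\approx 0$. But the input sum is $0<1/2-\kappa$, so the threshold condition requires $x_v\ge 1-\kappa$. Now take in-degree $2$ with $\sum_{u\in N_v}x_u\in(1/2+\kappa,\,3/2-\kappa)$ in your variables. The true difference is below $-6m\kappa$, so you get $p(G_{v,1})\le\kappa/m^2$ and $x_v\ge 1-\kappa$, whereas the threshold condition demands $x_v\le\kappa$.

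The fix is to read off $x_v=\min(1,m^2p(G_{v,1}))$, as the paper does (following \cite{CCPY21}). The market is built so that high in-neighbour prices suppress edge demand for $G_v$, which gives $x^+(G_v,\overline{A_v})-S_v=6m(\frac12-\sum_{u\in N_v}x_u)\pm O(1)$. The cases then run in the opposite direction from yours.
\begin{enumerate}
\item If $\sum_{u\in N_v}x_u>1/2+\kappa$: use $x_u\le m^2p(G_{u,1})$ to get $x^+(G_v,\overline{A_v})<S_v-1$. Case 2 of Lemma~\ref{lem:x+vneqv-eps-delta} then gives $p(G_{v,1})\le\kappa/m^2$, so $x_v\le\kappa$.
\item If $\sum_{u\in N_v}x_u<1/2-\kappa$: use $x_u\ge m^2p(G_{u,1})-O(1/m^4)$, from Lemma~\ref{lem:gvl-prices-eps-delta}, to get $x^+(G_v,\overline{A_v})>S_v+1$. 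Case 1 then gives $p(G_{v,1})\ge 1/m^2-O(1/m^9)$, so $x_v\ge 1-\kappa$.
\end{enumerate}
The rest of your outline matches the paper: $\epsilon=1/m^{40}$, restricting to the good vertex set of Lemma~\ref{lem:bound on bad fraction} with $C$ chosen large in terms of $\delta$, applying Lemma~\ref{lem:x+ueve-eps-delta} on all incident edges, and absorbing the $6m^3\cdot O(1/m^6)$ errors via bounded degree.
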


\begin{proof}(Adapted from proof of Theorem 3.1 in \cite{CCPY21}.)

The proof relies on Lemma \ref{lem:gvl-prices-eps-delta} (requires $u,v\notin B_G^1\cup B_G^3$), Lemma \ref{lem:x+vneqv-eps-delta} (requires $v\notin B_G^1\cup B_G^3$), and Lemma \ref{lem:x+ueve-eps-delta} (requires $e\notin B_G^1\cup B_G^2$).

For any instance of the threshold game $H=(V,E)$ where the in-degree and out-degree are bounded and the threshold is $1/2$, we can create the HZ market $M_H$ as described before. Let $(x,p)$ be an $\epsilon$-HZ of $M_H$ where $\epsilon=\frac{1}{m^{40}} = \frac{1}{{\lceil C/\kappa\rceil}^{40}}$ for some large constant $C$. Let $x_v=\min (1,m^2p(G_{v,1}))\in [0,1]$. Then we claim that $x=(x_v)_{v\in V}$ is a $(\kappa,\delta)$-approximate equilibrium of $H$. 

Consider only the fraction of $v\in V$ s.t. $v\notin B_G^1\cup B_G^3$ and for any edge $(u,v)\in E$ and $(v,w)\in E$, we have $u,w\notin B_G^1\cup B_G^3$ and $(u,v),(v,w)\notin B_G^1\cup B_G^2$. Denote this set of vertices by $S$. Then by Lemma \ref{lem:bound on bad fraction}, we only removed at most $O(1/m^4)=c'/m^4$ fraction of the vertices for some constant $c'\geq 1$. Since $m=\lceil C/\kappa\rceil$ for some constant $C$. We can choose $C>c'/\delta$ so that $c'/m^4 \leq \kappa^4\delta^4/c'^3<\delta$ since $0<\kappa,\delta<1$. Thus for any $v\in S$, Lemma \ref{lem:gvl-prices-eps-delta}, Lemma \ref{lem:x+vneqv-eps-delta}, and Lemma \ref{lem:x+ueve-eps-delta} apply. It remains to be shown that for any vertex $v\in S$, the threshold condition holds.
\begin{itemize}
\item Case 1: $\sum_{u\in N_v}x_u>0.5+\kappa$. By definition of $x_u$, we have $x_u\leq m^2 p(G_{u,1})$. This implies $ p(G_{u,1})\geq x_u/m^2\implies \sum_{u\in N_v}p(G_{u,1})\geq \sum_{u\in N_v}x_u/m^2>(1/m^2)(0.5+\kappa)$. Then by Lemma \ref{lem:x+ueve-eps-delta}, we have
\begin{align*}
    x^{+} (G_v, \overline{A_v}) &= \sum_{v\in e} x^{+}(G_v, A_e)\\
    &\leq \outdeg(v)\cdot (24m^3+12m+ O(1)) +\sum_{u\in N_v} (24m^3+15m-6m^3p(G_{u,1})+ O(1))\\
    &= S_v+3m-6m^3\sum_{u\in N_v}p(G_{u,1})+O(1)\\
    &\leq S_v+3m-6m^3(1/m^2)(0.5+\kappa)+O(1)\\
    &= S_v-6m\kappa + O(1)
    <S_v-1.
\end{align*}
since $m=\lceil C/\kappa\rceil$ and we can choose the $C$ to be larger than the $O(1)$ term.
Then by Lemma \ref{lem:x+vneqv-eps-delta}, we have $p(G_{v,1})\leq \kappa/m^2$, which implies $x_{v}\leq \kappa$, as desired.
\item Case 2: $\sum_{u\in N_v}x_u<0.5-\kappa$. By Lemma \ref{lem:gvl-prices-eps-delta}, we have $p(G_{u,1})\leq 1/m^2+O(1/m^6)$.
 Thus 
\begin{align*}
&m^2p(G_{u,1})\leq 1+O(1/m^4)\\
    &\implies x_u = \min(1,m^2p(G_{u,1})) \geq m^2p(G_{u,1})-O(1/m^4)\\
    &\implies p(G_{u,1})\leq x_u/m^2+O(1/m^6)\\
    &\implies \sum_{u\in N_v}p(G_{u,1})\leq \sum_{u\in N_v}x_u/m^2+O(1/m^6)<(1/m^2)(0.5-\kappa+O(1/m^4))
\end{align*}
by bounded in-degree.
Then by Lemma \ref{lem:x+ueve-eps-delta}, we have
\begin{align*}
    x^{+}(G_v, \overline{A_v})&\geq \outdeg(v)\cdot (24m^3+12m- O(1)) +\sum_{u\in N_v} (24m^3+15m-6m^3p(G_{u,1})-O(1))\\
    &= S_v+3m-6m^3\sum_{u\in N_v}p(G_{u,1})-O(1)\\
    &\geq S_v+3m-6m^3(1/m^2)(0.5-\kappa+O(1/m^4))-O(1)\\
    &= S_v+6m\kappa -O(1/m^3)- O(1)
    > S_v+1.
\end{align*}
Thus by Lemma \ref{lem:x+vneqv-eps-delta}, $p(G_{v,1})\geq 1/m^2-O(1/m^9)$, which means $x_v\geq 1-O(1/m^7)\geq 1-\kappa$, as desired.
\end{itemize}
\end{proof}

\begin{proof}
    Theorem \ref{thm:conj-implies-eps-hz-ppad} follows from Lemma \ref{lem:eps-delta-threshold-ppad} and Lemma \ref{lem: reduce-threshold-hz}.
\end{proof}

\section*{Acknowledgments}
The authors would like to thank Aviad Rubinstein of Stanford University for discussions about the $(\epsilon,\delta)$-$\gcircuit$ problem. This research is in part supported by National Science Foundation Alan T. Waterman award grant 1933331.

\bibliographystyle{ACM-Reference-Format}
\bibliography{references}

\appendix
\section{Reduction of $(\epsilon,\delta)$-$\gcircuit$ Problem to Constant Fan-Out}
\label{append:const fan-out}

By definition of a generalized circuit, each gate has a fan-in of at most 2, i.e. it takes inputs from at most two nodes in the circuit. However, each gate can have an unbounded fan-out, i.e. the output node of the gate can have a wire connecting to many gates in the circuit, serving as one of their input values. We will show below that it is without loss of generality to consider only the $(\epsilon,\delta)$-$\gcircuit$ problems with a constant fan-out.

\begin{proof}
Given an instance of the $(\epsilon,\delta)$-$\gcircuit$ Problem $S = (V,\mathcal{T})$ for some constant $\epsilon, \delta>0$, we show a reduction to an instance of $(\epsilon',\delta')$-$\gcircuit$ Problem $S'$ with constant fan-out $10/\delta$ and $\epsilon'=\epsilon, \delta' = \delta/10$. Let $n=|\mathcal{T}|$ be the number of gates in $S$. Consider the set of gates $L$ that have a fan-out $\geq 10/\delta$. Since each gate in $S$ has a fan-in of at most 2, and each wire either connects a node to a gate or a gate to its unique output node, there are at most $3n$ wires in $S$ and at most $2n$ wires contribute to the fan-in. Thus there are at most $2n/(10/\delta)=\delta n/5$ gates in $L$. We construct $S'$ as follows: copy the gates and wires in $S$. For each gate $T\in L$, let $a$ be the output node of $T$ and let $T_1,\dots, T_k$ denote the gates that the output of $T$ provides inputs into. We remove all the wires connecting the output node of $T$ to each gate $T_i$ for $i\in [k]$ and add a constant 0 gate that provides input to these $T_i$ gates (Note: 0 is a valid input to any gate). More formally, for each $T_i$ gate, we construct a constant $G(0||a_i)$ gate whose output node $a_i$ should satisfy $x[a_i]=0\pm \epsilon$. Then we modify $T_i = G(|a,b|c)$ to $T_i' = G(|a_i,b|c)$. Now the new instance $S'$ has fan-out at most $10/\delta$. Consider an $(\epsilon'=\epsilon,\delta'=\delta/10)$-approximate solution $x'$ to the new instance $S'$ (an $\epsilon$-approximate solution exists to any generalized circuit instance even without the $\delta$ slack). Then we can construct a solution $x$ to $S$ as follows: for each gate $T\in L$ whose output node is $a$, we let $x[a]=0$. If $x'[a]\neq 0$, we declare the gate $T$ corrupted. Now, for each gate $T_1,\dots, T_k$ that gets input from $T$, we check if the $x'[a_i]=0$. If not, we declare $T_i$ corrupted. For any gate not in $L$ and does not receive input from a gate in $L$, it is corrupted in $S$ if and only if it is corrupted in $S'$ since they get the same input values and output values. Since $x'$ is an $(\epsilon, \delta/10)$-approximate solution to $S'$, and we added at most $n$ gates, there are at most $2n(\delta/10)=\delta n/5$ corrupted gates in $S'$. Thus the total number of corrupted gates in $S$ induced by $x$ is at most $\delta n/5 + |L|\leq 2\delta n/5<\delta n$. For any uncorrupted gates, it satisfies the gate to an additive error of $\epsilon$. Thus $x$ is indeed a $(\epsilon,\delta)$-approximate solution to $S$. 

The reduction shows that we can use a solution to the $(\epsilon,\delta/10)$-$\gcircuit$ problem with fan-out $10/\delta$ to solve the problem of $(\epsilon,\delta)$-$\gcircuit$ with unbounded fan-out. Thus the former problem is at least as hard as the latter problem. From there, we can further reduce the $(\epsilon,\delta/10)$-$\gcircuit$ problem with fan-out $10/\delta$ to the $(\epsilon'',\delta'')$-$\gcircuit$ problem for some constant $\epsilon'',\delta''>0$ with fan-out at most 2 using the same construction as in \cite{Rub18-inapprox}.
\end{proof}

\begin{remark*}
\cite{Rub18-inapprox} gives a reduction to constant fan-out for $\epsilon$-$\gcircuit$ by constructing a binary tree of logical gates to copy an output value. This reduction does not directly apply to the $(\epsilon,\delta)$-$\gcircuit$ problem with unbounded fan-out because a single corrupted gate at the top levels of the binary tree can affect $O(n)$ copied values at the leaves.
\end{remark*}

\section{Hardness of Approximating HZ Equilibria with Restriction}
\label{sec:hard with restriction}

\begin{definition}[Approximate HZ Equilibria with Restriction]\label{def:epsRHZ}
Given $\epsilon>0$, a pair $(x,p)$, where $x\in \RR^{n\times n}_{\geq 0}$ and $p\in \RR^{n}_{\geq 0}$ is an $\epsilon$-approximate HZ equilibrium with restriction ($\epsilon$-$\ahzr$) of an HZ market $M$ if:
\begin{enumerate}
    \item Unit supply: $\sum_{i\in [n]}x_{ij}=1, \forall j$
    \item Unit demand: $\sum_{j\in [n]}x_{ij}=1, \forall i$
    \item Normalized prices: $\min_{j\in [n]}p_j=0$
    \item Budget is approximately 1: $\sum_{j\in [n]}p_j x_{ij}\leq 1+\epsilon, \forall i$
    \item Bundle is approximately optimal: $\sum_{j\in [n]}u_{ij}x_{ij}\geq val_p(i)-\epsilon, \forall i$, where $val_p(i)$ is the value of the best affordable bundle for agent $i$ under unit demand.
    \item Do not buy positive-price, 0-utility goods: $\sum_{j\in [n], u_{ij}=0}p_j x_{ij}=0, \forall i$
\end{enumerate}
\end{definition}
 This definition is stricter than the definition in \cite{CCPY21} in that we additionally require condition 6---no zero utility goods are purchased with positive prices. This additional requirement allows us to derive a stronger inapproximability result shown below in \Cref{thm:ppad-constant}.

\begin{lemma}
    An $\epsilon$-$\ahzr$ exists for any $\epsilon\geq 0$.
\end{lemma}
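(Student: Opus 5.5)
The plan is to obtain an $\epsilon$-$\ahzr$ from an \emph{exact} HZ equilibrium ($\epsilon=0$) with extra structure. Since conditions 1--6 only get weaker as $\epsilon$ grows, it suffices to exhibit a pair $(x,p)$ with the following properties: it satisfies unit supply and unit demand; it has $\min_j p_j=0$; every agent stays within budget $1$ and receives a bundle attaining $val_p(i)$; and no agent holds a positive-price good for which she has zero utility.

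\textbf{Step 1 (existence of a suitable exact equilibrium).} I would invoke the existence theorem of \cite{HZ1979}, in the strengthened form of \cite{vazirani2021computational}. That form gives an exact HZ equilibrium in which some good has price $0$ and every agent receives a \emph{cheapest} optimal bundle, i.e.\ a bundle minimizing $\sum_j p_j y_j$ among all feasible bundles attaining $val_p(i)$. If only an equilibrium with $\min_j p_j<1$ were available, I would normalize it using the transformation in the Remark after the definition of $\epsilon$-HZ: take $r=1/(1-\min_j p_j)$, so that $p'_j-1=r(p_j-1)$. On unit-demand bundles this map satisfies $\sum_j p'_j y_j = 1 + r\left(\sum_j p_j y_j - 1\right)$, an increasing affine function of the cost. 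Hence the set of affordable bundles, the set of optimal bundles, and the ordering of bundles by cost are all preserved, so cheapest optimal bundles remain cheapest optimal, and $\min_j p'_j=0$.

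\textbf{Step 2 (condition 6).} Fix a good $j_0$ with $p_{j_0}=0$. Suppose some agent $i$ has $x_{ij}>0$ for a good $j$ with $u_{ij}=0$ and $p_j>0$. Move the amount $x_{ij}$ from good $j$ to good $j_0$ in agent $i$'s bundle. Her utility changes by $x_{ij}(u_{ij_0}-0)\ge 0$, so the new bundle is still optimal. Her cost strictly decreases by $x_{ij}p_j>0$, so the new bundle is still affordable and unit demand is preserved. This contradicts the assumption that $x_i$ is a cheapest optimal bundle. The switch is only a hypothetical deviation used in the argument, so market clearing is never disturbed. Conditions 1--5 hold with $\epsilon=0$ by the definition of an exact equilibrium.

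\textbf{Main obstacle.} The substantive point is Step 1: securing simultaneously a zero-priced good and cheapest-bundle demands. The degenerate case is when all equilibrium prices are at least $1$. Then, since total spending is at most $n$ and total supply is $n$, all goods actually bought have price exactly $1$, and the affine normalization cannot produce a zero price. To rule this case out, I would rely on the version of the existence theorem in \cite{vazirani2021computational}, which already includes a zero price. Failing that, I would argue that such an equilibrium can be perturbed by lowering some price to $0$ while keeping optimality, which requires care. The remaining steps are routine.
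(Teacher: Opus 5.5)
Your proposal is correct and follows the paper's proof. Both start from an HZ equilibrium in which every agent buys a cheapest optimal bundle, normalize the minimum price to $0$, and observe that a cheapest optimal bundle never contains a positive-price, zero-utility good; your exchange argument with the zero-price good $j_0$ and your check that the affine rescaling preserves cheapest optimal bundles supply details the paper leaves implicit.

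The degenerate case you flag needs no extra citation or delicate perturbation. If all prices equal $1$, every unit-demand bundle costs exactly $1$, so replacing all prices by $0$ leaves the affordable sets and optimal bundles unchanged and makes condition 6 vacuous.
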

\begin{proof}
    Since an HZ equilibrium exists when each individual purchases their cheapest best affordable bundle \citep{HZ1979}, and any such HZ equilibrium (after normalizing the minimum price to 0) satisfies the conditions 1-6 (in particular, it satisfies condition 6 since the cheapest best bundle will never buy positive-price, 0-utility goods), we conclude that an $\epsilon$-approximate HZ Equilibrium exists even when $\epsilon=0$.
\end{proof}

\begin{theorem}
\label{thm:ppad-constant}
    Finding an $\epsilon$-$\ahzr$ with restriction is PPAD-hard for some constant $\epsilon>0$.
\end{theorem}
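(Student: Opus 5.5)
We reduce from the constant-$\kappa$ version of the threshold game: by \cite{PP21} (via the reduction from $\epsilon$-$\gcircuit$, which is unconditionally PPAD-hard for constant $\epsilon$ by \cite{Rub18-inapprox}), $\kappa$-$\threshold$ with threshold $t=1/2$ and bounded in/out-degree is PPAD-hard for some constant $\kappa>0$. Given such an instance $H=(V,E)$, we build exactly the market $M_H$ of the previous section with $m=\lceil C/\kappa\rceil$, take an $\epsilon$-$\ahzr$ $(x,p)$ with $\epsilon=1/m^{40}$ (a constant, since $\kappa$ is constant), and output $x_v=\min(1,m^2p(G_{v,1}))$. The claim is that now \emph{every} vertex satisfies the threshold condition, i.e.\ $\delta=0$, so no PCP-type conjecture is needed.

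The key point is that condition 6 kills all the ``bad'' sets $B_G^1,B_G^2,B_G^3$ used in the conditional proof, which only arose to control high prices and allocations of positive-price goods to zero-utility agents. First, I will show $x^0(G_j,A)=0$ whenever $p(G_j)>0$ directly from condition 6 (since $p(G_j)$ is the minimum price in its group, every copy has positive price). Hence all of $G_e$ goes to agents with positive utility, so Lemma~\ref{lem:ge-price-eps-delta} gives $p(G_e)=2\pm O(1/m^4)$ for every $e$ by the budget argument, with no exclusion; in particular no $G_e$ exceeds $f(m)$. For $G_{v,3}$, I expect to reuse the argument of \cite{CCPY21} (Lemmas 3.13--3.14) that $p(G_{v,3})$ is bounded above: the agents with positive utility for $G_{v,3}$ are $A_v$ and $A_{e,1}$ for out-edges, $O(m^{10})$ many, and since none of $G_{v,3}$ goes to zero-utility agents (its price is $\ge 5/3>0$ by Lemma~\ref{lem: pe-pv3-val}) the budget constraint bounds its price by a constant. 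Thus $B_G^1=\emptyset$, and Lemma~\ref{lem: frac bad agents-1 and alpha_i}, Lemma~\ref{lem:g(1/m)} and Corollary~\ref{cor:spend-in-min-price} hold for all agents (indeed with $\alpha_i=\Omega(1)$, improving the error terms). Similarly $B_G^2=B_G^3=\emptyset$ because the relevant $x^0$ quantities vanish whenever prices are positive, so Lemmas~\ref{lem:y1y2y3-eps-delta}, \ref{lem:gvl-prices-eps-delta}, \ref{lem:x+vneqv-eps-delta} and \ref{lem:x+ueve-eps-delta} apply to every vertex and edge.

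With all structural lemmas valid globally, the case analysis in the proof of Lemma~\ref{lem: reduce-threshold-hz} goes through verbatim for every $v\in V$: if $\sum_{u\in N_v}x_u>1/2+\kappa$ then $x^+(G_v,\overline{A_v})<S_v-1$, forcing $p(G_{v,1})\le\kappa/m^2$ and $x_v\le\kappa$; if $\sum_{u\in N_v}x_u<1/2-\kappa$ then $x^+(G_v,\overline{A_v})>S_v+1$, forcing $p(G_{v,1})=1/m^2\pm O(1/m^9)$ and $x_v\ge 1-\kappa$. Case 2 of Lemma~\ref{lem:x+vneqv-eps-delta} deserves a recheck: there $x^0(G_v,\overline{A_v})\ge1$, and condition 6 now forces this mass to sit on zero-price goods, so some copy of $G_{v,1}$ has price $0$; I must argue the minimum price $p(G_{v,1})=0\le\kappa/m^2$, which is immediate from the definition of $p(\cdot)$ as a minimum.

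The main obstacle I anticipate is the subtlety that identical copies within a group may carry different prices, so condition 6 only says zero-utility agents buy the zero-priced copies; I need every place where \cite{CCPY21} or the previous section uses ``positive minimum price'' to translate into ``no zero-utility allocation'' correctly, and in particular the upper bound on $p(G_{v,3})$ and $p(G_e)$ must be phrased via the minimum price and the fact that only positive-utility agents buy those goods. Once that bookkeeping is done, the error terms are all polynomial in $1/m$ with $m$ a constant, and the reduction is clearly polynomial time.
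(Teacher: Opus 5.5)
Your proposal is correct and follows the paper's proof. It uses the same market $M_H$ and the same decoding $x_v=\min(1,m^2p(G_{v,1}))$. It also uses condition 6 in the same way, to forbid zero-utility purchases of positive-price goods: this gives constant upper bounds on $p(G_e)$ and $p(G_{v,3})$, and forces $p(G_{v,1})=0$ in the low case of Lemma~\ref{lem:x+vneqv}.

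The only difference is packaging. You import the structural lemmas of Section~\ref{sec: conditional hardness without restriction} by observing that condition 6 empties $B_G^1,B_G^2,B_G^3$, so $\delta=0$, at the cost of taking $\epsilon=1/m^{40}$. The paper instead re-derives those lemmas in Appendix~\ref{sec:hard with restriction} with $\epsilon=1/m^{20}$, using $\alpha_i\ge 1/20$ and the max--min price gap Lemma~\ref{lem:maxp-minp} in place of the $g_i$ bookkeeping.
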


We follow the same outline of proof as \cite{CCPY21}, but modify the lemmas therein to obtain the hardness result for $\epsilon$ as a constant, instead of $1/poly(n)$, for the restricted version. 

Similar to \cite{CCPY21}, we reduce from the problem of finding a $\kappa$-approximate equilibrium for the threshold game $H=(V,E)$. To make the proof self-contained, we repeat the setup of \cite{CCPY21} here. 

\begin{definition}[Threshold game \citep{PP21}]
\label{def:threshold}
    A threshold game is defined on a directed graph $H=(V,E)$ and a threshold $t$ ($0<t<1$). The vertices of the graph represent players with strategy space $x_v\in [0,1]$. A strategy profile $x=(x_u)_{u\in V}\in [0,1]^{|V|}$ is an $\kappa$-approximate equilibrium ($\kappa<t<1-\kappa$) if it satisfies
    \begin{align*}
        x_v = 
        \begin{cases}
            [0,\kappa] &\sum_{u\in N_v}x_u>t+\kappa\\
            [1-\kappa,1] &\sum_{u\in N_v}x_u<t-\kappa\\
            [0,1] &\sum_{u\in N_v}x_u\in [t-\kappa, t+\kappa]\\
        \end{cases}
    \end{align*}
    where $N_v$ is the set of vertices with incoming edges $(u,v)\in E$ to vertex $v$. Intuitively, each vertex needs to behave like an indicator of whether the incoming flows exceed a certain threshold, and can act arbitrarily near the threshold.
\end{definition}

\cite{PP21} showed that there exists a constant $\kappa$ such that the problem of finding $\kappa$-approximate equilibrium in a threshold game is PPAD-hard, and the hardness already holds when in-degree and out-degree of each vertex is bounded by some constant. Therefore we assume $H$ has bounded in-degree and out-degree. Additionally, the threshold is set to $1/2$ in their reduction, so we will also set $t=1/2$. 

We use the same construction of an HZ market $M_H$ as in \cite{CCPY21}, which is formally stated in Section \ref{sec: conditional hardness without restriction}. Let $(x,p)$ be an $\epsilon$-$\ahzr$ of this market $M_H$, where we set $\epsilon = 1/m^{20}$. Let $\alpha_i, \mu_i$ be the optimal solutions for the dual program \ref{def:dual-LP} for each agent $i$. Let $val_p(i)$ denote the optimal value of both of these LPs using prices $p$ from the $\epsilon$-\ahzr. 

First we make the simple observation that the dummy goods must have price 0.
\begin{lemma}
 $p(G_D)=\bar{p}(G_D)=0$   
\end{lemma}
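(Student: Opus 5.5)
The argument uses only condition 6 of Definition~\ref{def:epsRHZ}, the unit-supply constraint, and the observation that no agent in $M_H$ has positive utility for any dummy good. Every good in $G_D$ has zero utility for every agent: the utility lists for $A_v$, $A_{e,*}$, $A_{e,1}$, $A_{e,2,\ell}$, $A_{e,3,\ell}$ and $A_{e,4,\ell}$ mention only vertex goods $G_{v,\ell}$ and edge goods $G_e$, never $G_D$.

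Fix any dummy good $j\in G_D$. By unit supply, $\sum_i x_{ij}=1$, so some agent $i$ has $x_{ij}>0$. Since $u_{ij}=0$, condition 6 gives
\[
\sum_{j':u_{ij'}=0}p_{j'}x_{ij'}=0 .
\]
Every term in this sum is nonnegative, so each term vanishes. In particular $p_j x_{ij}=0$, and since $x_{ij}>0$ this forces $p_j=0$.

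This holds for every $j\in G_D$, so $\bar p(G_D)=0$. Prices are nonnegative, so $0\le p(G_D)\le \bar p(G_D)=0$. Hence $p(G_D)=\bar p(G_D)=0$.

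We also need $G_D$ to be nonempty, so that the statement is not vacuous. This holds because $|G_D|=3m|V|+(32m^5+23m)|E|>0$. The argument does not use the approximate-optimality or budget conditions.
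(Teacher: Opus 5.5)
Your proof is correct and is essentially the paper's argument. Every agent has zero utility for dummy goods, so condition 6 forbids anyone from holding a positive-price dummy good, which would contradict unit supply; you phrase this directly (via $p_jx_{ij}=0$ with $x_{ij}>0$) where the paper argues by contradiction, and the paper additionally notes that this makes the normalization condition 3 automatic.
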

\begin{proof}
    If any dummy good has positive price, no agent will buy it, so there is zero demand for the dummy good, contradicting that $(x,p)$ is an $\epsilon$-$\ahzr$. This also shows that any $\epsilon$-equilibrium for this market that satisfies condition 1,2,4-6 automatically satisfies condition 3, that $\min_j p_j=0$.
\end{proof}

Next we show that the prices of some goods in the market are bounded by some constants. This will be helpful for us to bound $\alpha_i$ in the dual of individual optimization LP. (We will later show that actually all goods are bounded by some constants.) Let $e=(u,v)$ unless indicated otherwise.
\begin{lemma}
\label{lem:gegv3-bdd}
\begin{align}
    p(G_e)&\leq 2+O(1/m^4)\\
    p(G_{v,3})&\leq 5/2+O(1/m^7)
\end{align}

\begin{proof}
    First we observe that any good in $G_e$ must have positive prices, i.e. $p(G_e)>0$ since otherwise each agent $A_{e,*}$ will buy at least $(1-\epsilon)$ unit of $G_e$. However, there are $64m^5$ agents in $A_{e,*}$ but only $32m^5$ goods in $G_e$, demand is greater than supply, a contradiction. 
    Thus the prices of $G_e$ is positive. This also implies that only agents with positive utility for them will be assigned these goods. Since all goods in $G_e$ need to be sold, and can be sold only to agents in $A_{e,*}, A_{e,2,\ell}, A_{e,3,\ell}, A_{e,4,\ell}$ who have positive utility for it, where each of them has a budget of $1+\epsilon$, therefore we have
    \begin{equation*}
        p(G_e)\leq \frac{(64m^5 + 50m)(1+\epsilon)}{32m^5} = 2+O(1/m^4).
    \end{equation*}
    
    Similarly, $G_{v,3}$ have positive prices and therefore can only be sold to agents in $A_v$ and $A_{e,1}$ where $e=(v,k)$ for some $k$ (there can only be a constant number of such edges since the out-degree of each vertex is bounded). Thus by market clearance, we have
    \begin{equation*}
        p(G_{v,3})\leq \frac{(5m^{10} + \Theta(48m^3))(1+\epsilon)}{2m^{10}} = 5/2 + O(1/m^7).
    \end{equation*}
\end{proof}
\end{lemma}

We also have the following lower bounds for prices of $G_e$ and $G_{v,3}$. 
\begin{lemma}[\cite{CCPY21}, Lemma 3.13]
\label{lem:gegv3-expensive}
\begin{align}
    &p(G_e)\geq 2(1-2\epsilon)\\
    &p(G_{v,3})\geq 5/3
\end{align}
\end{lemma}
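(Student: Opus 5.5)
The plan is a direct demand-exceeds-supply argument for each bound. Assume the minimum price in the group is too low. Then every agent whose favorite good lies in that group can afford a substantial fraction of it. Condition 5 of Definition~\ref{def:epsRHZ} (approximate optimality) then forces each such agent to actually hold a large fraction of the group. Summed over the agents, this exceeds the group's supply, contradicting unit supply. Neither the budget relaxation nor condition 6 is needed. Only the allocation $x$ and the value $val_p(i)$ enter, and $val_p(i)$ is computed with the true budget $1$. This is why the argument of \cite{CCPY21} carries over even when $\epsilon$ is a constant.

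\emph{Bound for $G_e$.} Suppose some good $j\in G_e$ has price $q<2(1-2\epsilon)$. Each agent in $A_{e,*}$ can buy $\min(1,1/q)$ units of $j$ and fill the rest of the unit demand with dummy goods, which have price $0$ by the preceding lemma. Hence
\[
val_p(i)\ge \min(1,1/q) > \frac{1}{2(1-2\epsilon)} \ge \frac12+\epsilon .
\]
Agents in $A_{e,*}$ have utility $1$ for $G_e$ and $0$ for everything else. So condition 5 gives $x_i(G_e)\ge val_p(i)-\epsilon>1/2$ for each of the $64m^5$ agents in $A_{e,*}$. Their total holding of $G_e$ then strictly exceeds $32m^5=|G_e|$, contradicting unit supply. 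The only care needed is the strictness: the inequality $1/(2(1-2\epsilon))\ge 1/2+\epsilon$ must hold, and it does for $\epsilon<1/4$.

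\emph{Bound for $G_{v,3}$.} Suppose some good in $G_{v,3}$ has price $q<5/3$. Each agent $i\in A_v$ can buy $\min(1,1/q)>3/5$ units of it and fill the remainder with price-$0$ dummies, so $val_p(i)>3/5$. Let $z_i=x_i(G_{v,3})$. Every other good gives $i$ utility at most $u_2=\frac{m^2+1}{4m^2-2}=\frac14+O(1/m^2)$. Condition 5 therefore gives
\[
z_i+(1-z_i)u_2 \ge val_p(i)-\epsilon > \tfrac35-\epsilon ,
\]
which rearranges to $z_i \ge \frac{3/5-\epsilon-u_2}{1-u_2}$. For $m$ large and $\epsilon=1/m^{20}$, this lower bound is approximately $0.35/0.75\approx 0.466>2/5$. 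Summing over the $5m^{10}$ agents of $A_v$ gives total demand for $G_{v,3}$ exceeding $2m^{10}=|G_{v,3}|$, again contradicting unit supply.

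The main point requiring care is that the approximation error $\epsilon$ and the $O(1/m^2)$ slack in $u_2$ must be absorbed with a constant margin. For $G_{v,3}$ this is comfortable (about $0.466$ versus $0.4$). For $G_e$ the margin is exactly the $2\epsilon$ built into the threshold $2(1-2\epsilon)$, so that inequality has to be checked precisely rather than asymptotically.
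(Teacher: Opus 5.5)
Your proof is correct and uses the same supply-versus-demand counting as the cited Lemma 3.13 of \cite{CCPY21} (the paper gives no proof of its own), only run as a contrapositive. The cited argument observes that some agent of the identical group $A_{e,*}$ (resp.\ $A_v$) receives at most $1/2$ (resp.\ $2/5$) of a unit of $G_e$ (resp.\ $G_{v,3}$). It deduces $val_p(A_{e,*})\le 1/2+\epsilon$ and $val_p(A_v)<0.6$, bounds the paper reuses later, and then converts them to the price bounds via $val_p\ge\min(1,1/p)$.

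One small inconsistency: the dummy-good lemma you invoke for the zero-price filler is itself proved using condition 6. To support your remark that condition 6 is not needed, take the zero-price good from normalization (condition 3) instead. This works equally well because utilities are nonnegative.
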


\begin{corollary}
\label{cor:ge-price}
$\forall e\in E, p(G_e)=2\pm O(1/m^4)$
\end{corollary}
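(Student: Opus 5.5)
The corollary is obtained by sandwiching $p(G_e)$ between the upper bound of Lemma~\ref{lem:gegv3-bdd} and the lower bound of Lemma~\ref{lem:gegv3-expensive}. For every edge $e\in E$, Lemma~\ref{lem:gegv3-bdd} gives $p(G_e)\le 2+O(1/m^4)$. The restriction (condition 6 of Definition~\ref{def:epsRHZ}) is what lets us do this for \emph{every} edge, not just all but a small fraction as in Section~\ref{sec: conditional hardness without restriction}. Since every good in $G_e$ has positive price, no zero-utility agent can hold any of it. The full supply $32m^5$ must therefore be bought by the at most $64m^5+50m$ agents with positive utility for $G_e$, each spending at most $1+\epsilon$.

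For the other direction, Lemma~\ref{lem:gegv3-expensive} gives $p(G_e)\ge 2(1-2\epsilon)=2-4\epsilon$. With $\epsilon=1/m^{20}$ this is $2-O(1/m^{20})$, which is certainly $2-O(1/m^4)$. Combining the two bounds yields $p(G_e)=2\pm O(1/m^4)$.

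Strictly, this controls the minimum price $p(G_e)$, which is exactly what the statement asserts. If one also wants the maximum $\bar p(G_e)$ to be close to $2$, one would rerun the budget count using total spending on $G_e$ rather than the minimum price. I do not plan to claim this, since the corollary only concerns $p(G_e)$.

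The only step needing care is checking that Lemma~\ref{lem:gegv3-expensive} (Lemma 3.13 of \cite{CCPY21}) still applies when $\epsilon$ is a constant $1/m^{20}$ rather than inverse-polynomial. As in Lemma~\ref{lem: pe-pv3-val}, its proof is a local argument about the $A_{e,*}$ agents: if $G_e$ were cheaper than about $2$, these agents, whose only positive-utility good is $G_e$, would together demand more than the $32m^5$ copies available. This argument only uses $\epsilon<1$ and does not depend on $\epsilon$ being polynomially small, so the lower bound transfers unchanged.
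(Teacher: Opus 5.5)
Your proposal is correct and matches the paper's argument exactly. The paper sandwiches $p(G_e)$ between the upper bound $2+O(1/m^4)$ of Lemma~\ref{lem:gegv3-bdd} and the lower bound $2(1-2\epsilon)$ of Lemma~\ref{lem:gegv3-expensive}, using $\epsilon=1/m^{20}$; your check that the demand-counting argument behind the lower bound ($64m^5$ agents in $A_{e,*}$ competing for $32m^5$ copies) does not need $\epsilon$ to be polynomially small is also sound.
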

\begin{proof}
    By Lemma \ref{lem:gegv3-bdd}, Lemma \ref{lem:gegv3-expensive}, and $\epsilon=1/m^{20}$.
\end{proof}

\begin{lemma}
    Each agent $i$ in $M_H$ has optimal value $val_p(i)\leq 0.9$. More precisely, $val_p(A_v)<0.6$, $val_p(A_{e,*})\leq 0.5+\epsilon$, and $val_p(A_e)\leq 0.8$. 
\end{lemma}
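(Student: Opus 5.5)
The plan is to bound each agent's optimal value by exhibiting, for each agent group, a feasible dual solution $(\alpha,\mu)$ to the Dual LP (Definition~\ref{def:dual-LP}). By weak duality, $val_p(i)\le \alpha+\mu$ for any dual-feasible pair. The constraints $\alpha p_j+\mu\ge u_{ij}$ only need checking on goods for which agent $i$ has positive utility, together with $\mu\ge 0$, since $u_{ij}=0$ elsewhere. Each group has at most three positive-utility goods. We take the conservative choice of lower bounds on their prices (minimum group prices), so a pair feasible for the minimum price is feasible for every copy. The inputs are Lemma~\ref{lem:gegv3-expensive} ($p(G_e)\ge 2(1-2\epsilon)$, $p(G_{v,3})\ge 5/3$) and nonnegativity of prices.

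\emph{Agents $A_{e,*}$.} These value only $G_e$. Take $\mu=0$ and $\alpha=1/p(G_e)\le 1/(2(1-2\epsilon))$, which gives $val_p\le \frac{1}{2(1-2\epsilon)}=0.5+O(\epsilon)$. To get exactly $0.5+\epsilon$ I would instead argue directly: with budget $1$ and price at least $2(1-2\epsilon)$, at most $\frac{1}{2(1-2\epsilon)}$ units of $G_e$ are affordable. I would then absorb constants, accepting the bound $0.5+O(\epsilon)$ if the exact constant is delicate.

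\emph{Agents $A_v$.} Their utilities are $1$ on $G_{v,3}$, $u_2=\frac{m^2+1}{4m^2-2}\approx 1/4$ on $G_{v,2}$, and $u_1=\frac{1}{2m^2-1}$ on $G_{v,1}$. Take $\mu=u_2$ (which also covers $G_{v,1}$, since $u_1<u_2$) and choose $\alpha$ so that $\alpha\cdot 5/3+\mu\ge 1$, i.e.\ $\alpha=\frac{3}{5}(1-u_2)\approx 0.45$. Then $\alpha+\mu\approx 0.45+0.25=0.7$, which is not below $0.6$. So I would refine the argument by including the price of $G_{v,2}$, but no lower bound on it is yet available. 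The alternative is a primal argument: with budget $1$, at most a $1/p(G_{v,3})\le 3/5$ fraction of $G_{v,3}$ is affordable. The rest of the unit goes to goods of utility at most $u_2$, so $val\le \frac35+\frac25u_2\approx 0.7$. This is still too large. Hence $<0.6$ needs a better lower bound on $p(G_{v,3})$, specifically $p(G_{v,3})\ge$ roughly $2-O(1/m^2)$. I expect that to come from market clearing of $G_{v,3}$, in the spirit of the CCPY21 argument. That is the main obstacle, and I would cite CCPY21's Lemma 3.14 directly, as Lemma~\ref{lem: pe-pv3-val} does, since its proof uses only the structure of the market. With $p(G_{v,3})\approx 2$, the primal bound gives $\frac12+\frac12u_2\approx 0.625$, which is still borderline. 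The dual with $\mu=u_2$ and $\alpha=(1-u_2)/2$ yields the same value, so the claimed $0.6$ likely uses cheapness of $G_{v,1}$ and $G_{v,2}$ more finely. I would follow CCPY21 here.

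\emph{Edge agents $A_{e,1},A_{e,2,\ell},A_{e,3,\ell},A_{e,4,\ell}$.} Their favourite good ($G_e$ or $G_{u,3}$) has price at least $5/3$, and every other good has utility at most $1/2$ (respectively $\ell/2m^3\le 1/(2m^2)$, or at most about $1/4$). Take $\mu$ equal to the largest secondary utility and $\alpha=(1-\mu)\cdot\frac35$. For $A_{e,1}$ this gives $\mu=1/2$ and $\alpha+\mu=0.5+0.3=0.8$. The other edge groups give smaller values, so $val_p(A_e)\le 0.8$. Combining the three cases yields $val_p(i)\le 0.9$ for all $i$.
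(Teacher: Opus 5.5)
\textbf{Verdict: there is a gap.} The part of your proposal covering the edge agents $A_{e,1},A_{e,2,\ell},A_{e,3,\ell},A_{e,4,\ell}$ is correct. It is the dual form of the paper's primal computation $\frac35+\frac25\cdot\frac12=0.8$. Your bound $\frac35+\frac25u_2\approx 0.7$ for $A_v$ already gives $val_p(i)\le 0.9$ for every agent.

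The gap is the refined bound $val_p(A_v)<0.6$. You do not prove it, and your route cannot. For fixed utilities, $val_p$ depends only on the prices. Everything known about prices at this point ($p(G_{v,3})\ge 5/3$ from Lemma~\ref{lem:gegv3-expensive}, all prices nonnegative) is consistent with, e.g., $p(G_{v,3})=5/3$ and $p(G_{v,1})=p(G_{v,2})=0$. At those prices the LP value really is $\frac35+\frac25u_2\approx 0.7$. Your repair via $p(G_{v,3})\approx 2$ fails on two counts:
\begin{itemize}
\item Numerically it still gives $\frac12+\frac12u_2\approx 0.625$.
\item It is circular. In this appendix $p(G_{v,3})=2-p(G_{v,1})\pm O(1/m^7)$ is Lemma~\ref{lem:gvl-prices}. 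That lemma uses $\alpha_i\ge 1/20$ from Lemma~\ref{lem:bound-on-mu-alpha}, which is derived from the present lemma.
\end{itemize}

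\textbf{The missing idea.} Argue from values to prices, not the reverse, using the allocation and approximate optimality (condition 5). The supply constraint you had in mind is the right ingredient, but apply it to the allocation rather than to prices. The $5m^{10}$ agents of $A_v$ share at most $2m^{10}$ units of $G_{v,3}$. So some $i\in A_v$ receives at most $\frac25$ unit of it, and every other good gives $i$ utility at most $u_2$. Hence $\sum_j u_{ij}x_{ij}\le \frac25+\frac35u_2=0.55+O(1/m^2)$. Condition 5 then gives $val_p(i)\le 0.55+O(1/m^2)+\epsilon<0.6$. All agents of $A_v$ have identical utilities and face the same prices, so this holds for all of $A_v$.

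The same count for $A_{e,*}$ works too: $64m^5$ agents share $32m^5$ units of $G_e$, their only positive-utility good. This gives exactly $val_p(A_{e,*})\le\frac12+\epsilon$. Your price-based bound $\frac{1}{2(1-2\epsilon)}=\frac12+\frac{\epsilon}{1-2\epsilon}$ slightly exceeds that.

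These are the arguments the paper imports from the proof of Lemma 3.13 in \cite{CCPY21}. The constants $5/3=1/0.6$ and $2(1-2\epsilon)\le 1/(\frac12+\epsilon)$ in Lemma~\ref{lem:gegv3-expensive} are exactly what one reads off from these value bounds. That is why going back from those price bounds to values is lossy.
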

\begin{proof} Proof of Lemma 3.13 in \cite{CCPY21} gives $val_p(A_v)<0.6$ and $val_p(A_{e,*})\leq 0.5+\epsilon$. We verify the claim for $A_e$ agents (adapted from proof of Lemma 3.14 in \cite{CCPY21} for constant $\epsilon$). Note that only $G_e$ and $G_{u,3}$ goods give them utility 1, and the other goods give them utility at most 1/2. But $G_e$ and $G_{u,3}$ are expensive as shown in Lemma \ref{lem:gegv3-expensive} and the utility they can get from them (subject to budget 1 because we are looking at the LP, not their real budget) is at most $\max\{1/2(1-2\epsilon), 3/5\} = 0.6$ when $\epsilon$ is a sufficiently small constant. Thus $val_p(A_e)\leq 0.6+0.4*0.5= 0.8$.
\end{proof}

Now we are ready to bound $\alpha_i$ and $\mu_i$ for any agent $i$. (The bound is tighter than \cite{CCPY21}, Lemma 3.3 because we utilize the fact that the prices of all goods in the market are bounded by some constant.)
\begin{lemma}
\label{lem:bound-on-mu-alpha}
\begin{align}
    &\mu_i\geq 0\\
    & 1/20 \leq \alpha_i\leq 0.9
\end{align}
\end{lemma}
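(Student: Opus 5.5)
We work directly with the dual LP of Definition~\ref{def:dual-LP} for a fixed agent $i$, using three facts already established: every agent's favorite good (utility $1$) is either some $G_e$ or some $G_{v,3}$; the dummy goods have price $0$; and $val_p(i)=\alpha_i+\mu_i\le 0.9$, with the favorite good priced between roughly $5/3$ and $5/2$ by Lemmas~\ref{lem:gegv3-bdd}, \ref{lem:gegv3-expensive} and Corollary~\ref{cor:ge-price}.

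\emph{Bound $\mu_i\ge 0$.} Apply constraint \eqref{eq:dual-constraint} to a dummy good $j\in G_D$. We have $p_j=0$, and the proof of the previous lemma shows such goods exist with price $0$. Since $u_{ij}\ge 0$, this gives $\mu_i=\alpha_i\cdot 0+\mu_i\ge u_{ij}\ge 0$.

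\emph{Upper bound on $\alpha_i$.} Combining $\mu_i\ge 0$ with $\alpha_i+\mu_i=val_p(i)\le 0.9$ gives $\alpha_i\le 0.9$.

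\emph{Lower bound on $\alpha_i$.} Let $j^*$ be agent $i$'s favorite good, so $u_{ij^*}=1$, and apply \eqref{eq:dual-constraint} to the cheapest copy in its group: $\alpha_i p(G_{j^*})+\mu_i\ge 1$. Subtracting $\alpha_i+\mu_i\le 0.9$ yields
\[
\alpha_i\bigl(p(G_{j^*})-1\bigr)\ge 0.1 .
\]
Since $p(G_{j^*})>1$ by Lemma~\ref{lem:gegv3-expensive}, we can divide to get $\alpha_i\ge 0.1/(p(G_{j^*})-1)$. There are two cases.
\begin{itemize}
\item If $j^*\in G_e$, then $p(G_e)\le 2+O(1/m^4)$, so $\alpha_i\ge 0.1/(1+O(1/m^4))\ge 1/20$.
\item If $j^*\in G_{v,3}$, then $p(G_{v,3})\le 5/2+O(1/m^7)$, so $\alpha_i\ge 0.1/(3/2+O(1/m^7))\approx 1/15\ge 1/20$ for $m$ large.
\end{itemize}

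The main point needing care is that these upper price bounds hold for the \emph{minimum} price of the group. Lemma~\ref{lem:gegv3-bdd} is stated exactly for $p(\cdot)$, the minimum price, and the dual constraint holds for every good, so using the cheapest copy is legitimate. If $val_p(A_v)<0.6$ is used instead, the bound for the $G_{v,3}$ case becomes even stronger. The dual optimum is finite because the primal is feasible: a dummy good is free and gives a feasible bundle.
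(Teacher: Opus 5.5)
Your proof is correct and follows the paper's argument: $\mu_i\ge 0$ from the dual constraint at a zero-price dummy good, $\alpha_i\le \alpha_i+\mu_i=val_p(i)\le 0.9$, and the lower bound from the dual constraint at the cheapest copy of the agent's favorite good. The only difference is cosmetic: the paper uses the crude bound $p\le 3$ to get $2\alpha_i\ge 0.1$ directly, whereas you divide by $p(G_{j^*})-1$ and use sharper case-wise price bounds. Your appeal to $p>1$ is not actually needed, since $\alpha_i\ge 0$ and $\alpha_i(p-1)\ge 0.1$ already force $p>1$.
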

\begin{proof}
    (Adapted from proof of Lemma 3.3 in \cite{CCPY21}). We have $\mu\geq 0$ because there exists a good $\ell$ with $p_\ell=0$ such as the dummy goods and the corresponding dual constraint \ref{eq:dual-constraint} implies $\mu\geq 0$. Then $\alpha_i\leq \alpha_i+\mu_i=val_p(i)\leq 0.9$. Now, since every agent $i$ has a utility 1 good, which is either $G_{e}$ or $G_{v,3}$ for some $e\in E, v\in V$, and we know their minimum prices are at most 3 by Lemma \ref{lem:gegv3-bdd}. Thus by \ref{eq:dual-constraint}, we have $\alpha_i*3+\mu_i\geq 1$ and $\alpha_i+\mu_i=val_p(i)\leq 0.9$, which yields $\alpha_i\geq 1/20$.
\end{proof}

\begin{lemma}
\label{lem:maxp-minp}
    For all groups of goods $G_j$, $\bar{p}(G_j)\leq p(G_j)+\Theta(1/m^9)$.
\end{lemma}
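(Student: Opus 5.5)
The plan is to show that, in any group $G_j$, every good whose price is noticeably above the group minimum is bought only in tiny amounts. Supply forces every good to be fully allocated, so such goods must then have low prices after all. Fix a group $G_j$ and take a good $j^*$ in it with price $\bar p(G_j)$. Let $j_0$ be a copy with price $p(G_j)$. By unit supply, $\sum_i x_{ij^*}=1$.

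If $\bar p(G_j)=0$ there is nothing to prove. Otherwise condition 6 of the $\epsilon$-$\ahzr$ definition says that only agents with $u_{ij}>0$ for the group can hold $j^*$. Since $H$ has bounded degree, these agents number $O(m^{10})$ for vertex groups and $O(m^5)$ for edge groups.

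For any such agent $i$, the dual constraint \eqref{eq:dual-constraint} applied to $j_0$ gives $\alpha_i p(G_j)+\mu_i\ge u_{ij}$. Hence
\[
\alpha_i \bar p(G_j)+\mu_i \ge u_{ij}+\alpha_i(\bar p(G_j)-p(G_j)).
\]
Lemma \ref{lem:bound-on-mu-alpha} gives $\alpha_i\ge 1/20$. So if $\bar p(G_j)-p(G_j)=\eta$, then $j^*$ is $(\eta/20)$-suboptimal for every agent who could buy it. By Lemma \ref{lem:suboptimal-alloc}, each such agent holds at most $2\epsilon/(\eta/20)=40\epsilon/\eta$ of $j^*$. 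Summing over the at most $K=O(m^{10})$ eligible agents gives
\[
1=\sum_i x_{ij^*}\le 40K\epsilon/\eta,
\]
so $\eta\le 40K\epsilon=O(m^{10}\epsilon)=O(1/m^{10})$ with $\epsilon=1/m^{20}$. This is within the claimed $\Theta(1/m^9)$ slack, and the slack also absorbs constant factors from the degree bound.

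The main obstacle is checking that Lemma \ref{lem:suboptimal-alloc} is being applied correctly. Its suboptimality notion refers to the dual of the budget-1 LP, while the agent actually spends up to $1+\epsilon$. I would re-derive it briefly, as in \cite{CCPY21}. Pair the allocation against the dual: $\sum_j x_{ij}(\alpha_i p_j+\mu_i - u_{ij})\le \alpha_i(1+\epsilon)+\mu_i-(val_p(i)-\epsilon)\le 2\epsilon$. Here we use $\alpha_i\le 1$ from Lemma \ref{lem:bound-on-mu-alpha}, and every term on the left is nonnegative. This gives the $2\epsilon/\delta$ bound.

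A second point to check is the count of eligible agents for the dummy goods. This case is trivial because $\bar p(G_D)=0$ by the earlier lemma. For each vertex group $G_{v,\ell}$, the only positive-utility agents are $A_v$ together with $A_e$ for the $O(1)$ incident edges, giving $O(m^{10})$ agents. That completes the plan.
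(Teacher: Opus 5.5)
Your proposal is correct and follows essentially the same argument as the paper: condition 6 restricts buyers of a positive-price copy to the $O(m^{10})$ positive-utility agents, $\alpha_i\ge 1/20$ makes that copy $(\eta/20)$-suboptimal for each of them, and Lemma~\ref{lem:suboptimal-alloc} then caps its total allocation below $1$ unless $\eta=O(m^{10}\epsilon)$. The differences are cosmetic: you argue directly rather than by contradiction, which gives the slightly sharper $O(1/m^{10})$. You also re-verify that Lemma~\ref{lem:suboptimal-alloc} survives the relaxed budget $1+\epsilon$ (using $\alpha_i\le 1$), which the paper simply cites.
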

\begin{proof}
    By way of contradiction, suppose there exists a good in $G_j$ with price $\bar{p}(G_j)\geq p(G_j)+\Theta(1/m^9)>0$. Then this good can only be assigned to agents with positive utility for it. We note that any good has at most $\Theta(m^{10})$ agents who have positive utility for it. We also have for any such agent $i$, $\alpha_i\bar{p}(G_j)+\mu_i\geq u_{ij}$ and $\alpha_i p(G_j)+\mu_i\geq u_{ij}$. By our assumption and the fact that $\alpha_i\geq 1/20$ by Lemma \ref{lem:bound-on-mu-alpha}, we know that this good is $1/20*\Theta(1/m^9)$-suboptimal. By Lemma \ref{lem:suboptimal-alloc}, we conclude the total allocation of this good can be at most $\Theta(m^{10})*\frac{2\epsilon}{(1/20)\Theta(1/m^9)} = 40\epsilon *\Theta(m^{19}) = \Theta(1/m)<1$ since $\epsilon=1/m^{20}$, a contradiction.
\end{proof}
Note that this bound on the maximum price of each group is weaker than Corollary 3.15 in \cite{CCPY21}, since we only assume $\epsilon$ is a constant rather than polynomially small in $n$.

For simplicity of notation, let $y_1,y_2,y_3$ denote the allocation of $G_{v,1}, G_{v,2}, G_{v,3}$ to $A_v$ respectively, $u_1,u_2,u_3$ be the corresponding utilities, and $q_1,q_2,q_3$ be the respective minimum prices of each group. Recall that $u_1=\frac{1}{2m^2-1}, u_2=\frac{m^2+1}{4m^2-2}, u_3=1$.
\begin{lemma}
\label{lem:gvl-alloc}
\begin{align}
    &y_1\geq m^{10}-O(m^3)\\
    & y_2, y_3\geq 2m^{10}-O(m^3)
\end{align}
\end{lemma}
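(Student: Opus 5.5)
We mirror the two-case argument of Lemma~\ref{lem:y1y2y3-eps-delta}, now in the restricted setting with $\epsilon = 1/m^{20}$. The new tool is condition~6, which forbids buying positive-price goods that give zero utility. It replaces the bad-set bookkeeping ($B^1_G, B^3_G$) used there, so the statement holds for every vertex $v$. Fix $v$ and let $\alpha,\mu$ be an optimal dual solution for an agent of $A_v$; all agents of $A_v$ have identical utilities, so they share the same dual LP.

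\emph{Case 1: $\mu \ge u_1/2 = \Theta(1/m^2)$.} Every good outside $G_v$ gives $A_v$ zero utility, so the dual constraint gives $\alpha p_j + \mu \ge \mu \ge 0 + \mu$. Hence every such good is $\mu$-suboptimal. By Lemma~\ref{lem:suboptimal-alloc}, each agent of $A_v$ holds at most $2\epsilon/\mu = O(m^2\epsilon)$ units outside $G_v$. Summed over the $5m^{10}$ agents, this is $O(m^{12}\epsilon)$, and we need it to be $<1$. With $\epsilon = 1/m^{20}$ it is $O(1/m^{8})$, so this holds. Therefore $y_1+y_2+y_3 \ge 5m^{10}-1$. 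Combined with the supply caps $y_1 \le m^{10}+S_v$ (where $S_v = \Theta(m^3)$ by bounded degree) and $y_2, y_3 \le 2m^{10}$, this gives $y_1 \ge m^{10} - O(m^3)$ and $y_2, y_3 \ge 2m^{10} - O(m^3)$.

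\emph{Case 2: $\mu < u_1/2$.} For each $\ell \in [3]$, the dual constraint $\alpha q_\ell + \mu \ge u_\ell \ge u_1$ gives $\alpha q_\ell \ge u_1/2$. Since $\alpha \le 0.9$ by Lemma~\ref{lem:bound-on-mu-alpha}, we get $q_\ell \ge u_1/2 > 0$. So every good in $G_v$ has strictly positive price: the minimum price is positive, and hence so is every price. Condition~6 then says these goods can only be bought by agents with positive utility for them. Those agents are $A_v$ together with the agents $A_{e,1}, A_{e,2,\ell}, A_{e,3,\ell}, A_{e,4,\ell}$ for edges $e$ incident to $v$. By bounded in- and out-degree there are $O(m^3)$ such edge agents, each holding at most one unit. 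By unit supply, all $5m^{10}+S_v$ units of $G_v$ must be sold, so $y_1+y_2+y_3 \ge 5m^{10} + S_v - O(m^3) \ge 5m^{10} - O(m^3)$. The same supply caps as in Case~1 then yield the bounds.

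The only step needing care is the threshold $\epsilon = 1/m^{20}$ in Case~1. The suboptimality gap is only $\Theta(1/m^2)$, yet it must absorb the $5m^{10}$ agents. This works because $m^{12}\epsilon = o(1)$. Everything else is bookkeeping with the group sizes and bounded degree. Note that Case~2 is actually cleaner here than in Lemma~\ref{lem:y1y2y3-eps-delta}: condition~6 makes $x^0(G_{v,\ell},A)=0$ exactly, so no bad-set exclusion is needed.
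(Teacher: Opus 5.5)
Your proposal is correct and follows the paper's proof essentially verbatim. In Case 1 you use $\mu$-suboptimality of zero-utility goods with $O(m^{12}\epsilon)<1$; in Case 2 you use $q_\ell\ge u_1/2>0$ together with condition~6, so that only $A_v$ and the $O(m^3)$ incident edge agents can hold $G_v$; in both cases you finish with the supply caps.
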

\begin{proof}
(Adapted from proof of Claim 3.17 in \cite{CCPY21}, for constant $\epsilon$). Let $\alpha,\mu$ be an optimal solution to the dual LP for agents in $A_v$.
\begin{itemize}
    \item Case 1: $\mu\geq u_1/2=\Omega(1/m^2)$. Then any good that $A_v$ has 0 utility for is $\mu$-suboptimal for them. Thus the total allocation of $A_v$ to $\overline{G_v}$ (i.e. not $G_v$ goods) is bounded by $5m^{10} * \frac{2\epsilon}{\Omega(1/m^2)} = O(m^{12})\epsilon<1$ since $\epsilon\leq 1/m^{20}$. Thus $y_1+y_2+y_3\geq 5m^{10}-1$. Since $y_1\leq |G_{v,1}|=m^{10}+S_v=m^{10}+\Theta(m^3)$ and $y_2,y_3\leq 2m^{10}$, we have $y_1\geq m^{10}-1$ and $y_2,y_3\geq 2m^{10}-O(m^3)$.
    \item Case 2: $\mu < u_1/2$. Consider the dual of LP for $A_v$ agents. For $\ell\in [3]$, since $\alpha q_\ell+\mu\geq u_\ell\geq u_1$, we have $\alpha q_\ell\geq u_1/2$. By Lemma \ref{lem:bound-on-mu-alpha} $\alpha<1$, thus $q_\ell\geq \frac{u_1}{2\alpha}\geq \frac{u_1}{2}=\Omega(1/m^2)>0$. Thus $G_{v,\ell}$ can only be allocated to positive utility agents, which are $A_v$ and $A_e$ agents where $e=(u,v)$ or $e=(v,k)$ for some $u$ and $k$. There are $\Theta(m^3)$ such $A_e$ agents since the in-degree or out-degree of $H$ is bounded by a constant. Thus the total allocation of $G_v$ to $A_v$ is $y_1+y_2+y_3\geq 5m^{10}+S_v-\Theta(m^3)\geq 5m^{10}-\Theta(m^3)$. Thus $y_1\geq m^{10}-O(m^3)$ and $y_2,y_3\geq 2m^{10}-O(m^3)$.
\end{itemize}
\end{proof}

We can also bound each agent's payment. Note that by our definition of $\ahzr$, all payment is toward positive-utility goods. For what follows, we use $p_j$ to denote the price of an individual good $j$, not the minimum price of the groups of goods that $j$ belongs to.

\begin{lemma}
\label{lem:agent-payment-bdd}
    For any agent $i$, 
    \begin{align}
        \sum_{j\in [n]}p_j x_{ij} = \sum_{j\in [n], u_{ij}>0}p_j x_{ij}= 1\pm O(\epsilon)
    \end{align}
\end{lemma}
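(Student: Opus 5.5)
The first equality is immediate from condition 6 of Definition~\ref{def:epsRHZ}: every term with $u_{ij}=0$ has $p_jx_{ij}=0$. So only the two-sided estimate $1\pm O(\epsilon)$ needs proof. The upper bound $\sum_j p_jx_{ij}\le 1+\epsilon$ is condition 4. The work is in the lower bound $\sum_j p_jx_{ij}\ge 1-O(\epsilon)$, which I prove by an exchange argument, exploiting the fact that every agent has an under-supplied favorite good with bounded price.

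\textbf{Setup.} Fix an agent $i$ and let $s=\sum_j p_jx_{ij}$. Suppose $s=1-g$ with $g>0$. By Lemma~\ref{lem:gegv3-bdd}, agent $i$'s favorite good is either $G_e$ or $G_{v,3}$, and its cheapest copy has price $P\le 5/2+O(1/m^7)\le 3$. By Lemma~\ref{lem:bound-on-mu-alpha}, $x_i$ puts at most $0.9$ on utility-1 goods, since its utility is at most $val_p(i)+\epsilon\le 0.9+\epsilon$ and nonnegative utilities bound the utility-1 mass. Hence at least roughly $0.1$ of $x_i$'s mass sits on goods of utility at most $1/2$, because every agent's second-best utility is at most $1/2$.

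\textbf{Exchange.} Build $y$ from $x_i$ by moving mass $t=\min(g/P,\,0.05)$ from those low-utility goods to the cheapest favorite good. Removing mass lowers the cost, and adding $t$ units at price $P$ raises it by at most $tP\le g$. So $y$ costs at most $1$, keeps unit demand, and is therefore feasible in the individual LP. Its utility exceeds that of $x_i$ by at least $t/2$. Condition 5 gives
\[
u_i\cdot x_i\ge val_p(i)-\epsilon\ge u_i\cdot y-\epsilon,
\]
so $t/2\le\epsilon$. Since $\epsilon$ is tiny, $t=g/P$, and therefore $g\le 2P\epsilon\le 6\epsilon$. The feasibility check is where the bounded prices from Lemma~\ref{lem:gegv3-bdd} are essential: they make the conversion from wasted budget to utility loss a constant factor, so $g=O(\epsilon)$ rather than only $O(\epsilon f(m))$ as in Section~\ref{sec: conditional hardness without restriction}.

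\textbf{Main obstacle.} The delicate point is confirming that the favorite good is genuinely available in the LP relaxation. Since $val_p(i)$ is defined with budget $1$ and places no restriction on which copy is bought, buying the cheapest copy at price $P$ is legitimate. I also need the low-utility mass to be at least $t$, which follows from the $0.1$ bound above. For agents in $A_{e,1}$, whose second good has utility exactly $1/2$, the $1/2$ gap still holds. No case analysis beyond this is needed.
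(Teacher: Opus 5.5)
Your proof is correct, but it takes a different route from the paper's.

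\textbf{The paper's argument.} It is a one-line duality computation. Multiply the dual constraints $\alpha_i p_j+\mu_i\ge u_{ij}$ by $x_{ij}$, sum over $j$, and compare with condition 5. This gives $\alpha_i\sum_j p_jx_{ij}+\mu_i\ge \alpha_i+\mu_i-\epsilon$, hence $\sum_j p_jx_{ij}\ge 1-\epsilon/\alpha_i\ge 1-20\epsilon$, using $\alpha_i\ge 1/20$ from Lemma~\ref{lem:bound-on-mu-alpha}.

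\textbf{How your argument relates.} Your exchange argument is the primal counterpart, in the same style as the paper's wasted-budget bound on $g_i$ in the unrestricted setting. The bound $\alpha_i\ge 1/20$ packages exactly the fact you use by hand: the favorite good costs at most $3$ while $val_p(i)\le 0.9$, so unspent budget converts into utility at a constant rate.

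\textbf{What each route buys.} Your version gives a slightly better constant ($6\epsilon$ instead of $20\epsilon$) and avoids duality in the main step. In exchange, it needs structural facts about $x_i$ that the dual argument never uses. You need that every agent's second-best utility is at most $1/2$, and that $x_i$ has $\Omega(1)$ mass off the favorite good.

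\textbf{One step to justify.} For the $\Omega(1)$ off-favorite mass you assert $u_i\cdot x_i\le val_p(i)+\epsilon$ without proof. It is true but needs a line. One option is the same dual constraints together with condition 4: $\sum_j u_{ij}x_{ij}\le \alpha_i\sum_j p_jx_{ij}+\mu_i\le val_p(i)+\alpha_i\epsilon$. Another is to mix any bundle of cost $1+\epsilon$ with an $\epsilon/(1+\epsilon)$ fraction of a zero-price dummy good.

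The paper's route is shorter and does not care where $x_i$'s mass sits. Yours is more hands-on and makes explicit why bounded favorite-good prices are what upgrade the $O(\epsilon f(m))$ waste bound of the unrestricted section to $O(\epsilon)$ here.
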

\begin{proof}
    By \ref{eq:dual-constraint}, we have 
    \begin{align*}
        &\alpha_ip_j+\mu_i\geq u_{ij}\\
        \implies & \sum_{j\in [n]}\alpha_ip_jx_{ij} +\sum_{j\in[n]}\mu_i x_{ij}\geq \sum_{j\in [n]}u_{ij}x_{ij} \geq val_p(i)-\epsilon\\
        \implies &\alpha_i\sum_{j\in [n]}p_jx_{ij}+\mu_i\geq \alpha_i+\mu_i-\epsilon\\
       \implies & \sum_{j\in [n]}p_j x_{ij}\geq 1-\frac{\epsilon}{\alpha_i}\\
    \end{align*}
    Then using $\alpha_i\geq 1/20$ in Lemma \ref{lem:bound-on-mu-alpha}, we get 
    \begin{align*}
        \sum_{j\in [n]}p_j x_{ij}\geq 1-20\epsilon.
    \end{align*}
    Budget constraint gives the upper bound
       $\sum_{j\in [n]}p_j x_{ij}\leq 1+\epsilon$.
\end{proof}

Now we are ready to prove the following lemma.
\begin{lemma}
\label{lem:gvl-prices}
    For all $v\in V$, 
    \begin{align}
        & 0\leq p(G_{v,1})\leq \frac{1}{m^2} + O\left(\frac{1}{m^6}\right)\\
        & p(G_{v,2})=\frac{1+p(G_{v,1})}{2}\pm O\left(\frac{1}{m^7}\right)\\
        & p(G_{v,3})=2-p(G_{v,1})\pm O\left(\frac{1}{m^7}\right)
    \end{align}
    and for any $\ell\in [3]$, $G_{v,\ell}$ is at most $\delta$-suboptimal for any agent in $A_v$ for $\delta=20\epsilon$.
\end{lemma}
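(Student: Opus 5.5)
We follow the proof of Lemma~\ref{lem:gvl-prices-eps-delta} (itself adapted from Lemma 3.10 of \cite{CCPY21}). In the restricted setting we no longer have to exclude bad vertices: Lemma~\ref{lem:bound-on-mu-alpha} gives $\alpha\ge 1/20$ for every agent, Lemma~\ref{lem:gvl-alloc} gives the allocation bounds for every $v$, and Lemma~\ref{lem:agent-payment-bdd} controls every agent's payment. Fix $v$, and let $(\alpha,\mu)$ be an optimal dual solution for the (identical) agents of $A_v$.

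\textbf{Step 1 (near-tightness of the dual constraints).} We first show that no $G_{v,\ell}$ is $20\epsilon$-suboptimal for $A_v$. Suppose $G_{v,\ell}$ were $\delta$-suboptimal. By Lemma~\ref{lem:suboptimal-alloc}, each agent of $A_v$ receives at most $2\epsilon/\delta$ of it, so $y_\ell\le 5m^{10}\cdot 2\epsilon/\delta$. Since Lemma~\ref{lem:gvl-alloc} gives $y_\ell\ge m^{10}-O(m^3)$, this forces $\delta\le 10\epsilon(1+O(m^{-7}))<20\epsilon$. Hence, with $\delta=20\epsilon$,
\[
u_\ell\le \alpha q_\ell+\mu\le u_\ell+\delta \quad\text{for all } \ell\in[3].
\]
Because $u_2=(3u_1+u_3)/4$, combining the three inequalities and dividing by $\alpha\ge 1/20$ yields $q_2=(3q_1+q_3)/4\pm O(\epsilon)$. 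Here $O(\epsilon)=O(m^{-20})$, which is even better than in the unrestricted section.

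\textbf{Step 2 (budget identity).} Next we aggregate the payments of $A_v$. By Lemma~\ref{lem:agent-payment-bdd}, the total is $5m^{10}(1\pm O(\epsilon))$, and condition 6 means all of it goes to positive-utility goods. For each agent, the only such goods are $G_{v,1},G_{v,2},G_{v,3}$. Lemma~\ref{lem:maxp-minp} says each good costs at most $q_\ell+O(m^{-9})$, so
\[
q_1y_1+q_2y_2+q_3y_3 = 5m^{10}\pm O(m)\pm O(m^{10}\epsilon).
\]
We substitute $q_2$ from Step 1 and $y_1=m^{10}\pm O(m^3)$, $y_2,y_3=2m^{10}\pm O(m^3)$ from Lemma~\ref{lem:gvl-alloc} (the upper bounds come from the group sizes). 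We also need a crude constant bound $q_\ell\le 3$, which follows from Lemma~\ref{lem:gegv3-bdd} and the fact that $q_2$ is sandwiched between $q_1$ and $q_3$ up to the error above. This gives
\[
\tfrac{5}{2}(q_1+q_3)m^{10} = 5m^{10}\pm O(m^3),
\]
so $q_3=2-q_1\pm O(m^{-7})$, and therefore $q_2=(1+q_1)/2\pm O(m^{-7})$.

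\textbf{Step 3 (the bound on $q_1$).} This step copies the earlier argument exactly.
\begin{itemize}
\item First, $q_1<q_3$. Otherwise $\alpha q_1+\mu\ge 1$, which would make $G_{v,1}$ roughly $1$-suboptimal and contradict Step 1.
\item Cross-multiplying the two sandwich inequalities for $\ell=1,3$ gives
\[
\mu\le \frac{u_1q_3-u_3q_1+O(\delta)}{q_3-q_1}.
\]
\item Suppose $q_1\ge m^{-2}+m^{-6}$. Then $q_3\le 2-m^{-2}$, so the numerator is at most $-m^{-6}+O(m^{-20})<0$. This contradicts $\mu\ge 0$ from Lemma~\ref{lem:bound-on-mu-alpha}.
\end{itemize}
Finally, $q_1\ge 0$ holds trivially.

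The main obstacle is Step 2. We must check that the maximum-minus-minimum price gap of $O(m^{-9})$ from Lemma~\ref{lem:maxp-minp}, multiplied by the roughly $m^{10}$ units purchased, contributes only $O(m)$. That is well below the $O(m^3)$ slack, so after dividing by $m^{10}$ the final error remains $O(m^{-7})$.
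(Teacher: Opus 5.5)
Your proof is correct and takes essentially the same route as the paper's. You show that no $G_{v,\ell}$ is $20\epsilon$-suboptimal using the allocation bounds, which yields $q_2=(3q_1+q_3)/4\pm O(\epsilon)$. You then use $A_v$'s aggregate payment identity, with the $O(m^{-9})$ price spread from Lemma~\ref{lem:maxp-minp}, to get $q_3=2-q_1\pm O(m^{-7})$. Finally, you cap $q_1$ by cross-multiplying the two dual sandwiches and invoking $\mu\ge 0$.

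One small point needs tightening. The relation between $q_2$ and $q_1,q_3$ does not by itself bound $q_1$, so your stated reason for the crude constant bound is incomplete. Either $q_1<q_3$ (which your Step~3 derives from Step~1 alone) or the trivial budget bound $q_1y_1\le 5m^{10}(1+\epsilon)$ supplies the needed constant bound before Step~2.
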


\begin{proof} (Adapted from proof of Lemma 3.10 in \cite{CCPY21}).
    First we note that if for any $\ell\in [3]$, $G_{v,\ell}$ is $\delta$-suboptimal for $A_v$, then we can assign at most $2\epsilon/\delta$ per agent in $A_v$ by Lemma \ref{lem:suboptimal-alloc}. Since we have to assign $y_\ell\geq m^{10}-O(m^3)$ in total by Lemma \ref{lem:gvl-alloc}, we need
    \begin{align*}
    &5m^{10}2\epsilon/\delta\geq m^{10}-O(m^3)\implies \delta\leq 10\epsilon+O(\epsilon/m^7)<20\epsilon.
    \end{align*}
    for $m$ large enough.
    Thus $G_{v,\ell}$ cannot be $20\epsilon$-suboptimal for $A_v$ for any $\ell\in [3]$. Let $\delta=20\epsilon$. This implies
    \begin{equation}
    \label{ineq: constraint-bdd}
        u_{\ell}\leq \alpha q_\ell +\mu \leq u_\ell+\delta, \forall \ell\in [3].
    \end{equation}
    Since utilities are designed so that $u_2=(3u_1+u_3)/4$, it follows that 
    \begin{align*}
        &\alpha\left(\frac{3q_1+q_3}{4}\right)+\mu-\delta\leq u_2\leq \alpha q_2+\mu\leq u_2+\delta\leq \alpha\left(\frac{3q_1+q_3}{4}\right)+\mu+\delta\\
        \implies & q_2=\left(\frac{3q_1+q_3}{4}\right) \pm O\left(\frac{\delta}{\alpha}\right)\\
        \implies  & q_2=\left(\frac{3q_1+q_3}{4}\right) \pm O(\epsilon)
    \end{align*}
    since $\alpha\geq 1/20$ by Lemma \ref{lem:bound-on-mu-alpha}.
    Now, we can use the total payment of agents in $A_v$ to bound $q_1,q_2, q_3$. We know that $q_\ell$ is the minimum price of a good in $G_{v,\ell}$, and that by Lemma \ref{lem:maxp-minp}, any good $j$ in $G_{v,\ell}$ has price $p_j=q_\ell +O(1/m^9)$. On the other hand, the payment of each agent in $A_v$ is $1\pm O(\epsilon)$ by Lemma \ref{lem:agent-payment-bdd}. Thus we have
    \begin{align*}
        &(q_1+O(1/m^9))y_1+(q_2+O(1/m^9))y_2 + (q_3+O(1/m^9))y_3 = 5m^{10}(1\pm O(\epsilon))\\
        \implies & (q_1+O(1/m^9))y_1+\left(\left(\frac{3q_1+q_3}{4}\right) \pm O(\epsilon)+O(1/m^9)\right)y_2 + (q_3+O(1/m^9))y_3 = 5m^{10}(1\pm O(\epsilon)).
    \end{align*}
    By Lemma \ref{lem:gvl-alloc} and $\epsilon=1/m^{20}$, we conclude that
    \begin{align}
        &q_2=\frac{1+q_1}{2}\pm O\left(\frac{1}{m^7}\right)\\
        & q_3=2-q_1\pm O\left(\frac{1}{m^7}\right)\label{eqn:q1q3}.
    \end{align}
    Now it remains to bound $q_1$. We first observe that $q_1<q_3$ since if not, then we have 
    \begin{align*}
        \alpha q_1+\mu\geq  \alpha q_3+\mu\geq u_3=1.
    \end{align*}
    Since $u_1 = \Theta(1/m^2)$, $G_{v,1}$ is $(1-\Theta(1/m^2))$-suboptimal for $A_v$. Thus by Lemma \ref{lem:suboptimal-alloc}, the total allocation of $G_{v,1}$ to $A_v$ is $y_1\leq 5m^{10}2\epsilon/(1-\Theta(1/m^2))<1$, a contradiction to Lemma \ref{lem:gvl-alloc}. Thus $q_1< q_3$. By \ref{ineq: constraint-bdd}, since $q_1,q_3\geq 0$, we have 
    \begin{align*}
    &q_3*u_{1}\leq q_3(\alpha q_1 +\mu) \leq q_3(u_1+\delta) \\
    &q_1*u_{3}\leq q_1(\alpha q_3 +\mu) \leq q_1(u_3+\delta)\\
        &\implies\mu\leq \frac{u_1q_3-u_3q_1+q_3\delta}{q_3-q_1} =\frac{u_1q_3-u_3q_1+O(\delta)}{q_3-q_1}.
    \end{align*}
    since $q_3\leq 3$ by \ref{eqn:q1q3}.
    Note that the denominator is positive.
    Suppose $q_1\geq 1/m^2+1/m^6$, then $q_3\leq 2-1/m^2$ by \ref{eqn:q1q3}. Then $u_1q_3-u_3q_1\leq -1/m^6$. Since $\delta=20\epsilon=O(1/m^{20})$, the numerator is negative. Thus $\mu<0$, a contradiction to $\mu\geq 0$ by Lemma \ref{lem:bound-on-mu-alpha}. Thus
    \begin{equation*}
        q_1<1/m^2+1/m^6.
    \end{equation*}
\end{proof}

Let $x^{+}(G_j, A_k)$ denote the total amount of goods in group $G_j$ that is assigned to agents in group $A_k$ such that the assignment gives positive utility to the agents. The following lemma proves a stricter bound than Lemma 3.11 in \cite{CCPY21} because of our restriction on the equilibrium. 
\begin{lemma}
\label{lem:x+vneqv}
For any $v\in V$,
    \begin{enumerate}
        \item If $x^{+}(G_v, \overline{A_v})\geq S_v+1$, then $p(G_{v,1})=1/m^2\pm O(1/m^9)$
        \item If $x^{+}(G_v, \overline{A_v})\leq S_v-1$, then $p(G_{v,1})=0$.
    \end{enumerate}
\end{lemma}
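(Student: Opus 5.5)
The plan follows the proof of Lemma~\ref{lem:x+vneqv-eps-delta}, with two changes: we use the sharper bounds from the restricted setting, namely $\alpha_i\ge 1/20$ (Lemma~\ref{lem:bound-on-mu-alpha}) and Lemma~\ref{lem:gvl-prices} holding for \emph{every} $v$, and we use condition~6 of Definition~\ref{def:epsRHZ} in Case~2.

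\emph{Case 1: $x^{+}(G_v,\overline{A_v})\ge S_v+1$.}
\begin{itemize}
\item Since $|G_v|=5m^{10}+S_v$, at most $5m^{10}-1$ units of $G_v$ go to $A_v$.
\item By unit demand, some agent $i\in A_v$ therefore receives at least $1/(5m^{10})$ units of goods outside $G_v$, which are zero-utility for $i$.
\item The dual constraint gives $\alpha_ip_j+\mu_i\ge\mu_i$ on such goods, so they are $\mu_i$-suboptimal. Lemma~\ref{lem:suboptimal-alloc} then yields $1/(5m^{10})\le 2\epsilon/\mu_i$, i.e.\ $\mu_i=O(m^{10}\epsilon)=O(1/m^{10})$.
\item Apply \eqref{ineq: constraint-bdd} for $\ell=1,3$ and add: $\alpha(q_1+q_3)+2\mu=u_1+u_3\pm O(\delta)$. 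Here $q_1+q_3=2\pm O(1/m^7)$ by \eqref{eqn:q1q3}, which gives $\alpha=\frac{u_1+u_3}{2}(1\pm O(1/m^7))$.
\item Then $q_1=(u_1-\mu\pm\delta)/\alpha=u_1/\alpha\pm O(1/m^{10})$.
\item Since $u_1=1/(2m^2-1)$ and $u_3=1$, we get $2u_1/(1+u_1)=1/m^2$ exactly. Together with the relative error $O(1/m^7)$ on an $O(1/m^2)$ quantity, this gives $q_1=1/m^2\pm O(1/m^9)$.
\item The errors must stay below $1/m^9$. This holds because $\epsilon=1/m^{20}$ makes $m^{10}\epsilon=1/m^{10}$.
\end{itemize}

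\emph{Case 2: $x^{+}(G_v,\overline{A_v})\le S_v-1$.}
\begin{itemize}
\item All $5m^{10}+S_v$ goods of $G_v$ are sold, and $x(G_v,A_v)\le|A_v|=5m^{10}$. Hence $x^{0}(G_v,\overline{A_v})\ge 1$, so some good $j\in G_{v,\ell}$ is allocated to an agent with zero utility for it.
\item Condition~6 forces $p_j=0$.
\item For $\ell=2,3$ this is impossible: by Lemma~\ref{lem:gvl-prices}, $q_2\approx 1/2$ and $q_3\approx 2$, so every good in these groups has positive price. Thus $j\in G_{v,1}$.
\item Consequently $p(G_{v,1})\le p_j=0$, and since prices are nonnegative, $p(G_{v,1})=0$.
\end{itemize}

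The main obstacle is the error bookkeeping in Case~1 (Case~2 is immediate from condition~6). We must confirm that the $O(\delta/\alpha)$ and $O(\mu/\alpha)$ terms, and the $O(1/m^7)$ relative error inherited from \eqref{eqn:q1q3}, all combine to $O(1/m^9)$ in $q_1$. I expect this to go through directly because $\alpha$ is bounded below by a constant.
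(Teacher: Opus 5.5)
Your proposal is correct and follows the paper's proof essentially step for step. In Case 1 you bound $\mu=O(m^{10}\epsilon)$ via the $A_v$ agent forced to buy outside $G_v$, then solve the near-tight dual constraints for $\alpha$ and $q_1$. In Case 2 you use condition~6 of Definition~\ref{def:epsRHZ} together with the positivity of $q_2,q_3$ from Lemma~\ref{lem:gvl-prices} to place the zero-utility purchase in $G_{v,1}$ at price $0$.
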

\begin{proof} (Adapted from proof of Lemma 3.11 in \cite{CCPY21}).
    By Lemma \ref{lem:gvl-prices}, we know that both $G_{v,2}$ and $G_{v,3}$ have positive prices, so no agent with 0 utility for them will be allocated these goods. 
    \begin{itemize}
        \item Case 1: $x^{+}(G_v, \overline{A_v})\geq S_v+1$.  This means the assignment of $G_{v}$ to agents with positive utility outside of $A_v$ is at least $S_v+1$. Since there are $5m^{10}+S_v$ goods in $G_v$, we must allocate at most $5m^{10}-1$ goods to $A_v$, so there exists an agent in $A_v$ who is allocated at least $1/5m^{10}$ units of goods outside of $G_v$. Let $\alpha, \mu$ be an optimal solution to the dual of LP of this agent. Since this agent has 0 utility for these goods, they are $\mu$-suboptimal, which means $1/5m^{10}\leq 2\epsilon/\mu$, so $\mu=O(m^{10}\epsilon)$. By Lemma \ref{lem:gvl-prices}, \ref{ineq: constraint-bdd} and \ref{eqn:q1q3}, we have 
        \begin{align}
            &\alpha(q_1+q_3)+2\mu =u_1+u_3+ O(\delta)\nonumber\\
            \implies & \alpha = \frac{u_1+u_3+O(\epsilon)-2*O(m^{10}\epsilon)}{q_1+q_3}\nonumber\\
             \implies & \alpha = \frac{u_1+u_3}{2}(1\pm O(1/m^7))\label{eq:case1}
        \end{align}
        since $\epsilon = 1/m^{20}$. Again by \ref{ineq: constraint-bdd}, we have 
        \begin{align*}
            q_1 &=\frac{u_1-\mu+O(\delta)}{\alpha}\\
            &=\frac{u_1}{\alpha}+\frac{-O(m^{10}\epsilon)+O(\epsilon)}{\alpha}\\
            &=\frac{u_1}{\alpha}\pm O(1/m^{10})\\
            &=\frac{u_1}{\frac{u_1+u_3}{2}(1\pm O(1/m^7))}\pm O(1/m^{10})\text{ by \ref{eq:case1}}\\
             &=1/m^2\pm O(1/m^9).
        \end{align*}
        \item Case 2: $x^{+}(G_v, \overline{A_v})\leq S_v-1$. This means the assignment of $G_{v}$ to agents with positive utility outside of $A_v$ is at most $S_v-1$. Since there are $5m^{10}+S_v$ goods in $G_v$, we must assign at least $5m^{10}+1$ goods to $A_v$ and agent with 0 utility outside of $A_v$. Since there are at most $5m^{10}$ agents in $A_v$, there is at least one agent with 0 utility for goods in $G_v$ who gets assigned some good in $G_v$. This good cannot be $G_{v,2},G_{v,3}$ since they have positive prices, so this agent must be assigned some good in $G_{v,1}$ with price 0, so $p(G_{v,1})=0$.
    \end{itemize}
\end{proof}

\begin{lemma}
\label{lem:x+ueve}
\begin{align*}
    &x^{+}(G_u, A_e)=24m^3+12m\pm O(1)\\
    &x^{+}(G_v, A_e)=24m^3+15m-6m^3p(G_{u,1})\pm O(1)\\
\end{align*}
\end{lemma}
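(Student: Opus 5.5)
We prove \Cref{lem:x+ueve} by re-running the argument of Lemma 3.12 in \cite{CCPY21} for a fixed edge $e=(u,v)$, using the constant-$\epsilon$ bounds established above. The outline is the same: we determine the optimal dual $(\alpha,\mu)$ of each edge agent, read off which goods are non-suboptimal for it, and then use each agent's payment (\Cref{lem:agent-payment-bdd}) to pin down how much of $G_u$ and $G_v$ it buys.

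\textbf{The agents $A_{e,1}$ and $G_u$.} The agents in $A_{e,1}$ value $G_{u,3}$ at $1$ and $G_{v,1}$ at $1/2$. By \Cref{lem:gvl-prices}, $q_3(u)=2-q_1(u)\pm O(1/m^7)\approx 2$, while $q_1(v)\le 1/m^2+O(1/m^6)$. Together with $\mu\ge 0$ and $1/20\le\alpha\le 0.9$ (\Cref{lem:bound-on-mu-alpha}), the dual optimum makes both $G_{u,3}$ and $G_{v,1}$ tight. The same bounds make every zero-utility good $\Omega(1)$-suboptimal, so by \Cref{lem:suboptimal-alloc} each such agent spends all but $O(\epsilon)$ of its unit on these two groups.

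Solving the two equations
\[
y_3+y_1=1,\qquad q_3(u)\,y_3+q_1(v)\,y_1=1\pm O(\epsilon)
\]
gives $y_3=1/2\pm O(1/m^2)$. Summed over the $48m^3$ agents, this yields $x^+(G_{u,3},A_{e,1})=24m^3\pm O(m)$. The $O(m)$ slack is too coarse for the statement. To sharpen it, we will keep the exact $q_1(v)$ term and let the $q_1(u)$ dependence cancel through the designed constants, exactly as in \cite{CCPY21}.

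\textbf{The agents $A_{e,2,\ell},A_{e,3,\ell},A_{e,4,\ell}$ and their $\ell$-thresholds.} These agents value $G_e$ at $1$, which costs $2\pm O(1/m^4)$ by \Cref{cor:ge-price}. They also value a vertex good at $\ell/(2m^3)$, and $A_{e,4,\ell}$ additionally values $G_{u,2}$. Their duals are determined by comparing these small utilities with prices. For a single-cheap-good agent, the agent splits roughly half on $G_e$ and half on the cheap good. Whether the cheap good is used depends on whether the ratio $u_{ij}/p_j$ makes it tight, which produces a threshold in $\ell$ at about $\ell\approx m\cdot m^2 q_1(\cdot)$. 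Counting the $\ell$'s on each side of the threshold gives the linear term $-6m^3p(G_{u,1})$ in $x^+(G_v,A_e)$ and the constant $12m$, $15m$ contributions. Each threshold is uncertain by $O(1)$ values of $\ell$, and each agent contributes $O(1)$, which produces the $\pm O(1)$ error.

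\textbf{Main obstacle.} The main obstacle is controlling the errors. In \cite{CCPY21}, $\epsilon$ is inverse-polynomial. Here the relevant errors come from three sources:
\begin{itemize}
\item $\delta$-suboptimal allocation, at most $2\epsilon/\delta$ per agent;
\item spread within a group, $O(1/m^9)$ by \Cref{lem:maxp-minp};
\item the price bounds, $O(1/m^7)$.
\end{itemize}
Each must be shown to shift each agent's allocation by $o(1/m^3)$, so that the total over $O(m^3)$ agents stays $O(1)$. The most delicate point is the agents near the $\ell$-threshold, where the gap between neighbouring $\ell$ is $1/(2m^3)$ in utility. We will argue that with $\epsilon=1/m^{20}$, an agent whose $\ell$ lies at least one step from the threshold has its non-chosen good $\Omega(1/m^3)$-suboptimal. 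Its allocation there is then $O(m^3\epsilon)=O(1/m^{17})$, and only $O(1)$ boundary agents are left ambiguous. Finally, the restriction (condition 6) guarantees that positive-price vertex goods go only to positive-utility agents. This is what makes $x^+$ account for all of $G_{v,2},G_{v,3}$, and it mirrors the bookkeeping in \cite{CCPY21}.
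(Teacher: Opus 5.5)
Your plan is the paper's own proof: re-run Lemma~3.12 of \cite{CCPY21} agent class by agent class, substitute Lemmas~\ref{lem:suboptimal-alloc}, \ref{lem:agent-payment-bdd}, \ref{lem:maxp-minp}, \ref{lem:gvl-prices} and Corollary~\ref{cor:ge-price} for the original bounds, and check that the larger constant-$\epsilon$ errors still sum to $O(1)$. When you write out the counting, note that the $A_{e,4,\ell}$ threshold compares $G_{v,1}$ with $G_{u,2}$ (not with a free good) and sits near $\ell\approx m^3q_1(v)+\frac{2m}{3}\bigl(1-m^2q_1(u)\bigr)$; including it, every linear $q_1(\cdot)$-term from the four classes cancels except a net $-6m^3p(G_{u,1})$ in $x^+(G_v,A_e)$, so in $x^+(G_v,A_e)$ it is the $q_1(v)$-dependence that cancels and the $q_1(u)$-dependence that survives, while in $x^+(G_u,A_e)$ both cancel.
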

\begin{proof}
Like Lemma \ref{lem:x+ueve-eps-delta}, the proof of Lemma \ref{lem:x+ueve} is nearly identical to the proof of Lemma 3.12 in \cite{CCPY21}.
The only modifications that need to be made are to use our bounds on the allocations and prices when they use theirs.
We again provide a mapping from their results to ours below so that this substitution is easier.
\begin{center}
\begin{tabular}{ c|c } 
    \cite{CCPY21} & Here \\ 
    \hline
    Lemma 3.5 & Lemma \ref{lem:suboptimal-alloc} \\
    Corollaries 3.6 + 3.15 & Lemmas \ref{lem:suboptimal-alloc} + \ref{lem:agent-payment-bdd} + \ref{lem:maxp-minp} \\ 
    Lemma 3.10 & Lemma \ref{lem:gvl-prices} \\
    Lemma 3.16 & Lemma \ref{lem:agent-payment-bdd}
\end{tabular}
\end{center}
As before, if we choose $\varepsilon = O(1/m^{20})$, then the fact that $O(m^6\varepsilon) = O(1/m^{14})$ for us, while $O(m^6\varepsilon) = O(1/\mathrm{poly}(n))$ for them, does not quantitatively affect the proof.

\end{proof}
Then the proof of Theorem \ref{thm:ppad-constant}
follows from Lemma \ref{lem:gvl-prices}, Lemma \ref{lem:x+vneqv}, and Lemma \ref{lem:x+ueve} as in \cite{CCPY21}. We add the proof here for completion.
\begin{proof}(Adapted from proof of Theorem 3.1 in \cite{CCPY21}. The only change is we apply a tighter bound from Lemma \ref{lem:x+vneqv} to get $p(G_{v,1})=0$ for Case 1 instead.)

For any instance of the threshold game $H=(V,E)$ where the in-degree and out-degree are bounded by some constant and the threshold is $1/2$, we can set $m=\lceil C/\kappa\rceil$ and create the HZ market $M_H$ as described before. Let $(x,p)$ be an $\epsilon$-$\ahzr$ of $M_H$ where $\epsilon=\frac{1}{m^{20}} = \frac{1}{{\lceil C/\kappa\rceil}^{20}}$. Let $x_v=\min (1,m^2p(G_{v,1}))\in [0,1]$. Then we claim that $x=(x_v)_{v\in V}$ is a $\kappa$-approximate equilibrium of $H$.
\begin{itemize}
\item Case 1: $\sum_{u\in N_v}x_u>0.5+\kappa$. By definition of $x_u$, we have $x_u\leq m^2 p(G_{u,1})$. This implies $ p(G_{u,1})\geq x_u/m^2\implies \sum_{u\in N_v}p(G_{u,1})\geq \sum_{u\in N_v}x_u/m^2>(1/m^2)(0.5+\kappa)$. Then by Lemma \ref{lem:x+ueve}, we have
\begin{align*}
    x^{+}(G_v, \overline{A_v})&\leq \outdeg(v)\cdot (24m^3+12m+ O(1)) +\sum_{u\in N_v} (24m^3+15m-6m^3p(G_{u,1})+ O(1))\\
    &= S_v+3m-6m^3\sum_{u\in N_v}p(G_{u,1})+O(1)\\
    &\leq S_v+3m-6m^3(1/m^2)(0.5+\kappa)+O(1)\\
    &= S_v-6m\kappa + O(1)
    <S_v-1.
\end{align*}
since $m\kappa \geq C$ for some large constant $C$ and we can choose the $C$ to be larger than the $O(1)$ term.
Then by Lemma \ref{lem:x+vneqv}, we have $p(G_{v,1})=0$, which implies $x_{v}=0$, as desired.
\item Case 2: $\sum_{u\in N_v}x_u<0.5-\kappa$. By Lemma \ref{lem:gvl-prices}, we have $p(G_{u,1})\leq 1/m^2+O(1/m^6)$.
 Thus 
\begin{align*}
&m^2p(G_{u,1})\leq 1+O(1/m^4)\\
    &\implies x_u = \min(1,m^2p(G_{u,1})) \geq m^2p(G_{u,1})-O(1/m^4)\\
    &\implies p(G_{u,1})\leq x_u/m^2+O(1/m^6)\\
    &\implies \sum_{u\in N_v}p(G_{u,1})\leq \sum_{u\in N_v}x_u/m^2+O(1/m^6)<(1/m^2)(0.5-\kappa+O(1/m^4))
\end{align*}
by bounded in-degree.
Then by Lemma \ref{lem:x+ueve}, we have
\begin{align*}
    x^{+}(G_v, \overline{A_v})&\geq \outdeg(v)\cdot (24m^3+12m- O(1)) +\sum_{u\in N_v}(24m^3+15m-6m^3p(G_{u,1})-O(1))\\
    &= S_v+3m-6m^3\sum_{u\in N_v}p(G_{u,1})-O(1)\\
    &\geq S_v+3m-6m^3(1/m^2)(0.5-\kappa+O(1/m^4))-O(1)\\
    &= S_v+6m\kappa -O(1/m^3)- O(1)
    \geq S_v+C> S_v+1.
\end{align*}
Thus by Lemma \ref{lem:x+vneqv}, $p(G_{v,1})\geq 1/m^2-O(1/m^9)$, which means $x_v\geq 1-O(1/m^7)\geq 1-\kappa$, as desired.
\end{itemize}
\end{proof}

\end{document}